\newcommand{\eps}{\epsilon}
\newcommand{\mcard}{q}%
\newcommand{\nmat}{q}%
\newcommand{\mindex}{\ell} %
\newcommand{\pmatchoid}{\matchoid^p} %
\newcommand{\defpmatchoid}{\pmatchoid = (\groundset,\independents)} %
\newcommand{\matroidI}{\matroid_{\mindex}} %
\newcommand{\groundsetI}{\groundset_{\mindex}} %
\newcommand{\independentsI}{\independents_{\mindex}} %
\newcommand{\defmatroidI}{\matroidI = (\groundsetI,\independentsI)} %
\newcommand{\offlineratio}{\gamma_p} %
\newcommand{\optset}{T^*} %
\begin{document}

\title{Streaming Algorithms for \\Submodular Function Maximization} %
\author{ Chandra Chekuri\thanks{ %
    % Dept. of Computer Science, Univ.\ of Illinois, Urbana, IL 61801,
    % USA.
    Work on this paper supported in part by NSF grant
    CCF-1319376.  % E-mail: {\tt chekuri@illinois.edu}.
  } %
  \and %
  Shalmoli Gupta\thanks{%
    % Dept. of Computer Science, Univ.\ of Illinois, Urbana, IL 61801,
    % USA.
    Work on this paper supported in part by NSF grant
    CCF-1319376.  % E-mail: {\tt sgupta49@illinois.edu}.
  }\and %
  Kent Quanrud\thanks{% Dept. of Computer Science, Univ.\ of Illinois,
    % Urbana, IL 61801, USA.
    Work on this paper supported in part by NSF grants CCF-1319376,
    CCF-1421231, and CCF-1217462.  % E-mail: {\tt
    % quanrud2@illinois.edu}.
    % \note{Fill in Sariel's grant number.}
  } %
  \and
  \\
  Dept.\ of Computer Science, Univ.\ of Illinois, Urbana IL 61801, USA \\
  \email{\{chekuri,sgupta49,quanrud2\}@illinois.edu} } %
% \institute{ %
%   % Dept. of Computer Science, Univ.\ of Illinois, Urbana IL 61801,
%   % USA
%   % \email{\{chekuri,sgupta49,quanrud2\}@illinois.edu}
% }

\maketitle

\begin{abstract}
  We consider the problem of maximizing a nonnegative submodular set
  function $f:2^{\groundset} \rightarrow \reals^+$ subject to a
  $p$-matchoid constraint in the single-pass streaming
  setting. Previous work in this context has considered streaming
  algorithms for modular functions and monotone submodular
  functions. The main result is for submodular functions that are {\em
    non-monotone}. We describe deterministic and randomized algorithms
  that obtain a $\Omega(\frac{1}{p})$-approximation using $O(k \log
  k)$-space, where $k$ is an upper bound on the cardinality of the
  desired set. The model assumes value oracle access to $f$ and
  membership oracles for the matroids defining the $p$-matchoid
  constraint.
\end{abstract}

\section{Introduction}
% We consider submodular function maximization in the streaming
% setting and describe approximation algorithms. We first describe
% some relevant background.

Let $f: \powerset{\groundset} \to \reals$ be a set function defined
over a ground set $\groundset$. $f$ is \newterm{submodular} if it
exhibits decreasing marginal values in the following sense: if $e \in
\groundset$ is any element, and $A,B \subseteq \groundset$ with $A
\subseteq B$ are any two nested sets, then $f(A + e) - f(A) \geq f(B +
e) - f(B)$. The gap $f(A+e) - f(A)$ is called the \newterm{marginal
  value} of $e$ with respect to $f$ and $A$, and denoted $f_A(e)$. An
equivalent characterization for submodular functions is that for any
two sets $A,B \subseteq \groundset$, $f(A \cup B) + f(A \cap B) \leq
f(A) + f(B)$. % This basic property holds for a variety of functions
% that arise in theory and practice.

Submodular functions play a fundamental role in classical
combinatorial optimization where rank functions of matroids, edge
cuts, coverage, and others are instances of submodular functions (see
\cite{Schrijver_book,Fujishige_book}). More recently, there is a large
interest in constrained submodular function optimization driven both
by theoretical progress and a variety of applications in computer
science. The needs of the applications, and in particular the sheer
bulk of large data sets, have brought into focus the development of
fast algorithms for submodular optimization. Recent work on the
theoretical side include the development of faster worst-case
approximation algorithms in the traditional sequential model of
computation \cite{bv-famsf-14,ijbICML-13,ChekuriTV15}, algorithms in
the streaming model \cite{bmkk-sso-14,ck-smms-14} as well as in the
map-reduce model of computation \cite{kmvv-13}.

In this paper we consider constrained submodular function
\emph{maximization}. The goal is to find $\max_{S \in \independents}
f(S)$ where $\independents \subseteq \powerset{\groundset}$ is a
\newterm{downward-closed} family of sets; i.e., $A \in \independents$
and $B \subseteq A$ implies $B \in \independents$. $\independents$ is
also called an \newterm{independence family} and any set $A \in
\independents$ is called an \newterm{independent set}. Submodular
maximization under various independence constraints has been
extensively studied in the literature. The problem can be easily seen
to be NP-hard even for a simple cardinality constraint as it
encompasses standard NP-hard problems like the Max-$k$-cover
problem. Constrained submodular maximization has found several new
applications in recent years. Some of these include data summarization
\cite{hb-11,sssj-12,dkr-13}, influence maximization in social networks
\cite{kkt-03,cyy-09,ccy-10,gfl-11,ss-13}, generalized
assignment\cite{ccpv-07}, mechanism design \cite{bik-07}, and network
monitoring \cite{lkgfvg-07}.

In some of these applications, the amount of data involved is much
larger than the main memory capacity of individual computers. This
motivates the design of space-efficient algorithms which can process
the data in \textit{streaming} fashion, where only a small fraction of
the data is kept in memory at any point. There has been some recent
work on submodular function maximization in the streaming model,
focused on \emph{monotone} functions (i.e. $f(A) \leq f(B)$, whenever
$A \subseteq B$). This assumption is restrictive from both a
theoretical and practical point of view.

\begin{wrapfigure}{r}{.45\textwidth}
  % \vspace{-.8em}
  % Compiled by dot from source file = constraints.gv
  % Run dot -Tpdf constraints.gv -o 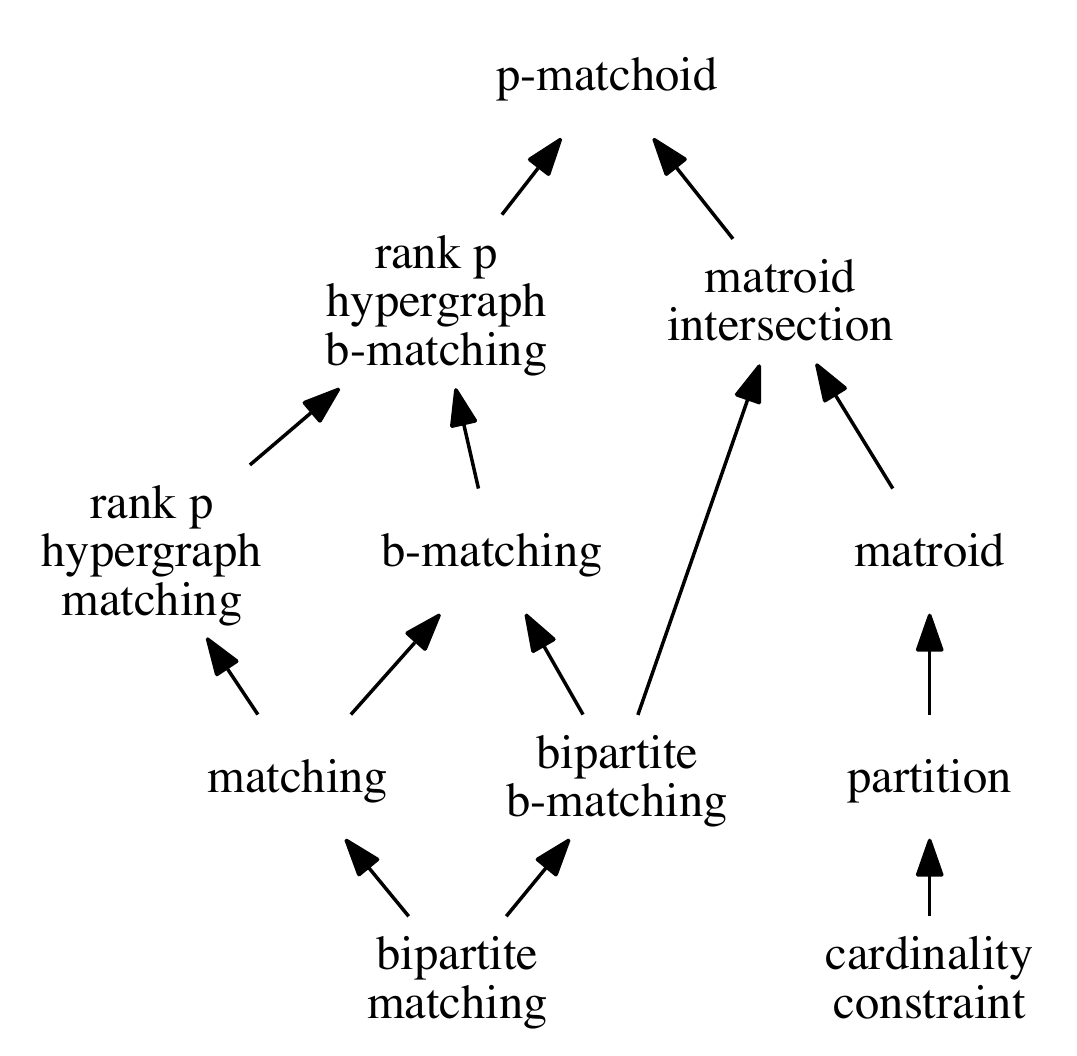 to compile.
  % Also, if this diagram is just silly, you're welcome to delete it.
  \centering
  \includegraphics[width=.45\textwidth]{constraints.pdf}
  % \vspace{-1.5em}
  \caption{Hierarchy of set systems}
  \labelfigure{constraints}
  % \vspace{1em}
\end{wrapfigure}
In this paper we present streaming algorithms for non-monotone
submodular function maximization subject to various combinatorial
constraints, the most general being a \refterm{$p$-matchoid}.
\refterm[$p$-matchoid]{$p$-matchoid's} generalize many basic
combinatorial constraints such as the cardinality constraint, the
intersection of $p$ matroids, and matchings in graphs and
hyper-graphs.  % Informally a \refterm{$p$-matchoid} constraint on
% $\groundset$ is induced by a collection of matroid constraints such
% that each element $e \in \groundset$ participates in at most $p$
% matroids from the collection.
A formal definition of a \refterm{$p$-matchoid} is given in
\refsection{preliminaries}. We consider the abstract
\refterm{$p$-matchoid} constraint for theoretical reasons, and most
constraints in practice should be simpler. We explicitly consider the
cardinality constraint and obtain an improved bound.

We now describe the problem formally. We are presented a groundset of
elements $\groundset = \{ e_1, e_2, \ldots e_n \}$, with no assumption
made on the order or the size of the datastream. The goal is to select
an independent set $S \subseteq \groundset$ (where independence is
defined by the \refterm[$p$-matchoid]{$p$-matchoid}), which maximizes
a nonnegative submodular function $f$ while using as little space as
possible. We make the following assumptions: (i) the function $f$ is
available via a value oracle, that takes as input a set $S \subseteq
\groundset$ and returns the value $f(S)$; (ii) the independence family
$\independents$ is available via a membership oracle with some
additional information needed in the \refterm{$p$-matchoid} setting;
and (iii) the constraints specify explicitly, and a priori, an upper
bound $k$ on the number of elements to be chosen. We discuss these in
turn.  The availability of a value oracle for $f$ is a reasonable and
standard assumption in the sequential model of computation, but needs
some justification in restrictive models of computation such as
streaming where the goal is to store at any point of time only a small
subset of the elements of $\groundset$. Can $f(S)$ be evaluated
without having access to all of $\groundset$? This of course depends
on $f$. \cite{bmkk-sso-14} gives several examples of interesting and
useful functions where this is indeed possible. The second assumption
is also reasonable if, as we remarked, the \refterm{$p$-matchoid}
constraint is in practice going to be a simple one that combines basic
matroids such as cardinality, partition and laminar matroid
constraints that can be specified compactly and implicitly. Finally,
the third assumption is guided by the fact that an abstract model of
constraints can in principle lead to every element being chosen. In
many applications the goal is to select a small and important subset
of elements from a much larger set; and it is therefore reasonable to
expect knowledge of an upper bound on how many can be chosen.
Submodular set functions are ubiquitous and arise explicitly and
implicitly in a variety of settings. The model we consider in this
paper may not be useful directly in some important scenarios of
interest. Nevertheless, the ideas underlying the analysis in the
streaming model that we consider here may still be useful in speeding
up existing algorithms and/or reduce their space usage.

As is typical for streaming algorithms, we measure performance in four
basic dimensions: (i) the approximation ratio $f(S)/\opt$, where $S$
is the output of the algorithm and $\opt$ is the value of an optimal
solution; (ii) the space usage of the algorithm; (iii) the update time
or the time required to process each stream element; and (iv) the
number of passes the algorithm makes over the data stream.

\begin{table}[t]
  \begingroup
  \footnotesize \def\arraystretch{1.6}
  \newcolumntype{C}{>{\centering\arraybackslash}X}%
  \newcolumntype{D}[1]{>{\hsize=#1\hsize\centering\arraybackslash}X}%
  \begin{tabularx}{\textwidth}{ | D{.9} | D{1} | D{1.2} | D{.8} |
      D{1.1} | }
    \cline{2-5} %
    \multicolumn{1}{ c| }{} & \multicolumn{2}{ |c| }{offline} %
    & \multicolumn{2}{ |c| }{streaming} \\
    \hline constraint %
    & monotone & nonnegative & monotone & nonnegative \\
    \hline %
    cardinality %
    & $1 - 1/e$ \cite{nwf-mssf1-78} %
    & $1/e + .004$ \cite{bfns-smcc-14} %
    & $\frac{1-\epsilon}{2} \cite{bmkk-sso-14}$ %
    & $\frac{1-\epsilon}{2 + e}$ (R,$\star$)\\
    % \hline %
    % partition \mbox{matroid} %
    % & $1 - 1/e$ (R) \cite{ccpv-11} %
    % & $\frac{1-\epsilon}{e}$ (R) \cite{fns-ucga-11} %
    % & $1/4$ \cite{ck-smms-14} %
    % & $\frac{1-\epsilon}{4+e}$ (R,$\star$)\\
    \hline %
    matroid %
    & $1 - 1/e$ (R) \cite{ccpv-11} %
    & $\frac{1-\epsilon}{e}$ (R) \cite{fns-ucga-11} %
    & $1/4$ \cite{ck-smms-14} %
    & $\frac{1-\epsilon}{4+e}$ (R,$\star$) %
    \\ %
    \hline %
    matchings %
    & $\frac{1}{2+\epsilon}$ \cite{fnsw-iakes-11}%
    & $\frac{1}{4+\epsilon}$ \cite{fnsw-iakes-11}%
    & $4/31$ \cite{ck-smms-14}%
    & $\frac{1-\epsilon}{12 + \epsilon}$ (R,$\star$) %
    \\
    \hline %
    $b$-matchings %
    & $\frac{1}{2+\epsilon}$ \cite{fnsw-iakes-11} %
    & $\frac{1}{4 + \epsilon}$ \cite{fnsw-iakes-11} %
    & $1/8$ ($\star$) %
    & $\frac{1-\epsilon}{12 + \epsilon}$ (R,$\star$)%
    \\ %
    \hline %
    rank $p$ \mbox{hypergraph} $b$-matching %
    & $\frac{1}{p+\epsilon}$ (R) \cite{fnsw-iakes-11} %
    & $\frac{p-1}{p^2 + \epsilon}$ \cite{fnsw-iakes-11} %
    & $1/4p$ ($\star$) %
    & $\frac{(1-\epsilon)(p-1)}{5p^2 - 4p + \epsilon}$ (R,$\star$)
    \\
    \hline %
    intersection of \mbox{$p$ matroids} %
    & $\frac{1}{p+\epsilon}$ \cite{lsv-smmm-10} %
    % & $(1-o(1)) \frac{(1-\epsilon)2}{ep}$ (R) %
    % \cite{fns-ucga-11,cvz-mrcrs-11}%
    & $\frac{p-1}{p^2 + (p-1)\epsilon}$ \cite{lsv-smmm-10}%
    & $1/4p$ \cite{ck-smms-14} %
    & $\frac{(1-\epsilon)(p-1)}{5p^2 - 4p}$ (R,$\star$)
    \\
    \hline %
    $p$-matchoids %
    & $\frac{1}{p+1}$ \cite{fnw-mssf2-78,ccpv-11} %
    & %$e^{-b}b \prac{1-e^{-b}}{b}^p - o(1)$
    $\frac{(1-\epsilon)(2-o(1))}{ep}$ (R)
    % Verified by Wolfram Alpha at
    % http://www.wolframalpha.com/share/clip?f=d41d8cd98f00b204e9800998ecf8427epqhtibfkk4
    % for $b = (2/p)$
    \cite{fns-ucga-11,cvz-mrcrs-11}%
    & $1/4p$ ($\star$) %
    & $\frac{(1-\epsilon)(2-o(1))}{(8+e)p}$ (R,$\star$)
    \\
    \hline %
    % $p$-system & $\frac{1}{p+1}$ \cite{fnw-mssf2-78,ccpv-11} & %
    % & %
    % & %
    % \\
    % \hline %
  \end{tabularx}
  \endgroup
  \vspace{1ex}
  \caption{Best known approximation bounds for submodular
    maximization. Bounds for randomized algorithms that hold in
    expectation are marked (R). For hypergraph $b$-matchings
    and matroid intersection, $p$ is fixed. In the results for
    $p$-matchoids, $o(1)$ goes to zero as $p$ increases. New bounds
    attained in this paper are marked ($\star$). All new bounds except
    for the cardinality constraint are the first bounds for their
    class. The best previous bound for the cardinality constraint is
    about .0893, by \cite{bfs-osmp-15}.}
  \labeltable{results}
  % \vspace{-2em}
\end{table}

\paragraph{Our results.} We develop randomized and deterministic
algorithms that yield an $\Omega(1/p)$-approximation for maximizing a
non-negative submodular function under a \refterm{$p$-matchoid}
constraint in the one-pass streaming setting. The space usage is $O(k
\log k)$, essentially matching recent algorithms for the simpler
setting of maximizing a monotone submodular function subject to a
cardinality constraint \cite{bmkk-sso-14}. The randomized algorithm
achieves better constants than the deterministic algorithm. As far as
we are aware, we present the first streaming algorithms for
non-monotone submodular function maximization under constraints beyond
cardinality.  We give an improved bound of $\frac{1-\eps}{2+e}$ for
the cardinality constraint.  For the monotone case our bounds match
those of Chakrabarti and Kale \cite{ck-smms-14} for a single pass; we
give a self-contained algorithm and analysis. \reftable{results}
summarizes our results for a variety of constraints.

\iffalse
\subsection{Our Results}
We give a simple and efficient streaming algorithm for submodular
maximization subject to a \refterm{$p$-matchoid} constraint, which
generalizes a wide class of independence constraints. Detailed
definition of the constraints are presented in
\refsection{preliminaries}.
\begin{itemize}
\item We give an $O(p) + \epsilon $-approximation for maximizing
  nonnegative submodular function subject to
  \refterm{$p$-matchoid}s. The algorithm makes a single pass over the
  datastream and uses $O(\frac{k\log k}{\epsilon^2})$ memory where $k$
  is an upper bound on the rank of the
  \refterm{$p$-matchoid}. \refterm{$p$-matchoid}s generalizes
  intersection of $p$-matroids, and $b$-matching in hypergraph of rank
  $p$. Thus the result extends to these special cases as well. In
  particular for a single matroid we get $(8.55 +
  \epsilon)$-approximation in a single pass. Refer
  \refsection{nonnegative-matchoid}.

\item As a corollary of the technique developed in this paper we get a
  $4p$ approximation in the monotone case, which matches the existing
  result of Chakrabarti and Kale \cite{ck-smms-14}, and is
  conceptually much simpler.

\end{itemize}
\fi

% \subsection{Our Technique}

\paragraph{A brief overview of techniques.} Streaming % and map-reduce
% based
algorithms for constrained modular and submodular function
optimization are usually clever variations of the greedy algorithm,
which picks elements in iterations to maximize the gain in each
iteration locally while maintaining feasibility.  For monotone
functions, in the offline setting, greedy gives a
$1/(p+1)$-approximation for the \refterm{$p$-matchoid} constraint and
a $(1-1/e)$-approximation for the cardinality constraint
\cite{fnw-mssf2-78}. The offline greedy algorithm cannot be directly
implemented in streams, but we outline two different strategies that
are still greedy in spirit. % Existing streaming algorithms
% behave essentially as online algorithms with preemption: the
% algorithm maintains a feasible solution $S$ and when it sees the
% next element $e_i$ from the stream, it decides whether to add it to
% $S$, potentially kicking out some existing elements to maintain
% feasibility. Two basic ideas have been used.
For the cardinality constraint, Badanidiyuru \etal \cite{bmkk-sso-14}
designed an algorithm that adds an element to its running solution $S$
only if the marginal gain is at least a threshold of about
$\opt/2k$. Although the quantity $\opt/2k$ is not known a priori, they
show that it lies in a small and identifiable range, and can be
approximated with $O(\log k)$ well-spaced guesses. The algorithm then
maintains $O(\log k)$ solutions in parallel, one for each
guess. Another strategy from Chakrabarti and Kale \cite{ck-smms-14},
based on previous work for matchings \cite{fkmsz-gpssm-05,m-fgmds-05}
and matroid constraints \cite{abv-11} with modular weights, will
consider deleting elements from $S$ when adding a new element to $S$
is infeasible. More specifically, when a new element $e$ is
encountered, the algorithm finds a subset $C \subseteq S$ such that
$(S \setminus C) + e$ is feasible, and compare the gain $f((S
\setminus C) + e) - f(S)$ to a quantity representing the value that
$C$ adds to $S$. In the modular case, this may be the sum of weights
of elements in $C$; for monotone submodular functions, Chakrabarti and
Kale used marginal values, fixed for each element when the element is
added to $S$, as proxy weights instead.

% In \cite{bmkk-sso-14}, which handles only the cardinality
% constraint, the algorithm ``guesses'' a threshold roughly equal to
% $\opt/2k$ and adds an element to $S$ only if the marginal gain is at
% least the threshold. Guessing is implemented by maintaining multiple
% solutions that have different thresholds. A more sophisticated idea
% is from \; here the new element $e_i$ in the stream is added to
% current solution $S$, potentially kicking out a few elements $C$
% from $S$ as long as $f(S\setminus C + e) - f(S)$ is at least some
% factor $\beta > 1$ of the ``value'' of $C$ to $S$. In the modular
% setting the value of $C$ to $S$ is simple; it is simply the modular
% weight $w(C) = \sum_{e' \in C} w(e)$ where $w(a)$ is the weight of
% element $a$.  The main contribution of \cite{ck-smms-14} is to show
% that in the monotone submodular setting one can use proxy weights
% $w(a)$ where $w(a)$ is the marginal value of $a$ to $S$ when it was
% added to $S$.

The non-monotone case is harder because marginal values can be
negative even when $f$ is non-negative. The natural greedy algorithm
fails for even the simple cardinality constraint, and the best offline
algorithms for nonnegative submodular maximization are uniformly
weaker (see \reftable{results}).  To this end, we adapt techniques
from the recent work of Buchbinder \etal \cite{bfns-smcc-14} in our
randomized algorithm, and techniques from Gupta \etal \cite{grst-10}
for the deterministic version. Buchbinder \etal randomized the
standard greedy algorithm (for cardinality) by repeatedly gathering
the top (say) $k$ remaining elements, and then randomly picking only
one of them. We adapt this to the greedy setting by adding the top
elements to a buffer $B$ as they appear in the stream, and randomly
adding an element from $B$ to $S$ only when $B$ fills up. What remains
of $B$ at the end of the stream is post-processed by an offline
algorithm. Gupta \etal gave a framework for adapting any monotone
submodular maximization algorithm to nonnegative submodular functions,
by first running the algorithm once to generate one independent set
$S_1$, then running the algorithm again on the complement of $S_1$ to
generate a second set $S_2$, and running an unconstrained maximization
algorithm on $S_1$ to produce a third set $S_3$, finally returning the
best of $S_1$, $S_2$, and $S_3$. Our deterministic streaming algorithm
is a natural adaptation, piping the rejected elements of one instance
of a streaming algorithm directly into a second instance of the same
algorithm, and post-processing all the elements taken by the first
streaming instance. Both of our algorithms require that we limit the
number of elements ever added to $S$, which then limits the size of
the input for the post-processor. This limit is enforced by the idea
of additive thresholds from \cite{bmkk-sso-14} and a simple but subtle
notion of value that ensures the properties we desire.

% Our main insight for our randomized algorithm comes from the recent
% work of Buchbinde et al.  \ cite{bfns-smcc-14}, and our main insight
% for the deterministic lgorithm comes from the ``iterated greedy''
% algorithm of Gupta et l.\ \cite{grst-10}. In both cases a key idea
% is to not only maintain feasible solution $S$ but an additional
% buffer $B$ of elements; when he stream ends our algorithms output
% the better of $S$ and $S'$ where $S'$ is obtained by running an
% offline algorithm on $B$. For this idea o work while maintaining a
% low space requirement we need to ensure hat the size of $B$ is not
% too large compared to $k$. To ensure this e utilize the idea of
% thresholds from \cite{bmkk-sso-14} but we also eed to generalize the
% notion of ``value of $C$ to $S$'' when we add n element of $e$ to
% $S$ and kick out $C$ from $S$. We show that a imple but subtle
% notion of the value can be defined which, even in he non-monotone
% case, ensures the properties we desire.

\iffalse ; if $S$ is the current feasible solution at the beginning of
iteration and $A = \{ e \mid S+e \in \independents\}$ is the set of
elements that can be added to $S$ while maintaining feasibility,
greedy picks the element $e = \text{argmax}_{e'} f_S(e')$.  A special
case that illustrates the main ideas is the simple cardinality
constraint where the goal is to solve $\max_{|S| \le k} f(S)$. In the
offline setting the greedy algorithm picks in each iteration the
element that \fi

\paragraph{Related work.}
\labelsection{sec:reltd_result}

There is substantial literature on constrained submodular function
optimization, and we only give a quick overview.  Many of the basic
problems are NP-Hard, so we will mainly focus on the development of
approximation algorithms.  The (offline) problem $\max_{S \in
  \independents} f(S)$ for various constraints has been extensively
explored starting with the early work of Fisher, Nemhauser, Wolsey on
greedy and local search algorithms
\cite{nwf-mssf1-78,fnw-mssf2-78}. Recent work has obtained many new
and powerful results based on a variety of methods including variants
of greedy \cite{grst-10,bfns-smcc-14,bfjs-12}, local search
\cite{LeeMNS10,lsv-smmm-10,FilmusW14}, and the multilinear relaxation
\cite{ccpv-11,KulikST13,BansalKNS12,cvz-mrcrs-11}. Monotone submodular
functions admit better bounds than non-monotone functions (see
\reftable{results}).  For a \refterm{$p$-matchoid} constraint, which
is our primary consideration, an $\Omega(1/p)$-approximation can be
obtained for non-negative functions. Recent work has also obtained new
lower bounds on the approximation ratio achievable in the oracle model
via the so-called symmetery gap technique \cite{Vondrak13}; this also
yields lower bounds in the standard computational models \cite{dv-12}.

Streaming algorithms for submodular functions are a very recent
phenomenon with algorithms developed recently for monotone submodular
functions \cite{bmkk-sso-14,ck-smms-14}. \cite{bmkk-sso-14} gives a
$1/2-\eps$ approximation for monotone functions under cardinality
constraint using $O(k \log k/\eps)$ space. \cite{ck-smms-14} focuses
on more general constraints like interesctions of $p$-matroids and
rank $p$ hypergraphs, giving an approximation of $1/4p$ using a single
pass. Their algorithm extends to multiple passes, with an
approximation bound of $1/(p+1+\epsilon)$ with $O(\epsilon^{-3} \log
p)$ passes.  The main focus of \cite{kmvv-13} is on the map-reduce
model although they claim some streaming results as well.

Related to the streaming models are two {\em online} models where
elements arrive in an online fashion and the algorithm is required to
maintain a feasible solution $S$ at all times; each element on arrival
has to be processed and any element which is discarded from $S$ at any
time cannot be added back later. Strong lower bounds can be shown in
this model and two relaxations have been considered.  In the
\newterm{secretary model}, the elements arrive according to a random
permutation of the ground set and an element added to $S$ cannot be
discarded later. In the secretary model, constant factor algorithms
are known for the cardinality constraint and some special cases of a
single matroid constraint \cite{grst-10,bhz-13}.  These algorithms
assume the stream is randomly ordered and their performance degrades
badly against adversarial streams; the best competitive ratio for a
single general matroid is $O(\log k)$ (where $k$ is the rank of the
matroid).  Recently, Buchbinder \etal \cite{bfs-osmp-15} considered a
different relaxation of the online model where \newterm{preemptions}
are allowed: elements added to $S$ can be discarded later. Algorithms
in the preemptive model are usually streaming algorithms, but the
converse is not true (although the one-pass algorithms in
\cite{ck-smms-14} are preemptive). For instance, the algorithm in
\cite{bmkk-sso-14} maintains multiple feasible solutions and our
algorithms maintain a buffer of elements neither accepted nor
rejected. The space requirement of an algorithm in the online model is
not necessarily constrained since in principle an algorithm is allowed
to keep track of all the past elements seen so far. The main result in
\cite{bfs-osmp-15}, as it pertains to this work, is a randomized
$0.0893$-competitive algorithm for cardinality constraints using
$O(k)$-space. As \reftable{results} shows, we obtain a
$(1-\eps)/(2+e)$-competitive algorithm for this case using $O(k \log
k/\eps^2)$-space.

\paragraph{Paper organization.} \refsection{preliminaries} reviews
combinatorial definitions and introduces the notion of incremental
values. \refsection{streaming-greedy} analyzes an algorithm that works
for monotone submodular functions, and
\refsection{randomized-streaming-greedy} adapts this algorithm to the
non-monotone case. In \refsection{iterated-streaming-greedy}, we give
a deterministic streaming algorithm with slightly weaker guarantees.

\section{Preliminaries}
\labelsection{preliminaries}

\paragraph{Matroids.}
A \newterm{matroid} is a finite set system $\matroid =
(\groundset,\independents)$, where $\groundset$ is a set and
$\independents \subseteq \powerset{\groundset}$ is a family of subsets
such that:%
\begin{inline_properties}
\item $\emptyset \in \independents$,
\item If $A \subseteq B \subseteq \groundset$, and $B \in
  \independents$, then $A \in \independents$,
\item If $A, B \in \independents$ and $|A| < |B|$, then there is an
  element $b \in B \setminus A$ such that $A + b \in \independents$.
\end{inline_properties}
In a matroid $\matroid = (\groundset,\independents)$, $\groundset$ is
called the \newterm{ground set} and the members of $\independents$ are
called \newterm{independent sets} of the matroid. The bases of
$\matroid$ share a common cardinality, called the \newterm{rank} of
$\matroid$.

% \paragraph{Intersection of Matroids:}
% Let $\matroid_1 = (\groundset,\independents_1),\dots,\matroid_p =
% (\groundset,\independents_p)$ be $p$ matroids over the same ground
% set $\groundset$, the \newterm{matroid intersection} of
% $\matroid_1,\dots,\matroid_p$ is the set system $\matroid =
% (\groundset, \independents)$, where
% \begin{align*}
%   \independents = \independents_1 \cap \cdots \cap \independents_p.
% \end{align*}
% In general, $\matroid$ is not a matroid, but remains a hereditary
% set system.

% \paragraph{Matchings in graphs and hypergraphs:}
% Let $\graph = (\vertices,\edges)$ be an undirected multigraph. A set
% of edges $S \subseteq \edges$ is \newterm{independent} if no two
% edges in $S$ share a vertex as an endpoint. The set of
% \newterm{matchings} $\independents \subseteq \powerset{\edges}$
% consist of all independent sets of edges; together, $(\edges,
% \independents)$ forms a hereditary set system.  Given positive
% integer capacities $b: \vertices \to \naturalnumbers$ on each
% vertex, a set of edges $S \subseteq \edges$ is a
% \newterm{$b$-matching} if for any vertex $v \in \vertices$, $v$ is
% an endpoint of at most $b(v)$ edges in $S$. Letting $\independents$
% denote the set of $b$-matchings in $\graph$,
% $(\edges,\independents)$ forms a hereditary set systems.

% Matchings and $b$-matchings generalize to hypergraphs.  In
% hypergraph $\graph = (\vertices, \edges)$ each edge $e \in \edges$
% is a subset of $\vertices$.  The \newterm{rank} of a hypergraph
% $\graph$ is largest number of vertices in any edge in $\graph$.

\paragraph{Matchoids.}
Let $\matroid_1 = (\groundset_1,\independents_1),\dots,\matroid_\nmat
= (\groundset_\nmat, \independents_\nmat)$ be $\nmat$ matroids over
overlapping groundsets. Let $\groundset = \groundset_1 \cup \cdots
\cup \groundset_\nmat$ and
% \begin{align*}
\begin{math}
  \independents = \setof{S \subseteq \groundset \suchthat S \cap
    \groundset_{\mindex} \in \independents_{\mindex} \text{ for all
    }\ell}.
\end{math}
% \end{align*}
The finite set system $\defpmatchoid$ is a \newterm{$p$-matchoid} if
for every element $e \in \groundset$, $e$ is a member of
$\groundset_{\mindex}$ for at most $p$ indices $\mindex \in [\nmat]$.
$p$-matchoids generalizes matchings and intersections of matroids,
among others (see \reffigure{constraints}).

%%%%%%%% END OF INTRODUCTION %%%%%%%%

\paragraph{Maximizing submodular functions under a $p$-matchoid
  constraint.}
% \labelsection{nonnegative-matchoid}%

Let $\groundset$ be a set of elements, $f: \powerset{\groundset} \to
\nnreals$ a nonnegative submodular function on $\groundset$, and
$\defpmatchoid$ a $p$-matchoid for some integer $p$. % over $\mcard$
% matroids $\matroid_1 =
% (\groundset_1,\independents_1),\dots,\matroid_\mcard =
% (\groundset_\mcard,\independents_\mcard)$.
We want to approximate $\OPT = \max_{S \in \independents} f(S)$.
There are several polynomial-time approximation algorithms that give
an $\Omega(1/p)$-approximation for this problem, with better bounds
for simpler constraints (see \reftable{results}).
% The best bound that we are aware of is $\offlineratio = e p$ and can
% be derived by combining results from \note{(either \cite{fns-nsm-11}
% or \cite{fns-ucga-11})} and \cite{cvz-mrcrs-11}.
These algorithms are used as a black box called \newalgo{Offline},
with approximation ratio denoted by $\gamma_p$: if \refalgo{Offline}
returns $S \in \independents$, then $\evof{f(S)} \geq \gamma_p \OPT$
(possibly without expectation, if \refalgo{Offline} is deterministic).

% \begin{lemma}
%   \labellemma{offline-nn-submodular}
%   \note{Best offline bound}
% \end{lemma}

% \note{Best offline bound: using \cite{fns-ucga-11} and
% \cite{cvz-mrcrs-11} we can get a bound of the form $b e^{-b}
% ((1-e^{-b})/b)^p$ (here $0 \le b le 1$) which when simplified give
% something like $2/(ep)$ for $b=2/p$.  The paper of
% Feldman-Naor-Schwartz-Ward from ESA give $(p-1)/(p^2+\eps)$ for
% $p$-exchange systems which are a special case (that does not include
% general matroid intersection).}

\paragraph{Incremental Value.}

\newcommand{\incvalue}{\nu}
\newcommand{\incrementalvalueof}[3]{\incvalue\parof{#1,#2,#3}}
\newcommand{\incvalueof}{\incrementalvalueof}
% {#1_{#2}^{\mathrm{i}}\parof{#3}}

Let $\groundset$ be a ground set, and let $f: \subsetsof{\groundset}
\to \reals$ be a submodular function. For a set $S \subseteq
\groundset$ and an element $e \in S$, what is the value that $e$ adds
to $S$? One idea is to take the margin $f_{S - e}(e) = f(S) - f(S -
e)$ of adding $e$ to $S - e$. However, because $f$ is not necessarily
modular, we can only say that $\sum_{e \in S} f_{S - e}(e) \leq f(S)$
without equality. It is natural to ask for a different notion of value
where the values of the parts sum to the value of the whole.

Let $\groundset$ be an \emph{ordered} set and $f:
\subsetsof{\groundset} \to \reals$ be a set function.  For a set $S
\subseteq \groundset$ and element $e \in \groundset$, the
\newterm{incremental value} of $e$ in $S$, denoted
$\incvalueof{f}{S}{e}$, is defined as
\begin{align*}
  \incvalueof{f}{S}{e} &= f_{S'}(e) \text{, where } S' = \setof{s \in
    S \suchthat s < e}.
\end{align*}
The key point of incremental values is that they capture the entire
value of a set. The following holds for \emph{any} set function.
\begin{lemma}%
  \labellemma{sum-incremental-values}%
  Let $\groundset$ be an ordered set, $f: \subsetsof{\groundset} \to
  \reals$ a set function, and $S \subseteq \groundset$ a set. Then
  \begin{math}
    f(S) = \sum_{e \in S} \incvalueof{f}{S}{e}.
  \end{math}
  \begin{proof}
    Enumerate $S = \setof{e_1,\dots,e_\ell}$ in order, and let $S_i =
    \setof{e_1,\dots,e_i}$ denote the first $i$ elements in $S$. We
    have,
    \begin{align*}
      \sum_{e_i \in S} \incvalueof{f}{S}{e_i} %
      = %
      \sum_{e_i \in S} f_{S_{i-1}}(e_i) %
      = %
      f(S). %
    \end{align*}
  \end{proof}
\end{lemma}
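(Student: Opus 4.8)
The plan is to exploit the telescoping structure that is already hidden in the definition of incremental value; no property of $f$ beyond being a set function will be needed. First I would fix an enumeration $S = \setof{e_1,\dots,e_\ell}$ that respects the order on $\groundset$, so that $e_1 < e_2 < \cdots < e_\ell$, and write $S_i = \setof{e_1,\dots,e_i}$ for the $i$-th prefix, with $S_0 = \emptyset$. The crucial observation is that the set $S' = \setof{s \in S \suchthat s < e_i}$ appearing in the definition of $\incvalueof{f}{S}{e_i}$ is \emph{exactly} the prefix $S_{i-1}$: every element of $S$ strictly below $e_i$ in the order is one of $e_1,\dots,e_{i-1}$, and conversely. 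Hence $\incvalueof{f}{S}{e_i} = f_{S_{i-1}}(e_i) = f(S_i) - f(S_{i-1})$.

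Given this identification, the sum collapses: $\sum_{i=1}^{\ell} \incvalueof{f}{S}{e_i} = \sum_{i=1}^{\ell} \bigl(f(S_i) - f(S_{i-1})\bigr) = f(S_\ell) - f(S_0) = f(S) - f(\emptyset)$, where the last step uses $S_\ell = S$ and $S_0 = \emptyset$. Under the standing normalization $f(\emptyset) = 0$ this equals $f(S)$, which is the claimed identity; if one does not wish to assume normalization, the right-hand side is $f(S) - f(\emptyset)$ and the statement should be read with that convention in mind.

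I do not expect a genuine obstacle here: once the enumeration is set up, the content is a one-line telescoping. The only points requiring a moment of care are (i) the matching of $\setof{s \in S \suchthat s < e_i}$ with the prefix $S_{i-1}$, which is immediate from enumerating $S$ in increasing order, and (ii) noting explicitly that this argument uses nothing about $f$ — submodularity, monotonicity, and nonnegativity are all irrelevant — which is precisely why the lemma can be stated for arbitrary set functions and will later be applied to marginal-based proxy weights.
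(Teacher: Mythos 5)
Your proof is correct and is essentially the paper's own argument: enumerate $S$ in order, identify $\setof{s \in S \suchthat s < e_i}$ with the prefix $S_{i-1}$, and telescope. Your added caveat that the sum actually equals $f(S) - f(\emptyset)$ is a fair observation (the paper's proof silently treats $f(\emptyset)$ as $0$), but otherwise there is nothing to distinguish the two arguments.
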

When $f$ is submodular, we have decreasing incremental values
analogous (and closely related) to decreasing marginal returns of
submodular function.
\begin{lemma}%
  \labellemma{decreasing-incremental-values}
  Let $S \subseteq T \subseteq \groundset$ be two nested subsets of an
  ordered set $\groundset$, let $f: \subsetsof{\groundset} \to \reals$
  be submodular, and let $e \in \groundset$. Then
  \begin{math}
    \incvalueof{f}{T}{e} \leq \incvalueof{f}{S}{e}.
  \end{math}
  \begin{proof}
    Let $S' = \setof{s \in S \suchthat s < e}$ and $T' = \setof{t \in
      T \suchthat t < e}$. Since $S \subseteq T$, clearly $S'
    \subseteq T'$. We have,
    \begin{align*}
      \incvalueof{f}{T}{e} = f_{T'}(e) \leq f_{S'}(e) =
      \incvalueof{f}{S}{e}
    \end{align*}
    where the inequality follows by submodularity.
  \end{proof}
\end{lemma}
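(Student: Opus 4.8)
The plan is to prove the inequality by unwinding the definition of incremental value and then invoking the decreasing-marginal-returns property of $f$. First I would introduce the two predecessor sets $S' = \setof{s \in S \suchthat s < e}$ and $T' = \setof{t \in T \suchthat t < e}$, so that by definition $\incvalueof{f}{S}{e} = f_{S'}(e)$ and $\incvalueof{f}{T}{e} = f_{T'}(e)$. Since the condition ``$\cdot < e$'' does not depend on which set we restrict, and $S \subseteq T$, we immediately get $S' \subseteq T'$.

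Next I would apply submodularity of $f$ directly to the nested pair $S' \subseteq T'$: the defining property of submodular functions is that marginal values are nonincreasing in the conditioning set, i.e.\ $f_{T'}(e) \leq f_{S'}(e)$. Chaining the two identities above with this inequality yields $\incvalueof{f}{T}{e} = f_{T'}(e) \leq f_{S'}(e) = \incvalueof{f}{S}{e}$, which is exactly the claim.

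I do not expect any real obstacle here: the argument is a two-step reduction to the definition of submodularity, and the only thing that needs checking is that passing from a set to its predecessors of $e$ is monotone under inclusion, which is trivial since the threshold $e$ is fixed. It is worth noting, though, that this lemma --- unlike the preceding one on the sum of incremental values --- genuinely uses submodularity rather than holding for arbitrary set functions, since for a general $f$ the marginal $f_A(e)$ need not be monotone in $A$.
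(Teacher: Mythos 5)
Your proof is correct and follows exactly the paper's argument: define the predecessor sets $S'$ and $T'$, observe $S' \subseteq T'$, and apply decreasing marginal returns to conclude $f_{T'}(e) \leq f_{S'}(e)$. Nothing further is needed.
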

The following is also an easy consequence of submodularity.
\begin{lemma}
  \labellemma{marginal-incremental-values-ineq}
  Let $\groundset$ be an ordered set of elements, let $f:
  \subsetsof{\groundset} \to \reals$ be a submodular function, $S,Z
  \subseteq \groundset$ two sets, and $e \in S$. Then
  \begin{math}
    \incvalueof{f_Z}{S}{e} \leq \incvalueof{f}{Z \cup S}{e}.
  \end{math}
  \begin{proof}
    Let $Z' = \setof{z \in Z \suchthat z < e}$ and $S' = \setof{s \in
      S \suchthat s < e}$. By submodularity, we have,
    \begin{align*}
      \incvalueof{f_Z}{S}{e} = f_{Z \cup S'}(e) \leq f_{Z' \cup S'}(e)
      = \incvalueof{f}{Z \cup S}{e}.
    \end{align*}
  \end{proof}
\end{lemma}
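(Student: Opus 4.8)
The plan is to rewrite both incremental values as ordinary marginal values of $f$ over explicit ``prefix'' sets, and then apply decreasing marginal returns exactly once. Throughout, write $S' = \setof{s \in S \suchthat s < e}$ and $Z' = \setof{z \in Z \suchthat z < e}$ for the elements of $S$ and of $Z$ that precede $e$ in the order on $\groundset$.

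First I would handle the left-hand side. By definition $\incvalueof{f_Z}{S}{e} = (f_Z)_{S'}(e)$. Expanding the contraction via $f_Z(A) = f(Z \cup A) - f(Z)$, the two copies of $f(Z)$ cancel, so $(f_Z)_{S'}(e) = f(Z \cup S' + e) - f(Z \cup S') = f_{Z \cup S'}(e)$; note $e \notin S'$, since $S'$ consists only of elements strictly below $e$. Next, for the right-hand side, the set of elements of $Z \cup S$ that precede $e$ is exactly $Z' \cup S'$, so $\incvalueof{f}{Z \cup S}{e} = f_{Z' \cup S'}(e)$. It then remains to compare $f_{Z \cup S'}(e)$ with $f_{Z' \cup S'}(e)$: since $Z' \subseteq Z$ we have $Z' \cup S' \subseteq Z \cup S'$, and decreasing marginal returns of the submodular function $f$ give $f_{Z \cup S'}(e) \le f_{Z' \cup S'}(e)$, which is the claimed inequality.

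There is essentially no serious obstacle; the only point needing care is the bookkeeping in the first step — recognizing that passing to the contracted function $f_Z$ effectively prepends \emph{all} of $Z$ (not merely its prefix $Z'$) to the set over which the marginal of $e$ is taken, so that this set dominates the prefix $Z' \cup S'$ appearing on the right and submodularity applies in the correct direction. One should also keep in mind the mild assumption $e \notin Z$ (which holds in all intended uses, where $Z$ plays the role of a disjoint context set): without it the marginal $f_{Z\cup S'}(e)$ is trivially zero while the right-hand side may be negative for a non-monotone $f$, so the decreasing-marginals step requires $e$ to lie outside the larger set $Z \cup S'$.
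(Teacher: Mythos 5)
Your proof is correct and follows exactly the paper's argument: expand both incremental values as marginals over the prefix sets $Z\cup S'$ and $Z'\cup S'$, then apply submodularity to the inclusion $Z'\cup S'\subseteq Z\cup S'$. The extra remark about $e\notin Z$ is a reasonable observation but does not change the approach.
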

\FloatBarrier
\section{Streaming Greedy}%
\labelsection{streaming-greedy}

\newcommand{\finalset}{\tilde{S}}%
\newcommand{\setbefore}[1]{S_{#1}^-} %
\newcommand{\setafter}[1]{S_{#1}^+} %
\newcommand{\takens}{U}%
\newcommand{\taken}{u} %
\newcommand{\candidates}{C}
\newcommand{\candidatesfor}[1]{\candidates_{#1}}
\newcommand{\gain}{\delta} %
\newcommand{\gainof}[1]{\delta_{#1}} %
\newcommand{\deleted}{d} %
\newcommand{\deleteds}{\takens \setminus \finalset} %

% \begin{figure}[bth]
%   \begin{framed}
%     \begin{pseudocode}
%       \begin{routine}{Streaming-Greedy}{$\alpha$,$\beta$}
%         $S \gets \emptyset$\\
%         while (stream is not empty) \+\\
%         $e \gets $ next element in the stream\\
%         $C \gets \refalgo{Exchange-Candidates}{$S$,$e$}$ \\%
%         \commentcode{$C$ satisfies $S - C + e \in \independents$}\\
%         if %
%         \begin{math}
%           f_S(e) %
%           \geq %
%           \alpha + (1 + \beta) \sum_{c \in C} \incvalueof{f}{S}{c} %
%         \end{math} %
%         then \\
%         \> $S \gets S \setminus C + e$ \\
%         end if \\
%         \< end while \- \\
%         return $S$
%       \end{routine}
%     \end{pseudocode}
%   \end{framed}
%   \caption{\refalgo{Streaming-Greedy} processes a stream of elements
%   from a $p$-matchoid $\pmatchoid$ and greedily builds an
%   independent set $S$. \note{Is fig caption adding anything useful?
%   Box should frame the code. Same comment for all figs. Can we put
%   Streaming-Greedy and Exchange-Candidates side by side?}}
%   \labelfigure{streaming-greedy}
% \end{figure}

\begin{figure}[t]
  \centering
  \begin{minipage}{6.9cm}
    \begin{framed}
      \begin{pseudocode}
        \begin{routine}{Streaming-Greedy}{$\alpha$,$\beta$}
          $S \gets \emptyset$\\
          while (stream is not empty) \+\\
          $e \gets $ next element in the stream\\
          $C \gets \refalgo{Exchange-Candidates}{$S$,$e$}$ \\%
          \commentcode{$C$ satisfies $S - C + e \in \independents$}\\
          if %
          \begin{math}
            f_S(e) %
            \geq %
            \alpha + (1 + \beta) \sum_{c \in C} \incvalueof{f}{S}{c} %
          \end{math} %
          \\
          \> $S \gets S \setminus C + e$ \\
          \< end while \- \\
          return $S$
        \end{routine} %
      \end{pseudocode}
      \vspace{.2ex}
    \end{framed}
  \end{minipage}
  \qquad
  \begin{minipage}{6.4cm}
    \begin{framed}
      \begin{pseudocode}
        \begin{routine}{Exchange-Candidates}{$S$,$e$}
          $C \gets \emptyset$\\
          for $\ell = 1,\dots, \mcard$ \\
          \> if $e \in \groundset_\mindex$ and $(S + e) \cap
          \groundset_\mindex
          \notin \independents_\mindex$\\
          \> \> $S_\ell = S \cap \groundset_\ell$\\
          \> \> $X \gets \setof{s \in S_\ell: (S_\ell - s + e) \in
            \independents_\ell}$\\
          \> \> \commentcode{$X + e$ is a circuit}\\
          \> \> $c_{\mindex} \gets \argmin_{x \in X} \incvalueof{f}{S}{x}$\\
          \> \> $C \gets C + c_{\mindex}$ \\
          \> end if \\
          end for\\
          return $C$
        \end{routine}
      \end{pseudocode}
    \end{framed}
  \end{minipage}
\end{figure}

% \begin{wrapfigure}{r}{0.5\textwidth} %
%   \vspace{-4em}
%   \begin{framed}
%     \begin{pseudocode}
%       \begin{routine}{Streaming-Greedy}{$\alpha$,$\beta$}
%         $S \gets \emptyset$\\
%         while (stream is not empty) \+\\
%         $e \gets $ next element in the stream\\
%         $C \gets \refalgo{Exchange-Candidates}{$S$,$e$}$ \\%
%         \commentcode{$C$ satisfies $S - C + e \in \independents$}\\
%         if %
%         \begin{math}
%           f_S(e) %
%           \geq %
%           \alpha + (1 + \beta) \sum_{c \in C} \incvalueof{f}{S}{c} %
%         \end{math} %
%         \\
%         \> $S \gets S \setminus C + e$ \\
%         \< end while \- \\
%         return $S$
%       \end{routine} %
%     \end{pseudocode}
%     \vspace{.2ex}
%   \end{framed}
%   %   \vspace{-2em}
%   %   \caption{\refalgo{Streaming-Greedy} processes a stream of
%   %   elements
%   %   from a $p$-matchoid $\pmatchoid$ and greedily builds an
%   %   independent set
%   %   $S$.% \note{Is fig caption adding anything useful?
%   %     %   Box should frame the code. Same comment for all
%   %     %   figs. Can we put
%   %     %   Streaming-Greedy and Exchange-Candidates side by side?}
%   % }
%   %   \caption{\refalgo{Streaming-Greedy}} %
%   %   \labelfigure{streaming-greedy}
%   %   \vspace{-1em}
% \end{wrapfigure} %

Let $\defpmatchoid$ be a $p$-matchoid and $f$ a submodular
function. The elements of $\groundset$ are presented in a stream, and
we order $\groundset$ by order of appearance. We assume value oracle
access to $f$, that given $S \subseteq \groundset$, returns the value
$f(S)$. We also assume membership oracles for each of the $\mcard$
matroids defining $\pmatchoid$: given $S \subseteq \groundsetI$, there
is an oracle for $\matroidI$ that returns whether or not $S \in
\independentsI$.

% \begin{wrapfigure}{r}{.457\textwidth}
%   \vspace{-1.3em}
%   \begin{framed}
%     \begin{pseudocode}
%       \begin{routine}{Exchange-Candidates}{$S$,$e$}
%         $C \gets \emptyset$\\
%         for $\ell = 1,\dots, \mcard$ \\
%         \> if $e \in \groundset_\mindex$ and $(S + e) \cap
%         \groundset_\mindex
%         \notin \independents_\mindex$\\
%         \> \> $S_\ell = S \cap \groundset_\ell$\\
%         \> \> $X \gets \setof{s \in S_\ell: (S_\ell - s + e) \in
%         \independents_\ell}$\\
%         \> \> \commentcode{$X + e$ is a circuit}\\
%         \> \> $c_{\mindex} \gets \argmin_{x \in X} \incvalueof{f}{S}{x}$\\
%         \> \> $C \gets C + c_{\mindex}$ \\
%         \> end if \\
%         end for\\
%         return $C$
%       \end{routine}
%     \end{pseudocode}
%   \end{framed}
%   \caption{\refalgo{Exchange-Candidates}{$S$,$e$} % takes an independent
%     %   set $S$ and an element $e$ and
%   returns $C \subseteq S$ such that $S \setminus C + e \in
%   \independents$.}
%   \vspace{-.85em}
% \end{wrapfigure} %

We first present a deterministic streaming algorithm
\refalgo{Streaming-Greedy} that yields an
$\Omega(1/p)$-{\allowbreak}approximation for monotone submodular
functions, but performs poorly for non-monotone functions. The primary
motivation in presenting \refalgo{Streaming-Greedy} is as a building
block for a randomized algorithm \refalgo{Randomized-Streaming-Greedy}
presented in \refsection{randomized-streaming-greedy}, and a
deterministic algorithm \refalgo{Iterated-Streaming-Greedy} presented
in \refsection{iterated-streaming-greedy}. The analysis for these
algorithms relies crucially on properties of
\refalgo{Streaming-Greedy}.

\refalgo{Streaming-Greedy} maintains an independent set $S \in
\independents$; as an element arrives in the stream, it is either
discarded or added to $S$ in exchange for a well-chosen subset of $S$.
The threshold for exchanging is tuned by two nonnegative parameters
$\alpha$ and $\beta$.  At the end of the stream,
\refalgo{Streaming-Greedy} outputs $S$.

The overall strategy is similar to previous algorithms developed for
matchings \cite{fkmsz-gpssm-05,m-fgmds-05} and intersections of
matroids \cite{abv-11} when $f$ is modular, and generalized by
\cite{ck-smms-14} to monotone submodular functions.  There are two
main differences. One is the use of the additive threshold
$\alpha$. The second is the use of the incremental value $\nu$. By
using incremental value, the value of an element $e \in S$ is not
fixed statically when $e$ is first added to $S$, and increases over
time as other elements are dropped from $S$. These two seemingly minor
modifications are crucial to the eventual algorithms for non-monotone
functions.

% \begin{figure}[bt]
%   \begin{framed}
%     \begin{pseudocode}
%       \begin{routine}{Exchange-Candidates}{$S$,$e$}
%         $C \gets \emptyset$\\
%         for $\ell = 1,\dots,m$ \\
%         \> if $e \in \groundset_\mindex$ and $(S + e) \cap
%         \groundset_\mindex
%         \notin \independents_\mindex$ then\\
%         \> \> $S_\ell = S \cap \groundset_\ell$\\
%         \> \> $X \gets \setof{s \in S_\ell: (S_\ell - s + e) \in
%         \independents_\ell}$\\
%         \> \> \commentcode{$X + e$ is a circuit}\\
%         \> \> $c_{\mindex} \gets \argmin_{x \in X} \incvalueof{f}{S}{x}$\\
%         \> \> $C \gets C + c_{\mindex}$ \\
%         \> end if \\
%         end for\\
%         return $C$
%       \end{routine}
%     \end{pseudocode}
%   \end{framed}
%   \caption{\refalgo{Exchange-Candidates}{$S$,$e$} takes an
%   independent set $S$ and element $e$ and returns a subset $C
%   \subseteq S$ such that $S \setminus C + e$ is independent.}
% \end{figure}
% \begin{remark}

We remark that \refalgo{Streaming-Greedy} also fits the online
preemptive model.
% \end{remark}

\newcommand{\replacement}{e(\deleted)} %

\paragraph{Outline of the analysis:} Let $T \in \independents$ be some
fixed feasible set (we can think of $T$ as an optimum set).  In the
offline analysis of the standard greedy algorithm one can show that
$f(S \cup T) \le (p+1) f(S)$, where $S$ is the output of greedy; for
the monotone case this implies that $f(S) \ge f(T)/(p+1)$.  The
analysis here hinges on the fact that each element of $T \setminus S$
is available to greedy when it chooses each element.  In the streaming
setting, this is no longer feasible and hence the need to remove
elements in favor of new high-value elements.  To relate $\tilde{S}$,
the final output, to $T$, we consider $U$, the set of all elements
ever added to $S$. The analysis proceeds in two steps.

First, we upper bound $f(U)$ by $f(\tilde{S})$ as % follows:
\begin{math}
  % \begin{align*}
  f(U) %
  \leq %
  (1+ \frac{1}{\beta}) \cdot f(\finalset) - \frac{\alpha}{\beta}
  \sizeof{U}.
  % \end{align*}
\end{math}
Second, we upper bound $f(T \cup \takens)$ as %:
\begin{math}
  % \begin{align*}
  f(T \cup \takens) \leq k \alpha %
  + %
  \frac{(1 + \beta)^2}{\beta} \cdot p \cdot f(\finalset).
  % \end{align*}
\end{math}
For $\alpha = 0$, we obtain $f(T \cup \takens) \le \frac{(1 +
  \beta)^2}{\beta} \cdot p \cdot f(\finalset)$, which yields $f(T) \le
4p f(\finalset)$ when $f$ is monotone (for $\beta = 1$); this gives
the same bound as \cite{ck-smms-14}.  The crucial difference is that
we are able to prove an upper bound on the size of $U$, namely,
$|\takens| \le \opt/\alpha$; hence, if we choose the threshold
$\alpha$ to be $c \opt/k$ for some parameter $c$ we have $|U| \le
k/c$. This will play a critical role in analyzing the non-monotone
case in the subsequent sections that use $\refalgo{Streaming-Greedy}$
as a black box. The upper bound on $|U|$ is achieved by the definition
of $\incvalue$ and the threshold $\alpha$; we stress that this is not
as obvious as it may seem because the function $f$ can be non-monotone
and the marginal values can be negative.

% The full analysis is provided in
% \refsection{analysis-streaming-greedy}.

\iffalse In the preceding literature for similar offline algorithms, a
competing independent set $T \in \independents$ is typically compared
directly to the output set $\finalset$. In the offline case, the
elements of $T$ and the elements of $\finalset$ are evaluated in the
same view, and the algorithm's decision to take an element in
$\finalset$ over an element in $T$ speaks to their relative
values. Such direct comparisons between $T$ and $\finalset$ are
unavailable in the streaming setting, where an element in $T$ may be
rejected long before any element in $\tilde{S}$ has appeared, with no
direct comparison ever made.

Instead, we try to relate $f(T)$ to $f(S)$ by way of the set
$U$. First, in \refsection{takens-vs-final-set}, we bound $f(U) =
\tilde{O}(f(\finalset))$ (hiding dependencies on $\beta$). Then, in
\refsection{takens-and-competition-vs-takens}, we bound $f(U \cup T) =
\tilde{O}(f (\finalset) + f (U))$ (hiding dependencies on $\alpha$,
$\beta$, and the rank of $\pmatchoid$). Together, these two
inequalities bound $f(U \cup T)$ by $f(\finalset)$. If $f$ is
monotone, then $f(T) \leq f(U \cup T)$, so in
\refsection{monotone-competition-vs-final-set}, we combine the two
relations to bound $f(T)$ by $f(\finalset)$.  \fi

% \section{Analysis of \refalgo{Streaming-Greedy}}
% \labelsection{analysis-streaming-greedy}
\paragraph{Some notation for the analysis:}
\begin{itemize}
\item $\finalset$ denotes the final set returned by
  \refalgo{Streaming-Greedy}.
\item For each element $e \in \groundset$, $\setbefore{e}$ denotes the
  set held by $S$ just before $e$ is processed, and $\setafter{e}$ the
  set held by $S$ just after $e$ is processed. Note that if $e$ is
  rejected, then $\setbefore{e} = \setafter{e}$.
\item $\takens$ denotes the set of all elements added to $S$ at any
  point in the stream. Note that $\takens = \bigcup_{e \in \groundset}
  \setafter{e}$.
\item For $e \in \groundset$, $\candidatesfor{e} =
  \refalgo{Exchange-Candidates}{$\setbefore{e}$,$e$} \subseteq
  \setbefore{e}$ denotes the set of elements that
  $\refalgo{Streaming-Greedy}$ considers exchanging for $e$. Observe
  that $\setof{\candidatesfor{\taken},\taken \in \takens}$ forms a
  partition of $\takens \setminus \finalset$.
\item For $e \in \groundset$, $\gainof{e} = f(\setafter{e}) -
  f(\setbefore{e})$ denotes the {\em gain} from processing $e$. Note
  that $\gainof{e} = 0$ for all $e \in \groundset \setminus \takens$,
  and $\sum_{e \in \groundset} \gainof{e} = f(\finalset)$.
\end{itemize}

\subsection{Relating $f(\takens)$ to $f(\finalset)$}
\labelsection{takens-vs-final-set}
\newcommand{\exitvalueof}{\chi\parof}

When \refalgo{Streaming-Greedy} adds an element $e$ to $S$, it only
compares the marginal $f_S(e)$ to the incremental values in its
exchange candidates $C$, and does not directly evaluate the gain $f(S
\setminus C + e) - f(S)$ realized by the exchange. The first lemma
derives a lower bound for this gain.
\begin{lemma}%
  \labellemma{gain-over-incremental-value-lost}
  Let $e \in \takens$ be added to $S$ when processed by
  \refalgo{Streaming-Greedy}. Then
  \begin{align*}
    \gainof{e} %
    \geq %
    \alpha + \beta \sum_{c \in \candidatesfor{e}}
    \incvalueof{f}{\setbefore{e}}{c}
  \end{align*}
\end{lemma}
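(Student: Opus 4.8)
The plan is to start from the acceptance test that caused \refalgo{Streaming-Greedy} to add $e$ and convert the lower bound it gives on the marginal $f_{\setbefore{e}}(e)$ into a lower bound on the \emph{realized} gain $\gainof{e} = f(\setafter{e}) - f(\setbefore{e})$. Since $e$ was added, $\setafter{e} = \setbefore{e} \setminus \candidatesfor{e} + e$, so I would decompose
\[
  \gainof{e} = f_{\setbefore{e}}(e) - \bigl( f(\setbefore{e} + e) - f(\setafter{e}) \bigr).
\]
The acceptance condition, instantiated with $S = \setbefore{e}$ and $C = \candidatesfor{e}$, gives $f_{\setbefore{e}}(e) \ge \alpha + (1+\beta)\sum_{c \in \candidatesfor{e}} \incvalueof{f}{\setbefore{e}}{c}$. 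Hence it suffices to prove the ``cost of reinsertion'' bound
\[
  f(\setbefore{e} + e) - f(\setafter{e}) \;\le\; \sum_{c \in \candidatesfor{e}} \incvalueof{f}{\setbefore{e}}{c};
\]
subtracting this from the acceptance bound leaves exactly $\alpha + \beta \sum_{c \in \candidatesfor{e}} \incvalueof{f}{\setbefore{e}}{c}$, which is the claim.

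For the reinsertion bound I would work with incremental values on $\groundset$ ordered by stream arrival. Since $\setafter{e} \subseteq \setbefore{e} + e$ with $(\setbefore{e}+e) \setminus \setafter{e} = \candidatesfor{e}$, \reflemma{sum-incremental-values} gives
\[
  f(\setbefore{e}+e) - f(\setafter{e}) = \sum_{x \in \setbefore{e}+e} \incvalueof{f}{\setbefore{e}+e}{x} - \sum_{x \in \setafter{e}} \incvalueof{f}{\setafter{e}}{x}.
\]
Applying \reflemma{decreasing-incremental-values} to $\setafter{e} \subseteq \setbefore{e}+e$ lets me replace each $\incvalueof{f}{\setafter{e}}{x}$ in the second sum by the no-larger $\incvalueof{f}{\setbefore{e}+e}{x}$, after which the two sums cancel down to $\sum_{c \in \candidatesfor{e}} \incvalueof{f}{\setbefore{e}+e}{c}$. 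A second application of \reflemma{decreasing-incremental-values}, now to $\setbefore{e} \subseteq \setbefore{e}+e$, bounds each term by $\incvalueof{f}{\setbefore{e}}{c}$, yielding the displayed reinsertion bound.

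An equivalent, more hands-on route avoids \reflemma{sum-incremental-values}: enumerate $\candidatesfor{e} = \{c_1 < \dots < c_m\}$ in stream order, reinsert the $c_i$ into $\setafter{e}$ one at a time in this order so that $f(\setbefore{e}+e) - f(\setafter{e})$ telescopes into a sum of marginals, and observe that at the step reinserting $c_i$ the current set already contains $\setof{s \in \setbefore{e} : s < c_i}$, so submodularity bounds that marginal by $\incvalueof{f}{\setbefore{e}}{c_i}$. I expect the only delicate point is the bookkeeping of reference sets so that \reflemma{decreasing-incremental-values} is applied in the correct direction; there is no sign obstruction despite $f$ being non-monotone, because both \reflemma{sum-incremental-values} and \reflemma{decreasing-incremental-values} hold without any nonnegativity assumption. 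Combining the acceptance bound with the reinsertion bound completes the proof.
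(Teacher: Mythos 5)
Your proposal is correct and follows essentially the same route as the paper: both reduce the claim, via the acceptance test, to showing $\gainof{e} \geq f_{\setbefore{e}}(e) - \sum_{c \in \candidatesfor{e}} \incvalueof{f}{\setbefore{e}}{c}$, and both establish that inequality by one application of submodularity plus the incremental-value lemmas. The only (cosmetic) difference is the intermediate set used in the telescoping: the paper adds and subtracts $f(Z)$ with $Z = \setafter{e} - e$ and invokes \reflemma{marginal-incremental-values-ineq}, whereas you add and subtract $f(\setbefore{e}+e)$ and invoke \reflemma{decreasing-incremental-values} twice.
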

\begin{proof}
  Since $e$ replaced $\candidatesfor{e}$, by design of
  \refalgo{Streaming-Greedy}, we have,
  \begin{align*}
    f_{\setbefore{e}}(e) %
    \geq %
    \alpha + (1+\beta) \sum_{c \in \candidatesfor{e}}
    \incvalueof{f}{\setbefore{e}}{c},
  \end{align*}
  which, after rearranging, gives
  \begin{align*}
    f_{\setbefore{e}}(e) - \sum_{c \in \candidatesfor{e}}
    \incvalueof{f}{\setbefore{e}}{c}%
    \geq %
    \alpha + \beta \sum_{c \in \candidatesfor{e}}
    \incvalueof{f}{\setbefore{e}}{c}.
  \end{align*}
  To prove the lemma, it suffices to show that
  \begin{align*}
    \gainof{e} %
    \geq %
    f_{\setbefore{e}}(e) - \sum_{c \in \candidatesfor{e}}
    \incvalueof{f}{\setbefore{e}}{c}.
  \end{align*}
  Let $Z = \setbefore{e} \setminus \candidatesfor{e} = \setafter{e} -
  e$. Note that $\setafter{e} = Z + e$ and $\setbefore{e} = Z \cup
  \candidatesfor{e}$. We have,
  \begin{align*}
    \gainof{e} %
    &= %
    f(Z + e) - f(Z + \candidatesfor{e}) \\
    &= %
    f_Z(e) - f_Z(\candidatesfor{e}) %
    &\text{by adding and
      subtracting } f(Z), \\
    &\geq %
    f_{\setbefore{e}}(e) - f_Z(\candidatesfor{e}) &\text{by
      submodularity of $f$,} \\
    &= %
    f_{\setbefore{e}}(e) - \sum_{c \in \candidatesfor{e}}
    \incvalueof{f_Z}{\candidatesfor{e}}{c}, %
    &\text{by definition of } \incvalue.
    \\
    &\geq %
    f_{\setbefore{e}}(e) - \sum_{c \in
      \candidatesfor{e}}\incvalueof{f}{\setbefore{e}}{c} %
    &\text{by \reflemma{marginal-incremental-values-ineq} and
      $\setbefore{e} = Z\cup C_e$,}
  \end{align*}
  as desired.
\end{proof}
One basic consequence of \reflemma{gain-over-incremental-value-lost}
is that every element in $\takens$ adds a positive and significant
amount $\alpha$ to the value to $S$. This will be crucial later, when
taking $\alpha$ proportional to $\OPT$ limits the size of $\takens$.
\begin{lemma}%
  \labellemma{size-of-tokens} %
  For all $e \in \takens$, $\gainof{e} \geq \alpha$ and hence
  $\sizeof{\takens} \leq \OPT / \alpha$.
\end{lemma}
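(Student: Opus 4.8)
The plan is to derive the first claim, $\gainof{e} \geq \alpha$, from \reflemma{gain-over-incremental-value-lost} by showing that the exchange candidates $\candidatesfor{e}$ always carry nonnegative incremental value. In fact I would prove the stronger invariant that at every point in the execution of \refalgo{Streaming-Greedy}, every element $c$ of the current set $S$ satisfies $\incvalueof{f}{S}{c} \geq \alpha \geq 0$. Granting this, since $\candidatesfor{e} \subseteq \setbefore{e}$ we get $\sum_{c \in \candidatesfor{e}} \incvalueof{f}{\setbefore{e}}{c} \geq 0$, and \reflemma{gain-over-incremental-value-lost} yields $\gainof{e} \geq \alpha + \beta \cdot 0 = \alpha$. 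The size bound then follows by summation: $\gainof{e} = 0$ for $e \notin \takens$ and $\sum_{e \in \groundset} \gainof{e} = f(\finalset)$, so since $\finalset \in \independents$ we have $\OPT \geq f(\finalset) = \sum_{e \in \takens} \gainof{e} \geq \alpha \sizeof{\takens}$, i.e.\ $\sizeof{\takens} \leq \OPT/\alpha$.

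It remains to establish the invariant, which I would do by induction on the number of exchanges made so far, the base case $S = \emptyset$ being vacuous. Suppose the invariant holds for $\setbefore{e}$ and \refalgo{Streaming-Greedy} adds $e$, so that $\setafter{e} = \setbefore{e} \setminus \candidatesfor{e} + e$. Consider any $c \in \setafter{e}$. If $c = e$: since $e$ is the most recently seen element, the elements of $\setafter{e}$ preceding $e$ are exactly $\setbefore{e} \setminus \candidatesfor{e}$, so $\incvalueof{f}{\setafter{e}}{e} = f_{\setbefore{e} \setminus \candidatesfor{e}}(e) \geq f_{\setbefore{e}}(e)$ by submodularity; combining the exchange test with the induction hypothesis (each term $\incvalueof{f}{\setbefore{e}}{c'} \geq 0$) gives $f_{\setbefore{e}}(e) \geq \alpha + (1+\beta)\sum_{c' \in \candidatesfor{e}} \incvalueof{f}{\setbefore{e}}{c'} \geq \alpha$. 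If instead $c \in \setbefore{e} \setminus \candidatesfor{e}$: the elements of $\setafter{e}$ preceding $c$ are the elements of $\setbefore{e}$ preceding $c$ with $\candidatesfor{e}$ deleted ($e$ does not precede $c$), hence a subset of them, so by submodularity $\incvalueof{f}{\setafter{e}}{c} \geq \incvalueof{f}{\setbefore{e}}{c} \geq \alpha$. Thus the invariant survives an exchange; rejected elements leave $S$ unchanged, so it holds throughout.

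The only delicate point is the second case of the inductive step, and it is exactly where non-monotonicity is handled: marginal values of elements of $S$ can be negative, so it is not a priori clear that deleting $\candidatesfor{e}$ does not leave behind an element of negative value. What saves us is the use of the \emph{incremental} value $\incvalue$ rather than a weight frozen at insertion time: deleting candidates only shrinks the ``prefix'' set $\setof{s \in S : s < c}$, so by submodularity the incremental value of each surviving $c$ can only go up. With the invariant in hand, the rest is the bookkeeping above: \reflemma{gain-over-incremental-value-lost} for the per-element gain and the telescoping identity $\sum_{e \in \groundset} \gainof{e} = f(\finalset) \leq \OPT$ for the sum.
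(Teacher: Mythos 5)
Your proof is correct and follows essentially the same route as the paper's: both establish the invariant that every element currently in $S$ has nonnegative incremental value (you prove the marginally stronger bound $\geq \alpha$), using the observation that deletions only shrink the prefix $\setof{s \in S : s < c}$ so incremental values only increase, and then conclude via \reflemma{gain-over-incremental-value-lost} and the telescoping identity $\sum_{e} \gainof{e} = f(\finalset) \leq \OPT$. Your write-up just spells out the induction in more detail than the paper does.
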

\begin{proof}
  We claim that at any point in the algorithm, $\incvalueof{f}{S}{e}
  \geq 0$ for all $e \in S$, from which the lemma follows
  \reflemma{gain-over-incremental-value-lost} immediately.

  When an element $e \in \takens$ is added to $S$, it has incremental
  value
  \begin{align*}
    \incvalueof{f}{\setafter{e}}{e} %
    = %
    f_{\setafter{e} - e}(e) %
    \geq %
    f_{\setbefore{e}}(e) %
    \geq %
    \alpha + (1 + \beta) \sum_{c \in \candidatesfor{e}}
    \incvalueof{f}{S}{c}.
  \end{align*}
  As the algorithm continues, elements preceding $e$ in $S$ may be
  deleted while elements after $e$ are added, so
  $\incvalueof{f}{S}{e}$ can only increase with time.
\end{proof}
Returning to the original task of bounding $f(U)$, the difference
$\takens \setminus \finalset$ is the set of deleted elements, and the
only handle on these elements is their incremental value at the point
of deletion. For $\deleted \in \deleteds$, let $\replacement$ be the
element that $\deleted$ was exchanged for; that is, $\replacement > d$
and $d \in \setbefore{\replacement} \setminus \setafter{\replacement}
= \candidatesfor{\replacement}$. For deleted elements $\deleted \in
\deleteds$, the \newterm{exit value} $\exitvalueof{\deleted}$ of
$\deleted$ is the incremental value of $\deleted$ evaluated when
$\deleted$ is removed from $S$, defined formally as
\begin{align*}
  \exitvalueof{\deleted} %
  &= %
  \incvalueof{f}{\setbefore{\replacement}}{\deleted}.
\end{align*}
Here we bound the sum of exit values of $\deleteds$.
\begin{lemma}
  \labellemma{sum-exit-values}
  \begin{align*}
    \sum_{\deleted \in \deleteds} \exitvalueof{\deleted} %
    \leq %
    \frac{1}{\beta} \cdot \parof{f(\finalset) - \alpha \sizeof{U}}.
  \end{align*}
\end{lemma}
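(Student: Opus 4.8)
The plan is to derive the bound by summing \reflemma{gain-over-incremental-value-lost} over every element that \refalgo{Streaming-Greedy} ever adds to $S$, i.e. over all $e \in \takens$, and then reinterpreting the two sides. On the left side, since $\gainof{e} = 0$ whenever $e \notin \takens$ and $\sum_{e \in \groundset} \gainof{e} = f(\finalset)$ (the gains telescope to the value of the final set, as noted above), we get $\sum_{e \in \takens} \gainof{e} = f(\finalset)$. On the right side, \reflemma{gain-over-incremental-value-lost} contributes the term $\alpha$ once for each $e \in \takens$, giving $\alpha \sizeof{\takens}$, plus the term $\beta \sum_{c \in \candidatesfor{e}} \incvalueof{f}{\setbefore{e}}{c}$ for each $e$.

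The crux is to recognize the double sum $\sum_{e \in \takens} \sum_{c \in \candidatesfor{e}} \incvalueof{f}{\setbefore{e}}{c}$ as exactly $\sum_{\deleted \in \deleteds} \exitvalueof{\deleted}$. This uses two facts already in place: first, the sets $\candidatesfor{\taken}$ over $\taken \in \takens$ partition $\deleteds$, so the double sum visits each deleted element exactly once; second, if $c \in \candidatesfor{e}$ then $e$ is precisely the element for which $c$ was exchanged (a deleted element is removed only once), so $e = \replacement$ with $\deleted = c$, whence $\incvalueof{f}{\setbefore{e}}{c} = \incvalueof{f}{\setbefore{\replacement}}{\deleted} = \exitvalueof{\deleted}$ by the definition of exit value. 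Combining the two sides gives $f(\finalset) \geq \alpha \sizeof{\takens} + \beta \sum_{\deleted \in \deleteds} \exitvalueof{\deleted}$; dividing by $\beta$ and rearranging yields the claimed inequality.

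I expect the only genuine obstacle to be this bookkeeping identification: verifying that the instance of the incremental value appearing in \reflemma{gain-over-incremental-value-lost} for a pair $(e,c)$ with $c \in \candidatesfor{e}$ coincides with the exit value $\exitvalueof{\deleted}$ for $\deleted = c$ — i.e., that $\setbefore{e}$ is exactly $\setbefore{\replacement}$ — and that the partition property is applied correctly so that no deleted element is double-counted or omitted. Once that is pinned down, everything else is a direct substitution, and no further appeal to submodularity is needed beyond what \reflemma{gain-over-incremental-value-lost} already packages.
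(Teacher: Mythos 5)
Your proposal is correct and is essentially the paper's own argument: both rely on the partition $\setof{\candidatesfor{\taken} \where \taken \in \takens}$ of $\deleteds$ together with \reflemma{gain-over-incremental-value-lost}, and the identification $\incvalueof{f}{\setbefore{e}}{c} = \exitvalueof{c}$ for $c \in \candidatesfor{e}$ is exactly the definition of exit value, used implicitly in the paper's first equality. The only difference is presentational — you sum the lemma over $\takens$ and rearrange, while the paper starts from $\sum_{\deleted} \exitvalueof{\deleted}$ and bounds it term by term — so the proofs coincide.
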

\begin{proof}
  Indeed,
  \begin{align*}
    \sum_{\deleted \in \deleteds} \exitvalueof{\deleted} %
    &= %
    \sum_{\taken \in \takens} \sum_{\deleted \in
      \candidatesfor{\taken}} \exitvalueof{d} %
    &\text{since } \setof{\candidatesfor{\taken} \where \taken \in
      \takens} \text{ partitions } \deleteds,\\
    &\leq %
    \sum_{\taken \in \takens} \frac{1}{\beta}
    \cdot \parof{\gainof{\taken} -\alpha}%
    &\text{by \reflemma{gain-over-incremental-value-lost},}\\
    &= %
    \frac{1}{\beta} \cdot \parof{f(\finalset) - \alpha \sizeof{U}}
  \end{align*}
\end{proof}
Now we bound $f(U)$.
\begin{lemma}
  \labellemma{takens-bound}
  \begin{align*}
    f(U) %
    \leq %
    \parof{1 + \frac{1}{\beta}} \cdot f(\finalset) %
    - \frac{\alpha}{\beta} \sizeof{U}.
  \end{align*}
\end{lemma}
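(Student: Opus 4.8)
The plan is to expand $f(\takens)$ via incremental values and split according to which elements survive to the end of the stream. I would order $\takens$ by order of appearance (a suborder of the stream order on $\groundset$) and apply \reflemma{sum-incremental-values} to get $f(\takens) = \sum_{e \in \takens} \incvalueof{f}{\takens}{e}$, then partition $\takens = \finalset \cup \deleteds$ and bound the two parts separately.

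For the surviving part $\finalset$: since $\finalset \subseteq \takens$, \reflemma{decreasing-incremental-values} gives $\incvalueof{f}{\takens}{e} \leq \incvalueof{f}{\finalset}{e}$ for each $e \in \finalset$, and summing and reapplying \reflemma{sum-incremental-values}, now to $\finalset$, yields $\sum_{e \in \finalset} \incvalueof{f}{\takens}{e} \leq f(\finalset)$. For the deleted part: for $\deleted \in \deleteds$, let $\replacement$ be the element it was exchanged for; since every element ever held by $S$ was added to $S$ at some point we have $S \subseteq \takens$ at all times, and in particular $\setbefore{\replacement} \subseteq \takens$, so \reflemma{decreasing-incremental-values} gives $\incvalueof{f}{\takens}{\deleted} \leq \incvalueof{f}{\setbefore{\replacement}}{\deleted} = \exitvalueof{\deleted}$. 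Therefore $\sum_{\deleted \in \deleteds} \incvalueof{f}{\takens}{\deleted} \leq \sum_{\deleted \in \deleteds} \exitvalueof{\deleted}$, which \reflemma{sum-exit-values} bounds by $\frac{1}{\beta}\parof{f(\finalset) - \alpha \sizeof{\takens}}$. Adding the two estimates gives
\[ f(\takens) \leq f(\finalset) + \frac{1}{\beta}\parof{f(\finalset) - \alpha \sizeof{\takens}} = \parof{1 + \frac{1}{\beta}} f(\finalset) - \frac{\alpha}{\beta} \sizeof{\takens}, \]
as desired.

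The step needing the most care --- really a bookkeeping point rather than a genuine obstacle --- is justifying the containments $\finalset \subseteq \takens$ and $\setbefore{\replacement} \subseteq \takens$ and checking that every incremental value in sight is computed with respect to the same (stream) order, so that \reflemma{decreasing-incremental-values} applies. Both containments are immediate from $\takens = \bigcup_{e \in \groundset} \setafter{e}$, and order consistency is automatic since every subset of $\groundset$ inherits the stream order. Everything else is a direct combination of the three preceding lemmas, so no further difficulty is anticipated.
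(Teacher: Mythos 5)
Your proof is correct and takes essentially the same route as the paper's: expand via incremental values, dominate each deleted element's incremental value by its exit value using \reflemma{decreasing-incremental-values} (valid since $\setbefore{\replacement} \subseteq U$), and close with \reflemma{sum-exit-values}. The only cosmetic difference is that you split $\sum_{e \in U} \incvalueof{f}{U}{e}$ into the contributions of $\finalset$ and of $\deleteds$, whereas the paper writes $f(U) - f(\finalset) = f_{\finalset}(U)$ and expands that instead.
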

\begin{proof}
  Recall, for each element $\deleted \in U \setminus \finalset$ ,
  $\replacement$ denotes the element added in exchange of
  $\deleted$. We have,
  \begin{align*}
    f(U) - f(\finalset)%
    &= %
    f_{\finalset}(U) = \sum_{\deleted \in U \setminus \finalset} %
    \incvalueof{f_{\finalset}}{U}{\deleted}
    &\text{by \reflemma{sum-incremental-values},} \\
    &\leq %
    \sum_{\deleted \in U \setminus \finalset} \exitvalueof{\deleted} %
    &\text{by \reflemma{decreasing-incremental-values},}\\
    &\leq %
    \frac{1}{\beta} \cdot f(\finalset) -
    \frac{\alpha}{\beta}\sizeof{U}%
    &\text{by \reflemma{sum-exit-values}.}
  \end{align*}
\end{proof}
\begin{remark}
  The preceding lemmas relating $f(\takens)$ and $f(\finalset)$ do not
  rely on the structure of $\defpmatchoid$.
\end{remark}
\subsection{Upper bounding $f(U \cup T)$}
\labelsection{takens-and-competition-vs-takens} %
Let $T \in \independents$ be any feasible solution. The goal is to
upper bound $f(U \cup T)$. Here we use the fact that $\independents$
is a $p$-matchoid to frame an exchange argument between $T$ and $U$.

\begin{lemma}%
  \labellemma{nn-submodular-streaming-exchange-lemma}
  Let $T \in \independents$ be a feasible solution \emph{disjoint}
  from $\takens$. There exists a mapping $\varphi: T \to
  \subsetsof{\takens}$ such that
  \begin{results}
  \item Every $s \in \finalset$ appears in the set $\varphi(t)$ for at
    most $p$ choices of $t \in T$.
  \item Every $\deleted \in \deleteds$ appears in the set $\varphi(t)$
    for at most $(p-1)$ choices of $t \in T$.
  \item For each $t \in T$,
    \begin{align}%
      \labelequation{matchoid-exchange-inequality}%
      \sum_{c \in \candidatesfor{t}}
      \incvalueof{f}{\setbefore{t}}{c} %
      &\leq %
      \sum_{\deleted \in \varphi(t) \setminus \finalset}
      \exitvalueof{\deleted}%
      +%
      \sum_{s \in \varphi(t) \cap
        \finalset}\incvalueof{f}{\finalset}{s}.
    \end{align}
  \end{results}
\end{lemma}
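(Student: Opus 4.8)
\noindent\emph{Proof plan.}\quad The plan is to prove the lemma by an exchange argument carried out one matroid at a time, charging each $t \in T$ to a set of elements of $\takens$ whose ``value'' --- exit value if the element is deleted, final incremental value if it survives --- dominates the left-hand side $\sum_{c \in \candidatesfor{t}} \incvalueof{f}{\setbefore{t}}{c}$ of the displayed inequality. Since $T$ is disjoint from $\takens$, every $t \in T$ is rejected, so $\setafter{t} = \setbefore{t}$. Write $L_t$ for the set of indices $\mindex$ with $t \in \groundsetI$ and $(\setbefore{t} + t) \cap \groundsetI \notin \independentsI$, so that $|L_t| \le p$ by the $p$-matchoid property and $\candidatesfor{t} = \setof{c_\mindex(t) : \mindex \in L_t}$, where $c_\mindex(t) = \argmin_{x \in X_\mindex(t)} \incvalueof{f}{\setbefore{t}}{x}$ and $X_\mindex(t) + t$ is the circuit in $(\setbefore{t} \cap \groundsetI) + t$, so $X_\mindex(t) \subseteq \setbefore{t} \cap \groundsetI \subseteq \takens$. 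I would build $\varphi$ in the form $\varphi(t) = \setof{\psi_\mindex(t) : \mindex \in L_t}$, where each $\psi_\mindex$ sends a $t$ with $\mindex \in L_t$ to one element of $\takens \cap \groundsetI$; then the occurrence bounds (1)--(2) follow from injectivity of each $\psi_\mindex$ together with the degree bound $|\setof{\mindex : w \in \groundsetI}| \le p$, and the value inequality (3) follows by summing a per-matroid bound over $\mindex \in L_t$.

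The value side needs no matroid structure. The crucial fact --- already implicit in the proof of \reflemma{size-of-tokens} --- is that an element of $S$ that stays in $S$ has only increasing incremental value, since elements inserted later come later in the stream order (and so never enter its predecessor set) while predecessors can only leave. Hence for $x \in \setbefore{t}$ that survives, $\incvalueof{f}{\setbefore{t}}{x} \le \incvalueof{f}{\finalset}{x}$, and for $x \in \setbefore{t}$ later deleted, $\incvalueof{f}{\setbefore{t}}{x} \le \exitvalueof{x}$. Moreover, by the threshold test of \refalgo{Streaming-Greedy} together with the nonnegativity of incremental values established inside \reflemma{size-of-tokens}, a deleted element $\deleted$ has value at most $(1+\beta)^{-1}$ times that of the element $e(\deleted)$ that replaced it; so chasing an element forward along its chain of successive replacements reaches some $z \in \takens$ (surviving or deleted) with $\incvalueof{f}{\setbefore{t}}{c_\mindex(t)} \le \incvalueof{f}{\setbefore{t}}{x} \le (\text{value of } z)$ for any $x \in X_\mindex(t)$ whose chain passes through $z$. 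Thus for the value inequality it suffices that each $\psi_\mindex(t)$ lie on a replacement chain that starts inside $X_\mindex(t)$; the only genuine constraints on $\psi_\mindex$ are injectivity and the sharpening of the multiplicity bound for deleted elements.

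For injectivity I would run a Hall-type matching argument inside each $\matroidI$. Taking as the candidate set of $t$ (among $t$ with $\mindex \in L_t$) the elements reachable from $X_\mindex(t)$ along replacement chains, the Hall condition becomes a rank inequality in $\matroidI$: for every $A \subseteq \setof{t : \mindex \in L_t}$ the union of the candidate sets has $\matroidI$-rank --- hence cardinality --- at least $|A|$, because $X_\mindex(t) + t$ is a circuit of $\matroidI$ (so $t$ lies in the $\matroidI$-span of $X_\mindex(t)$, and therefore of its candidate set) while $A \subseteq T \cap \groundsetI$ is independent in $\matroidI$. A matching yields an injective $\psi_\mindex$. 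To improve $p$ to $p-1$ for a deleted $\deleted$, fix a ``home'' matroid $h(\deleted)$ --- one in which $\deleted$ was the chosen candidate when it was removed, i.e.\ $\deleted = c_{h(\deleted)}(e(\deleted))$ --- and in matroid $h(\deleted)$ chase the chain one step further, so that $\psi_{h(\deleted)}$ never outputs $\deleted$; then $\deleted$ appears in $\varphi(t)$ only through the $\le p-1$ matroids other than $h(\deleted)$ that contain it, whereas elements of $\finalset$ are homed nowhere and may be charged through all $\le p$ of their matroids, giving (1).

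The hard part will be making all of this hold \emph{simultaneously} and consistently: the assignment must be injective within each matroid, must give images that are distinct across the $\le p$ matroids serving a single $t$ (so the right-hand side of (3) covers all $|\candidatesfor{t}|$ left-hand terms), must land only on $\finalset$ or on deleted elements not homed in the charging matroid, and must respect value domination along the chains. In particular, choosing the candidate sets so that the Hall/rank argument survives the extra requirement of ending in $\finalset$ (or in an element homed elsewhere) --- i.e.\ showing $t$ still lies in the $\matroidI$-span of the admissible endpoints and not merely of the raw circuit $X_\mindex(t)$ --- and then reconciling the per-matroid matchings into one global assignment, is the delicate and calculation-heavy step; everything else is routine bookkeeping with \reflemma{sum-incremental-values}, \reflemma{decreasing-incremental-values}, and \reflemma{gain-over-incremental-value-lost}.
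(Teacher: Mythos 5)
Your plan has the same architecture as the paper's proof: per-matroid charging along replacement chains, a Hall-type injection inside each $\matroidI$, a union over matroids, and a ``home matroid'' for each deleted element to sharpen $p$ to $p-1$. But the central step is left unexecuted, and the reachability structure you propose is too narrow to carry it out. If you chase only the linear chain $d \to e(d) \to e(e(d)) \to \cdots$, the set of chain endpoints reachable from $X_{t,\mindex}$ need \emph{not} span $t$ in $\matroidI$: when $d$ is deleted upon the arrival of $e = e(d)$, the single element $e$ does not span $d$ (take a graphic matroid where $\{x_1,x_2,t\}$ is a triangle and $x_1$ is later swapped out for an edge elsewhere in the graph). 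So the rank inequality you want for Hall's condition is simply false for linear chains. The fix --- and this is what the paper does --- is to fan out: when $d = c_{e,\mindex}$ is deleted, charge $d$ forward to the \emph{entire} set $Y_{d,\mindex} = X_{e,\mindex} - d + e$, which does span $d$ in $\matroidI$. This fan-out is value-safe precisely because $d$ was chosen as the argmin of incremental value over $X_{e,\mindex}$ (so every other circuit element dominates $\exitvalueof{d}$) and because the threshold test bounds $\incvalueof{f}{\setbefore{e}}{d}$ by $f_{\setbefore{e}}(e)$. Iterating gives a DAG on $\groundsetI$ whose sources contain $T_{\mindex}$, whose internal vertices are exactly the elements $D_{\mindex}$ deleted for $\matroidI$, and in which every non-sink's out-neighborhood spans it. The paper then invokes \reflemma{graphical-span-matching}: by \reflemma{transitivity-span} one can contract an internal vertex while preserving the span property, and induction down to the bipartite case plus Hall's theorem yields an injection from $T_{\mindex}$ into reachable \emph{sinks}, i.e.\ into $(\takens \cap \groundsetI)\setminus D_{\mindex}$ --- which simultaneously delivers injectivity, value domination along the path, and avoidance of the home matroid's deleted elements. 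That packaged lemma is the missing idea in your plan.

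Two of the difficulties you flag as ``delicate'' are in fact non-issues once the above is in place. You do not need the images to be distinct across the $\leq p$ matroids serving a single $t$, nor any global reconciliation of the per-matroid matchings: inequality \refequation{matchoid-exchange-inequality} is obtained by summing one per-matroid bound per $\mindex$ with $t \in T_{\mindex}$, and the multiplicity bounds only require that each element of $\takens$ be charged at most once per matroid containing it, which is exactly per-matroid injectivity plus the $p$-matchoid degree bound (interpret $\varphi(t)$ as the indexed family $\setof{\varphi_{\mindex}(t)}$ if the images happen to collide). Likewise the $(p-1)$ bound needs no extra chain-chasing step: since every $d \in \deleteds$ lies in some $D_{\mindex}$ and $\varphi_{\mindex}$ lands only on sinks of the $\mindex$-th DAG, $d$ is automatically never used in its home matroid.
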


\begin{proof}
  The high level strategy is as follows. For each matroid
  $\defmatroidI$ in the $p$-matchoid $\pmatchoid$, we construct a
  directed acyclic graph $\graph_{\mindex}$ on $\groundsetI$, where a
  subset of $T$ forms the source vertices and arrows preserve
  \refequation[inequality]{matchoid-exchange-inequality}. Applying
  \reflemma{graphical-span-matching} we get an injection from a subset
  of $T$ into $U \cap \groundsetI$. With care, the union of these
  injections will produce the mapping we seek.

  Let us review and annotate the subroutine
  $\refalgo{Exchange-Candidates}{$S$,$e$}$. For each matroid
  $\defmatroidI$ in which $S$ spans $e$ (i.e., $(S + e) \cap
  \groundsetI \notin \independentsI$), we assemble a subset
  $X_{e,\mindex} \subseteq S \cap \groundsetI$ that spans $e$ in
  $\matroidI$. Of these, we choose the element $c_{e,\mindex} \in
  X_{e,\mindex}$ with the smallest incremental value with respect to
  $S$.

  Fix a matroid $\defmatroidI$. Let
  \begin{align*}
    T_{\mindex} %
    = %
    \setof{%
      t \in T \cap \groundsetI %
      \suchthat %
      (\setbefore{t} + t) \cap \groundsetI \notin \independentsI %
    }
  \end{align*}
  be the set of elements in $T$ obstructed by $\matroidI$, so to
  speak. For each $x \in X_{t, \mindex}$, add a directed edge
  $\arc{t}{x}$ from $t$ to $x$. Observe that for all $t \in
  T_{\mindex}$, $\outneighborhoodof[\graph_{\mindex}]{t} = X_{t,
    \mindex}$ spans $t$.

  Let
  \begin{align*}
    D_{\mindex} %
    = %
    \setof{\deleted \suchthat d = c_{e,\mindex} \text{ for some } e
      \in U}
  \end{align*}
  be the elements deleted specifically for $\matroidI$.  Observe that
  $D_{\mindex} \subseteq (\deleteds) \cap \groundsetI$. For $\deleted
  \in D_{\mindex} $, the set $Y_{\deleted,\mindex} =
  X_{\replacement,\mindex} - \deleted + \replacement$ spans
  $\deleted$, and for all $y \in Y_{\deleted,\mindex}$,
  \begin{align*}
    \exitvalueof{\deleted} %
    = %
    \incvalueof{f}{\setbefore{\replacement}}{\deleted} %
    \leq %
    \incvalueof{f}{\setbefore{\replacement}}{y} %
    \leq %
    \begin{cases}
      \exitvalueof{y} & \text{if } y \in \deleteds, \\
      \incvalueof{f}{\finalset}{y} & \text{if }y \in \finalset.
    \end{cases}
  \end{align*}
  For each $y \in Y_{\deleted,\mindex}$, add the directed edge
  $\arc{\deleted}{y}$ from $\deleted$ to $y$. Observe that for all
  $\deleted \in D_{\mindex}$,
  $\outneighborhoodof[\graph_{\mindex}]{\deleted} =
  Y_{\deleted,\mindex}$ spans $\deleted$.

  Clearly, $\graph_{\ell}$ is a directed acyclic graph. The elements
  of $T_{\mindex}$ are sources in $\graph_\ell$, and the elements of
  $D_{\mindex}$ are never sinks. By
  \reflemma{graphical-span-matching}, there exists an injection
  $\varphi_{\mindex}$ from $T_{\mindex}$ to $(\takens \cap
  \groundsetI) \setminus D_{\mindex}$ such that for each $t \in
  T_{\mindex}$, there is a path in $\graph_{\mindex}$ from $t$ to
  $\varphi_{\mindex}(t)$. If we write out the path $t \to x_1 \to
  \cdots \to x_{z} \to \varphi_{\mindex}(t)$, we have $x_1,\dots,x_{z}
  \in \deleteds$ and
  \begin{align*}
    \incvalueof{f}{\setbefore{t}}{c_{t,\mindex}}%
    \leq \incvalueof{f}{\setbefore{t}}{x_1} %
    \leq \exitvalueof{x_1} %
    \leq \cdots \leq \exitvalueof{x_z} %
    \leq %
    \begin{cases}
      \exitvalueof{\varphi_{\ell}(t)} %
      & \text{if } \varphi_{\ell}(t) \in \deleteds, \\
      \incvalueof{f}{\finalset}{\varphi_{\ell}(t)} %
      & \text{if }\varphi_{\ell}(t) \in \finalset.
    \end{cases}
  \end{align*}

  After constructing $\varphi_{\mindex}$ for each matroid $\matroidI$,
  define $\varphi : T \to \subsetsof{\takens}$ by
  \begin{align*}
    \varphi(t) = \bigcup_{\ell: t \in T_{\mindex}}
    \varphi_{\mindex}(t).
  \end{align*}
  For each $t \in T$, we have
  \begin{align*}
    \sum_{c \in \candidatesfor{t}} \incvalueof{f}{\setbefore{t}}{c}%
    \leq %
    \sum_{\deleted \in \varphi(t) \setminus \finalset}
    \exitvalueof{\deleted} %
    + %
    \sum_{s \in \finalset \cap \varphi(t)}
    \incvalueof{f}{\finalset}{s}.
  \end{align*}
  Each $\taken \in \takens$ belongs to at most $p$ matroids, so each
  $\taken \in \takens$ appears in $\varphi(t)$ for at most $p$ values
  of $t \in T$.  Since $\setof{D_{\mindex}}$ covers $\deleteds$, and
  $\varphi_{\mindex}$ avoids $D_{\mindex}$, each $\deleted \in
  \deleteds$ appears in $\varphi(t)$ at most $p - 1$ times.
\end{proof}
\begin{remark}
  A similar exchange lemma is given by Badanidiyuru for the
  intersection of $p$ matroids with modular weights \cite{abv-11}, and
  used implicitly by Chakrabarti and Kale in their extension to
  submodular weights. Here we extend the argument to $p$-matchoids and
  frame it in terms of incremental values.
\end{remark}

Now we bound $f(T \cup \takens)$.
\begin{lemma}
  Let $T \in \independents$ be an independent set. Then
  \begin{align*}
    f(T \cup \takens) \leq k \alpha %
    + %
    \frac{(1 + \beta)^2}{\beta} \cdot p \cdot f(\finalset).
  \end{align*}
\end{lemma}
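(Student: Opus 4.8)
The plan is to reduce to the case where $T$ is disjoint from $\takens$ and then feed the resulting mapping $\varphi$ from \reflemma{nn-submodular-streaming-exchange-lemma} into a telescoping/counting argument. First I would handle the intersection $T \cap \takens$ by splitting $f(T \cup \takens)$ using incremental values: write $T' = T \setminus \takens$, so $f(T \cup \takens) = f(\takens) + \sum_{t \in T'} \incvalueof{f}{\takens \cup T}{t}$ by \reflemma{sum-incremental-values}, and bound $f(\takens)$ via \reflemma{takens-bound} by $\parof{1 + \frac1\beta} f(\finalset)$ (the $-\frac{\alpha}{\beta}\sizeof{U}$ term only helps). So it remains to bound $\sum_{t \in T'} \incvalueof{f}{\takens \cup T}{t}$.

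For each $t \in T'$, I want to compare $\incvalueof{f}{\takens \cup T}{t}$ to the quantity $\sum_{c \in \candidatesfor{t}} \incvalueof{f}{\setbefore{t}}{c}$ that appears on the left side of \refequation{matchoid-exchange-inequality}. The point is the algorithm's rejection criterion: when $t$ was processed and \emph{not} added to $S$, we had $f_{\setbefore{t}}(t) < \alpha + (1+\beta)\sum_{c \in \candidatesfor{t}} \incvalueof{f}{\setbefore{t}}{c}$. Now $\incvalueof{f}{\takens \cup T}{t} \le f_{\setbefore{t}}(t)$ should follow from \reflemma{decreasing-incremental-values} together with \reflemma{marginal-incremental-values-ineq}: the prefix of $\takens \cup T$ before $t$ contains $\setbefore{t}$ (every element of $S$ at the time $t$ arrives was added before $t$, hence precedes $t$ in stream order), and incremental value only decreases as the set grows. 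A minor subtlety: elements of $T'$ that were never even offered (e.g.\ if they would have been added — but they are in $T'$ so they were rejected or the exchange they'd force is something the algorithm declined) — I need the rejection inequality to apply to every $t \in T'$, which it does since $t \notin \takens$ means $t$ was never added. Combining, $\incvalueof{f}{\takens \cup T}{t} \le \alpha + (1+\beta)\sum_{c \in \candidatesfor{t}} \incvalueof{f}{\setbefore{t}}{c}$.

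Summing over $t \in T'$ and using $|T'| \le |T| \le k$ gives $\sum_{t \in T'}\incvalueof{f}{\takens \cup T}{t} \le k\alpha + (1+\beta) \sum_{t \in T'} \sum_{c \in \candidatesfor{t}} \incvalueof{f}{\setbefore{t}}{c}$. Now apply \reflemma{nn-submodular-streaming-exchange-lemma} (to $T'$, which is disjoint from $\takens$) to bound the double sum by $(1+\beta)\bigl(\sum_{t}\sum_{\deleted \in \varphi(t)\setminus\finalset}\exitvalueof{\deleted} + \sum_t \sum_{s \in \varphi(t)\cap\finalset}\incvalueof{f}{\finalset}{s}\bigr)$. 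The mapping's multiplicity bounds — each deleted element hit at most $p-1$ times, each element of $\finalset$ at most $p$ times — let me collapse this to $(1+\beta)\bigl((p-1)\sum_{\deleted \in \deleteds}\exitvalueof{\deleted} + p\sum_{s \in \finalset}\incvalueof{f}{\finalset}{s}\bigr)$. By \reflemma{sum-exit-values} the first sum is at most $\frac1\beta f(\finalset)$ (dropping the $-\alpha|U|$), and by \reflemma{sum-incremental-values} the second sum is exactly $f(\finalset)$, giving $\le (1+\beta)\bigl(\frac{p-1}{\beta} + p\bigr) f(\finalset) \le \frac{(1+\beta)p}{\beta} f(\finalset)$ after simplification (using $p-1 \le p$). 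So $\sum_{t \in T'}\incvalueof{f}{\takens \cup T}{t} \le k\alpha + \frac{(1+\beta)^2 p}{\beta} f(\finalset)$; adding the $\parof{1+\frac1\beta}f(\finalset) \le \frac{(1+\beta)^2 p}{\beta} f(\finalset)$ bound on $f(\takens)$... actually I should be careful to get the constants exactly as stated, which suggests the $f(\takens)$ contribution must be absorbed cleanly — the likely accounting is that $\incvalueof{f}{\takens\cup T}{t}$ comparison and the exchange lemma are combined before invoking \reflemma{takens-bound}, or one uses $p \ge 1$ slack. The main obstacle is exactly this constant-chasing: making sure the $(1+\beta)$ from the rejection criterion, the $(1+\beta)$ from the exchange inequality's coefficient, and the $\frac1\beta$ from exit values multiply out to precisely $\frac{(1+\beta)^2}{\beta}\cdot p$ without losing a factor, and handling the $T \cap \takens$ part so its contribution folds into the same bound rather than adding a separate term.
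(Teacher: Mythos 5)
Your proposal follows essentially the same route as the paper: bound $f_U(T)$ element-by-element via $f_{\setbefore{t}}(t)$, invoke the rejection criterion for each $t \in T' = T\setminus\takens$, feed the resulting double sum into \reflemma{nn-submodular-streaming-exchange-lemma}, and close with \reflemma{sum-exit-values}, \reflemma{sum-incremental-values}, and \reflemma{takens-bound}. Two remarks. First, your opening decomposition is stated as an equality, $f(T\cup\takens) = f(\takens) + \sum_{t\in T'}\incvalueof{f}{\takens\cup T}{t}$, which is not what \reflemma{sum-incremental-values} gives (the incremental values of the elements of $\takens$ inside $\takens\cup T$ need not sum to $f(\takens)$, since their prefixes may contain elements of $T'$); only the ``$\leq$'' direction holds, which is all you need. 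The paper avoids this entirely by writing $f_U(T)\leq\sum_{t\in T'}f_U(t)\leq\sum_{t\in T'}f_{\setbefore{t}}(t)$ directly from submodularity.

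Second, the ``constant-chasing'' obstacle you flag at the end closes exactly as you suspect, and the key is \emph{not} to replace $p-1$ by $p$. Keeping the multiplicities exact, the double sum is at most
\begin{align*}
\frac{1+\beta}{\beta}(p-1)\,f(\finalset) + p(1+\beta)\,f(\finalset)
= \parof{\frac{(1+\beta)^2}{\beta}\,p - \frac{1+\beta}{\beta}} f(\finalset),
\end{align*}
and the negative term $-\frac{1+\beta}{\beta}f(\finalset)$ precisely cancels the bound $f(\takens)\leq\parof{1+\frac{1}{\beta}}f(\finalset)$ from \reflemma{takens-bound} when you add $f(\takens)$ back in. Your step ``using $p-1\leq p$'' discards this cancellation and would leave an extra additive $\frac{1+\beta}{\beta}f(\finalset)$ in the final bound. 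With that one line repaired, your argument is the paper's proof.
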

\begin{proof}
  Let $T' = T \setminus \takens$. By submodularity, we have,
  \begin{align*}
    f_U(T) &\leq %
    \sum_{t \in T'} f_U(t) %
    \leq %
    \sum_{t \in T'} f_{\setbefore{t}}(t).
  \end{align*}
  Since each $t \in T'$ is rejected, and $\sizeof{T'} \leq k$, we have
  \begin{align*}
    \sum_{t \in T'} f_{\setbefore{t}}(t) %
    \leq %
    (1+\beta) \sum_{t \in T'} \sum_{c \in \candidatesfor{t}}
    \incvalueof{f}{\setbefore{t}}{c} %
    + \alpha k.  %
  \end{align*}
  Apply \reflemma{nn-submodular-streaming-exchange-lemma} to generate
  a mapping $\varphi: T' \to \subsetsof{\takens}$. We have,
  \begin{align*}
    \lhs %
    (1+\beta) \sum_{t \in T'} \sum_{c \in \candidatesfor{t}}
    \incvalueof{f}{\setbefore{t}}{c} \\%
    &\leq %
    (1+\beta)\sum_{t \in T'} \parof{%
      \sum_{\deleted \in \varphi(t) \setminus \finalset}
      \exitvalueof{\deleted} %
      + %
      \sum_{s \in \finalset \cap \varphi(t)}
      \incvalueof{f}{\finalset}{s}%
    } %
    &\text{by construction of $\varphi$,}\\
    &\leq %
    (1+\beta) \cdot (p-1) \sum_{d \in \deleteds} \exitvalueof{d} +
    (1+\beta) \cdot p \sum_{s \in \finalset}
    \incvalueof{f}{\finalset}{s} %
    &\text{by construction of $\varphi$,}\\
    &\leq %
    \frac{1+\beta}{\beta} \cdot (p-1) \cdot f(\finalset) + p \cdot (1
    + \beta) \cdot f(\finalset) %
    &\text{by \reflemma{sum-exit-values},}\\ %
    % &= %
    % \parof{\frac{1+\beta}{\beta} \cdot (p-1) + (1+\beta) \cdot
    % p} f(\finalset) \\
    &= %
    \parof{\frac{(1+\beta)^2}{\beta} \cdot p -
      \frac{1+\beta}{\beta}}f\parof{\finalset}.
  \end{align*}
  To bound $f(\takens \cup T)$, we have,
  \begin{align*}
    f(\takens \cup T) %
    &= f_{\takens}(T) + f(\takens) \\
    &\leq %
    k \alpha + \parof{\frac{(1 + \beta)^2}{\beta} \cdot p -
      \frac{1+\beta}{\beta}} f (\finalset) + f(\takens) %
    &\text{by the above,} \\
    &\leq k \alpha + \frac{(1 + \beta)^2}{\beta} \cdot p \cdot f
    (\finalset) &\text{by \reflemma{takens-bound}},
  \end{align*}
  as desired.
\end{proof}

\subsection{A bound for the monotone case}
\labelsection{monotone-competition-vs-final-set} %
If $f$ is monotone, then $f(T) \leq f(\takens \cup T)$ for any set
$T$. If we take $T$ to be the set $\optset$ achieving $\OPT$, $\alpha
= 0$, and $\beta = 1$, we obtain the followings.
\begin{corollary}%
  \labelcorollary{monotone-bounds}%'
  Let $\defpmatchoid$ be a $p$-matchoid of rank $k$, and let $f:
  \subsetsof{\groundset} \to \nnreals$ be a nonnegative monotone
  submodular function. Given a stream over $\groundset$,
  \refalgo{Streaming-Greedy}{$0$,$1$} is an online algorithm that
  returns a set $\finalset \in \independents$ such that
  \begin{align*}
    f(\finalset) \geq \frac{1}{4p} \cdot \OPT = \frac{1}{4p} \cdot
    \maxof{f(T) \where T \in \independents}.
  \end{align*}
\end{corollary}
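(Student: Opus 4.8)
The plan is to read the corollary off directly from the lemma bounding $f(T \cup \takens)$ that was just proved, specialized to an optimal $T$ and to the parameter choices $\alpha = 0$, $\beta = 1$, and then to invoke monotonicity to replace $f(T \cup \takens)$ by $f(T)$ on the left-hand side. So this is essentially a one-line deduction from machinery already in place.

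First I would record that the returned set $\finalset$ is feasible: $\refalgo{Streaming-Greedy}$ only ever replaces its current set $S$ by $S \setminus C + e$, and $\refalgo{Exchange-Candidates}$ selects, for each matroid $\matroidI$ in which $(S+e)\cap\groundsetI$ is dependent, exactly one element $c_\mindex$ of a circuit of $(S+e)\cap\groundsetI$ to remove; removing it restores independence in that matroid while leaving the matroids not containing $e$ untouched, so $S \setminus C + e \in \independents$. Hence $S \in \independents$ is an invariant of the loop and $\finalset \in \independents$; the one-pass and preemptive properties were already observed. Now take $T = \optset$, so $f(\optset) = \OPT$, and apply the preceding lemma with $\alpha = 0$, $\beta = 1$:
\[
  f(\optset \cup \takens) \;\leq\; k \cdot 0 \;+\; \frac{(1+1)^2}{1}\cdot p \cdot f(\finalset) \;=\; 4p\, f(\finalset).
\]
Since $f$ is monotone and $\optset \subseteq \optset \cup \takens$, we have $\OPT = f(\optset) \leq f(\optset \cup \takens)$; chaining the two inequalities gives $\OPT \leq 4p\, f(\finalset)$, i.e.\ $f(\finalset) \geq \OPT/(4p)$, which is the claim.

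Because the corollary itself is immediate, there is no real obstacle in its proof; the substance is entirely in the two ingredients it invokes. The harder of these is the $p$-matchoid exchange argument of \reflemma{nn-submodular-streaming-exchange-lemma}, which charges the incremental values of the exchange candidates of the rejected elements of $T$ against the exit values of deleted elements and the incremental values of the surviving elements in $\finalset$, together with \reflemma{takens-bound}, which controls $f(\takens)$ in terms of $f(\finalset)$. I would also note that $\beta = 1$ is the right choice here because it minimizes $(1+\beta)^2/\beta$ over $\beta > 0$, and that $\alpha = 0$ is taken because in the monotone case there is no need for the threshold that bounds $\sizeof{\takens}$ via \reflemma{size-of-tokens}; that mechanism matters only for the non-monotone algorithms that later use $\refalgo{Streaming-Greedy}$ as a black box.
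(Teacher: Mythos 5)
Your proposal is correct and follows the paper's own derivation exactly: apply the bound $f(T\cup U)\leq k\alpha+\frac{(1+\beta)^2}{\beta}\, p\, f(\finalset)$ with $\alpha=0$, $\beta=1$, and $T$ an optimal set, then use monotonicity to get $f(T)\leq f(T\cup U)$. The additional remarks on feasibility of $\finalset$ and on why $\beta=1$ minimizes $(1+\beta)^2/\beta$ are accurate but not needed beyond what the paper already states.
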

\begin{remark}
  Although these bounds match those of Chakrabarti and Kale for the
  intersection of $p$ matroids \cite{ck-smms-14},
  \refalgo{Randomized-Greedy} requires more calls to the submodular
  value oracle as the incremental value of an element in $S$ updates
  over time. That said, the number of times a taken element $e \in U$
  reevaluates its incremental value is proportional to the number of
  times an element in $\setbefore{e}$ is deleted, which is at most the
  rank of $\pmatchoid$ and generally considered small compared to the
  size of the stream.  Furthermore, by taking $\alpha$ proportional to
  $\OPT$ (a procedure for which is discussed in
  \refsection{estimating-threshold}), we can limit the size of $U$ and
  thereby the number of additional oracle calls generated by shifting
  incremental values.
\end{remark}

%%%%%%%% RANDOMIZED STREAMING GREEDY %%%%%%%%

\section{Randomized Streaming Greedy}
\labelsection{randomized-streaming-greedy}

\newcommand{\buffer}{B}%
\newcommand{\bufferlimit}{K}%
\newcommand{\instance}{G}%
\newcommand{\finalbuffer}{\tilde{B}}%
\newcommand{\bestset}{\hat{S}}%

\refalgo{Randomized-Streaming-Greedy} adapts
\refalgo{Streaming-Greedy} to nonnegative submodular functions by
employing a randomized buffer $\buffer$ to limit the probability that
any element is added to the running solution $S$.  Like
\refalgo{Streaming-Greedy}, \refalgo{Randomized-Streaming-Greedy}
maintains the invariant $S \in \independents$. However, when a
``good'' element would have been added to $S$ by
\refalgo{Streaming-Greedy}, it is instead placed in $\buffer$. Once
the number of elements in $\buffer$ hits a limit $K$, we pick one
element in $\buffer$ uniformly at random and add it to $S$ just as
\refalgo{Streaming-Greedy} would.

\begin{figure}
  % \begin{wrapfigure}{r}{.47\textwidth}
  % %   \vspace{-3em}
  %   \vspace{-6em}
  %   \vspace{-1em}
  \centering
  \begin{minipage}{6.9cm}
    \begin{framed}
      \begin{pseudocode}
        \begin{routine}{Randomized-Streaming-Greedy$(\alpha,\beta)$}
          $S \gets \emptyset$, $B \gets \emptyset$ \\ % %
          % \commentcode{$S$ will maintain our current solution.} \\ %
          % $\buffer \gets
          % \emptyset$
          % \commentcode{$\buffer$ is a buffer of size $K > 0$}\\
          while (stream is not empty) \\
          \> $e \gets$ next element in the stream\\
          \> if $\refalgo{Is-Good}{$S$,$e$}$ then $\buffer \gets \buffer + e$ \\
          \> if $\sizeof{\buffer} = \bufferlimit$ then \+ \\
          \> $e \gets$ uniformly random from $\buffer$ \\
          \> $C \gets \refalgo{Exchange-Candidates}{$S$,$e$}$ \\%
          % \commentcode{defined in \refsection{streaming-greedy}} \\
          \> $\buffer \gets \buffer - e$, $S \gets (S \setminus C) + e$ \\
          \> for all $e' \in \buffer$ \\
          \> \> unless \refalgo{Is-Good}{$S$,$e'$}\\
          \> \> \> $\buffer \gets \buffer - e'$\\
          end if \\
          \< end while \- \\
          $S' \gets \refalgo{Offline}{$\buffer$}$ \\
          return $\argmax_{Z \in \setof{S,S'}} f(Z)$
        \end{routine}
      \end{pseudocode}
    \end{framed}
  \end{minipage}
  % \caption{\refalgo{Randomized-Streaming-Greedy}}
  % \labelfigure{randomized-streaming-greedy}
  % \vspace{1ex}
  % \end{wrapfigure}
  % \begin{wrapfigure}{r}{0.48\textwidth}
  \qquad
  \begin{minipage}{6.8cm}
    \begin{framed}
      \begin{pseudocode}
        \begin{routine}{Is-Good}{$S$,$e$}
          $C \gets \refalgo{Exchange-Candidates}{$S$,$e$}$\\
          if $f_S(e) \geq \alpha + (1 + \beta) \sum_{e' \in C}
          \incvalueof{f}{S}{e'}$ \\
          \> return TRUE\\
          else return FALSE
        \end{routine}
      \end{pseudocode}
    \end{framed}
  \end{minipage}
  % \vspace{-2em}
  % \caption{\refalgo{Is-Good}{$S$,$e$} determines if there is a
  % sufficiently profitable exchange to add $e$ into $S$, by the same
  % protocol as \refalgo{Streaming-Greedy}.}
  % \vspace{-2.2em}
  % \end{wrapfigure} %
\end{figure}

% \begin{figure}[t]
%   \begin{framed}
%     \begin{pseudocode}
%       \begin{routine}{Is-Good}{$S$,$e$}
%         $C \gets \refalgo{Exchange-Candidates}{$S$,$e'$}$\\
%         if $f_S(e) \geq \alpha + (1 + \beta) \sum_{e' \in C}
%         \incvalueof{f}{S}{e}$ \\
%         then return TRUE\\
%         else return FALSE
%       \end{routine}
%     \end{pseudocode}
%   \end{framed}
%   \caption{\refalgo{Is-Good}{$S$,$e$} determines if there is a
%   sufficiently profitable exchange to add $e$ into $S$, by the same
%   protocol as \refalgo{Streaming-Greedy}.}
% \end{figure}

Modifying $S$ may break the invariant that the buffer only contains
good elements. Since $f$ is submodular, the incremental value
$\incvalueof{f}{S}{e}$ of each $e \in S$ may increase if a preceding
element is deleted. Furthermore, the marginal value $f_S(b)$ of each
buffered element $b \in \buffer$ may decrease as elements are added to
$S$. Thus, after modifying $S$, we reevaluate each $b \in \buffer$ and
discard elements that are no longer good.

Let $\finalbuffer$ be the set of elements remaining in the buffer
$\buffer$ when the stream ends.  We process $\finalbuffer$ with an
offline algorithm to produce a second solution $S'$, and finally
return the set $\bestset$ which is the better of $S$ and $S'$.

\paragraph{Outline of the analysis.}

Let $T \in \independents$ be an arbitrary independent set. Let $T' = T
\setminus \finalbuffer$ be the portion fully processed by the online
portion and $T'' = T \cap \finalbuffer$ the remainder left over in the
buffer and processed offline.

In \refsection{reduction-to-streaming-greedy}, we first show that the
analysis for $T'$ largely reduces to that of
\refsection{streaming-greedy}. In particular, this gives us a bound on
$f(U \cup T')$. In \refsection{online-plus-offline}, we combine this
with a bound on $f(T'')$, guaranteed by the offline algorithm, to
obtain an overall bound on $f(U \cup T)$ by $f(\bestset)$. In
\refsection{competition-vs-competition-and-takens}, we finally bound
$f(T)$ with respect to $f(U)$, leveraging the fact that the buffer
limits the probability of elements being added to $S$. In
\refsection{competition-vs-final-set}, we tie together the analysis to
bound $f(T)$ by $f(\bestset)$ for fixed $\alpha$ and $\beta$.

The analysis reveals that the optimal choice for $\beta$ is 1, and
that $\alpha$ should be chosen in proportion to $\OPT/k$, where $k$ is
the rank of the $\pmatchoid$. Since $\OPT$ is not known a priori, in
\refsection{estimating-threshold}, we leverage a technique by
Badanidiyuru \etal \cite{bmkk-sso-14} that efficiently guesses the
$\alpha$ to within a constant factor of the target value. The final
algorithm is then $\log k$ copies of
\refalgo{Randomized-Streaming-Greedy} run in parallel, each instance
corresponding to a ``guess'' for $\alpha$. One of these guesses is
approximately correct, and attains the bounded asserted in
\reftheorem{final-randomized-greedy-algorithm}.

\begin{theorem}
  \labeltheorem{final-randomized-greedy-algorithm}
  Let $\pmatchoid = (\groundset,\independents)$ be a $p$-matchoid of
  rank $k$, let $f: 2^{\groundset} \to \nnreals$ a nonnegative
  submodular function over $\groundset$, and let $\epsilon > 0$ be
  fixed. Suppose there exists an algorithm for the offline instance of
  the problem with approximation ratio $\gamma_p$. Then there exists a
  streaming algorithm using total space
  \begin{math}
    O\parof{\frac{k \log k}{\epsilon^2}}
  \end{math} %
  that, given a stream over $\groundset$, returns a set $\bestset \in
  \independents$ such that
  \begin{align*}
    (1 - \epsilon) \OPT \leq \parof{4 p + \frac{1}{\offlineratio}}
    \evof{f(\bestset)}.
  \end{align*}
\end{theorem}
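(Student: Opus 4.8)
Fix the optimal set $T = \optset$, so $f(T) = \OPT$, and split it as $T = T' \cup T''$ with $T' = T \setminus \finalbuffer$ and $T'' = T \cap \finalbuffer$, where $\finalbuffer$ is the buffer at the end of the stream and $\bestset$ is the better of the online solution $\finalset$ and the offline solution $S'$ (the output of \refalgo{Offline} applied to $\finalbuffer$). The plan is: (i) reduce the online phase to the analysis of \refalgo{Streaming-Greedy}, (ii) use \refalgo{Offline} to control $T''$ and combine, (iii) use the randomized buffer to lower bound $\evof{f(\takens \cup T)}$ even though $f$ is non-monotone, and (iv) assemble everything and remove the dependence of $\alpha$ on $\OPT$ by parallel guessing.

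For (i): the buffer only \emph{delays} additions to $S$. Whenever an element is moved from $\buffer$ into $S$ it has just passed \refalgo{Is-Good} against the current $S$ (the test is re-run after every modification of $S$), so the exchange inequality $f_S(e) \ge \alpha + (1+\beta)\sum_{c\in C}\incvalueof{f}{S}{c}$ holds at the moment of addition exactly as in \refalgo{Streaming-Greedy}; and every element of $T' \setminus \takens$ has, at some point, \emph{failed} \refalgo{Is-Good} against an $S$-state contained in $\takens$ (either on arrival, or when discarded from the buffer after a flush). Hence \reflemma{gain-over-incremental-value-lost} through \reflemma{takens-bound}, and the exchange argument of \refsection{takens-and-competition-vs-takens} (in particular \reflemma{nn-submodular-streaming-exchange-lemma} applied to $T' \setminus \takens$), go through for every realization of the randomness, giving the pathwise bounds
\[ \sizeof{\takens} \le \OPT/\alpha \qquad\text{and}\qquad f(\takens \cup T') \le k\alpha + \frac{(1+\beta)^2}{\beta}\, p\, f(\finalset). \]

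For (ii): since $\independents$ is downward closed and $T'' \subseteq \finalbuffer$, we have $T'' \in \independents$, so $\evof{f(S') \mid \finalbuffer} \ge \offlineratio\, f(T'')$, hence $\evof{f(T'')} \le \evof{f(S')}/\offlineratio$. Submodularity and nonnegativity give $f(\takens \cup T) \le f(\takens \cup T') + f(T'')$, so taking expectations (and setting $\beta = 1$) yields $\evof{f(\takens \cup T)} \le k\alpha + 4p\,\evof{f(\finalset)} + \evof{f(S')}/\offlineratio$. For (iii): every element of $\takens$ is the uniformly random pick among the $\bufferlimit$ elements present at some buffer flush, and flushes are in bijection with $\takens$, so for each $e \in \groundset$, conditioning on the history at each flush, $\Pr[e \in \takens] = \bufferlimit^{-1}\evof{\#\{\text{flushes with }e\text{ buffered}\}} \le \evof{\sizeof{\takens}}/\bufferlimit \le \OPT/(\alpha\bufferlimit)$. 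Applying the lemma of Buchbinder \etal\ \cite{bfns-smcc-14} — if a random subset $R$ of the ground set has $\Pr[e \in R] \le q$ for all $e$, then $\evof{g(R)} \ge (1-q)g(\emptyset)$ for nonnegative submodular $g$ — to $g = f(T \cup \cdot\,)$ gives $\evof{f(\takens \cup T)} \ge (1 - \OPT/(\alpha\bufferlimit))\,\OPT$.

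For (iv): chaining the displays from (ii) and (iii), $(1 - \OPT/(\alpha\bufferlimit))\OPT \le k\alpha + 4p\,\evof{f(\finalset)} + \evof{f(S')}/\offlineratio$. Choosing $\alpha$ of order $\eps\,\OPT/k$ and $\bufferlimit$ of order $k/\eps^2$ makes $k\alpha/\OPT + \OPT/(\alpha\bufferlimit) \le \eps$, and since $f(\bestset) \ge \max\{f(\finalset),f(S')\}$ this rearranges to $(1-\eps)\OPT \le (4p + 1/\offlineratio)\,\evof{f(\bestset)}$. Finally, $\OPT$ (hence the target $\alpha$) is unknown, so as in \cite{bmkk-sso-14} and \refsection{estimating-threshold} we run $O(\log k/\eps)$ copies with $\alpha$ ranging over a geometric grid; one copy has $\alpha$ within a $(1+\eps)$ factor of the target (absorbing another $O(\eps)$ loss, rescaled away), and we return the best $\bestset$ over all copies. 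Each copy stores an independent set of size $\le k$ and a buffer of size $< \bufferlimit$, giving the claimed $O(k\log k/\eps^2)$ total space.

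\textbf{Main obstacle.} The two delicate points are: (a) verifying that the online phase genuinely reduces to the \refsection{streaming-greedy} analysis despite the buffer's re-evaluations and discards — one must check that the incremental-value bookkeeping and the exchange graph of \reflemma{nn-submodular-streaming-exchange-lemma} survive when the state ``$\setbefore{t}$'' is replaced by the $S$-state at the time $t$ was last evaluated by \refalgo{Is-Good}, even though these states are not nested; and (b) the probabilistic step, where the bound $\Pr[e \in \takens] \le \evof{\sizeof{\takens}}/\bufferlimit$ leans crucially on the \emph{pathwise} cardinality bound $\sizeof{\takens} \le \OPT/\alpha$ (itself a consequence of the additive threshold and nonnegativity of incremental values, \reflemma{size-of-tokens}) together with the fact that, conditioned on its history, each flush picks a given buffered element with probability exactly $1/\bufferlimit$.
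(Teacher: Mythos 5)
Your proposal is correct and follows essentially the same route as the paper: the pointwise reduction of the online phase to \refalgo{Streaming-Greedy} (\reflemma{pointwise-streaming-greedy}), the combination with the offline guarantee on $T\cap\finalbuffer$ (\reflemma{nn-submodular-upper-bound-1}), the bound $\probof{e\in\takens}\le \OPT/(\alpha\bufferlimit)$ fed into \reflemma{bfns-ev-14} (\reflemma{nn-submodular-upper-bound-2}), and the parallel guessing of $\alpha$. One bookkeeping nit: the paper only needs $\alpha$ within a constant factor of $\eps\OPT/2k$, so a factor-$2$ grid with $O(\log k)$ parallel copies suffices; your $(1+\eps)$-spaced grid would use $O(\log k/\eps)$ copies and inflate the total space to $O(k\log k/\eps^{3})$.
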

% The full analysis is provided in
% \refsection{analysis-randomized-streaming-greedy}.

% \section{Analysis of \refalgo{Randomized-Streaming-Greedy}}
% \labelsection{analysis-randomized-streaming-greedy}

\paragraph{Some notation for the analysis}

\begin{itemize}
\item Let $\finalset$ be the state of $S$ at the end of the stream.
\item Let $\takens$ be the set of all elements to pass through $S$
  during the stream.
\item Let $\finalbuffer$ be the set held by $\buffer$ at the end of
  the stream.
\item Let $\bestset = \argmax_{Z \in \setof{\finalset,S'}} f(Z)$ be
  the set output by \refalgo{Randomized-Streaming-Greedy}.
\end{itemize}
$\finalset$, $\takens$, and $\finalbuffer$ are random sets depending
on the random selection process from $\buffer$. $\bestset$ is a random
variable depending on $\finalbuffer$, $\finalset$, and the offline
algorithm's own internal randomization.

\subsection{Reducing to \refalgo{Streaming-Greedy}}
\labelsection{reduction-to-streaming-greedy}

If we set the buffer limit $K$ to 1, eliminating the role of the
buffer $\buffer$, then \refalgo{Randomized-Streaming-Greedy} reduces
to the deterministic \refalgo{Streaming-Greedy} algorithm from
\refsection{streaming-greedy}. In
\reffigure{randomized-streaming-greedy-reduced}, we refactor
\refalgo[Randomized-Streaming-Greedy]{Randomized-{\allowbreak}Streaming-{\allowbreak}Greedy}
as a buffer placed upstream from a running instance of
\refalgo{Streaming-Greedy}. The buffer only filters and reorders the
stream, and the analysis of \refalgo{Streaming-Greedy} holds with
respect to this scrambled stream.  More precisely, if $r$ denotes the
random bits dictating $\buffer$, then for any fixed $r$, the analysis
of \refsection{streaming-greedy} still applies. We recap the preceding
analysis for \refalgo{Streaming-Greedy} as it applies here.

\begin{figure}[t]
  \centering
  \begin{minipage}{13.2cm}
    \begin{framed}
      \begin{pseudocode}
        \begin{routine}{Randomized-Streaming-Greedy}{$\alpha$,$\beta$} %
          Let $\instance$ be an instance of $\refalgo{Streaming-Greedy}{$\alpha$,$\beta$}$ \\
          Let $S_{\instance}$ refer to the set $S$ maintained by $\instance$. \\
          $\buffer \gets \emptyset$ %
          \commentcode{$\buffer$ is a buffer of size
            $K > 0$}\\
          while (stream is not empty) \\
          \> $e \gets $ next element in the stream \\
          \> if \refalgo{Is-Good}{$S_{\instance}$,$e$} then $\buffer
          \gets \buffer + e$
          \commentcode{buffers the ``good'' elements}\\
          \> else send $e$ downstream to $\instance$
          \commentcode{$\instance$ will reject $e$}\\
          \> if $\sizeof{\buffer} = \bufferlimit$ then \\
          \> \> $e \gets$ an element from $\buffer$ picked uniformly
          at random %
          \commentcode{$e$ is ``good''}\\
          \> \> $\buffer \gets \buffer - e$\\
          \> \> send $e$ downstream to $\instance$ \commentcode{$\instance$ will add $e$ to $S_{\instance}$}\\
          \> \> for all $e' \in \buffer$ such that (not \refalgo{Is-Good}{$S_{\instance}$,$e$}))\\
          \> \> \> $\buffer \gets \buffer - e'$\\
          \> \> \> send $e'$ downstream to $\instance$ %
          \commentcode{$\instance$ will reject $e'$}\\
          \> \> end for \\
          \> end if\\
          end while\\
          $S' \gets \refalgo{Offline}{$\buffer$}$\\
          Return $\argmax_{Z \in \setof{S_{\instance},S'}} f(Z)$
        \end{routine}
      \end{pseudocode}
    \end{framed}
    \caption{\refalgo{Randomized-Streaming-Greedy} rewritten with the
      deterministic portion reduced to \refalgo{Streaming-Greedy}.}
    \labelfigure{randomized-streaming-greedy-reduced}
  \end{minipage}
\end{figure}

\begin{lemma}%
  \labellemma{pointwise-streaming-greedy}
  Let $\alpha, \beta \geq 0$ be fixed parameters. For any $T \in
  \independents$, we have
  \begin{align*}
    f(\takens \cup (T \setminus \finalbuffer)) %
    \leq %
    \frac{(1 + \beta)^2}{\beta} \cdot p \cdot f(\finalset) %
    + k \alpha.
  \end{align*}
  Furthermore, $\sizeof{\takens} \leq \OPT/\alpha$.
\end{lemma}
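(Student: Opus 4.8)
The plan is to exploit the reformulation in \reffigure{randomized-streaming-greedy-reduced}, where the online part of \refalgo{Randomized-Streaming-Greedy} is realized as an instance $\instance$ of \refalgo{Streaming-Greedy}$(\alpha,\beta)$ fed by an upstream buffer that merely filters and reorders the stream. First I would fix the random bits $r$ driving the buffer, so the whole execution becomes deterministic and $\instance$ is a genuine run of \refalgo{Streaming-Greedy}$(\alpha,\beta)$ on the (reordered) substream consisting of exactly those elements ever forwarded downstream. The crucial observation is that the set of elements forwarded to $\instance$ is precisely $\groundset \setminus \finalbuffer$: each element is either immediately found not good and passed down, or placed in $\buffer$, from which it eventually either gets picked (and forwarded to be added) or is later re-evaluated, found not good, and forwarded to be rejected; only the elements still sitting in $\buffer$ when the stream ends, i.e.\ $\finalbuffer$, are never seen by $\instance$. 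Moreover, an element deemed ``not good'' upstream is rejected by $\instance$, because the \refalgo{Is-Good} predicate is exactly \refalgo{Streaming-Greedy}'s acceptance test evaluated against the same set $S_{\instance}$, which is not modified in between; and a randomly picked element is added by $\instance$ exactly as \refalgo{Streaming-Greedy} would. Hence the sets $\takens$ and $\finalset$ appearing in the statement coincide with the quantities $\takens$ and $\finalset$ of \refsection{streaming-greedy} for the run $\instance$.

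With this identification in place, both claims follow from results already established for \refalgo{Streaming-Greedy}. The set $T \setminus \finalbuffer$ is contained in $\groundset \setminus \finalbuffer$, the ground set effectively seen by $\instance$; it is independent, being a subset of $T \in \independents$; and $\sizeof{T \setminus \finalbuffer} \leq k$ since $\pmatchoid$ has rank $k$. Applying the bound on $f(T \cup \takens)$ proved in \refsection{takens-and-competition-vs-takens} to the run $\instance$, with $T \setminus \finalbuffer$ playing the role of the competing feasible set, yields
\begin{align*}
  f(\takens \cup (T \setminus \finalbuffer)) \leq \frac{(1+\beta)^2}{\beta} \cdot p \cdot f(\finalset) + k\alpha,
\end{align*}
which is the first inequality. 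For the second, \reflemma{size-of-tokens} applied to $\instance$ gives directly $\sizeof{\takens} \leq \OPT/\alpha$. Since neither statement involves an expectation, both hold for every realization of $r$, which is exactly the pointwise form needed downstream.

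The main obstacle is not any computation but the bookkeeping of the reduction: one must verify that the rewritten algorithm in \reffigure{randomized-streaming-greedy-reduced} is faithful to \refalgo{Randomized-Streaming-Greedy} (inducing the same distribution over executions), that every element outside $\finalbuffer$ is forwarded to $\instance$ exactly once in a well-defined order, and --- most delicately --- that an element flushed from the buffer because it is no longer good is indeed \emph{rejected} by $\instance$. This last point relies on $S_{\instance}$ being unchanged between the moment a buffer flush is triggered and the moment $\instance$ processes the flushed elements, so that \refalgo{Streaming-Greedy}'s acceptance test returns the same negative answer as the \refalgo{Is-Good} test that caused the flush. Once these are in hand, nothing about the $p$-matchoid or submodularity needs to be re-derived, and the lemma is essentially a corollary of \refsection{streaming-greedy}.
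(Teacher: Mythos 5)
Your proposal is correct and follows essentially the same route as the paper: the paper also justifies this lemma by the refactoring in \reffigure{randomized-streaming-greedy-reduced}, fixing the random bits $r$ so that the buffer acts as a deterministic filter/reorderer in front of a genuine run of \refalgo{Streaming-Greedy}, and then invoking the $f(T\cup U)$ bound of \refsection{takens-and-competition-vs-takens} with $T\setminus\finalbuffer$ as the competing set together with \reflemma{size-of-tokens}. Your extra care about why flushed buffer elements are genuinely rejected by the downstream instance is exactly the bookkeeping the paper leaves implicit.
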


\subsection{Upper bounding $f(U \cup T)$ by $f(\bestset)$}
\labelsection{online-plus-offline}

To bound $f(U \cup T)$, where $T \in \independents$ is any independent
set, we split $f_{\takens}(T)$ into the portion $f_{\takens}(T
\setminus \buffer)$ that escapes the buffer, and the remainder
$f_{\takens}(T \cap \buffer)$ captured by the buffer. We bound the
former with \reflemma{pointwise-streaming-greedy} and the latter by
guarantees for \refalgo{Offline} to obtain the following.

\begin{lemma}
  \labellemma{nn-submodular-upper-bound-1}
  For any $T \in \independents$, we have
  \begin{align*}
    \evof{f(\takens \cup T)} \leq k \alpha + \parof{\frac{(1 +
        \beta)^2}{\beta} \cdot p + \frac{1}{\offlineratio}}
    \evof{f(\bestset)}
  \end{align*}
\end{lemma}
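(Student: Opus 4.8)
The plan is to split $T$ into the part $T' = T \setminus \finalbuffer$ that is fully processed online and the part $T'' = T \cap \finalbuffer$ that survives in the buffer at the end, and bound the contribution of each separately. First I would write $f(\takens \cup T) = f(\takens) + f_{\takens}(T)$ and then, by submodularity, $f_{\takens}(T) \leq f_{\takens}(T') + f_{\takens}(T'')$. For the first term $f(\takens) + f_{\takens}(T') = f(\takens \cup T')$, I would invoke \reflemma{pointwise-streaming-greedy} directly to get $f(\takens \cup T') \leq \frac{(1+\beta)^2}{\beta} p f(\finalset) + k\alpha$. Since $f(\finalset) \leq f(\bestset)$ always (as $\bestset$ is the better of $\finalset$ and $S'$), this term is at most $\frac{(1+\beta)^2}{\beta} p f(\bestset) + k\alpha$, and this bound holds pointwise (for every fixing of the random bits), hence also in expectation.

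Next I would handle $f_{\takens}(T'')$. Since $T'' \subseteq \finalbuffer$ and $\finalbuffer$ is the input to \refalgo{Offline}, and $f$ is submodular and nonnegative, $f_{\takens}(T'') \leq f(T'') \leq f(\finalbuffer \cap T)$; more usefully, the point is that $T''$ is a feasible subset of the offline instance's ground set, so $f(T'') \leq \OPT(\finalbuffer)$ where $\OPT(\finalbuffer) = \max\{f(Z) : Z \subseteq \finalbuffer, Z \in \independents\}$ is the optimum of the offline instance restricted to $\finalbuffer$. By the guarantee on \refalgo{Offline}, $\evof{f(S')} \geq \offlineratio \cdot \OPT(\finalbuffer)$ conditioned on $\finalbuffer$ (taking expectation over \refalgo{Offline}'s internal randomness), so $\evof{f(T'')} \leq \frac{1}{\offlineratio}\evof{f(S')} \leq \frac{1}{\offlineratio}\evof{f(\bestset)}$. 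The slight subtlety here is the order of conditioning: $\finalbuffer$ is itself random (depending on the buffer's random choices), so I would first condition on $\finalbuffer$, apply the offline guarantee, and then take the outer expectation; since the bound $f(T'') \leq \OPT(\finalbuffer) \leq \frac{1}{\offlineratio}\evof{f(S') \mid \finalbuffer}$ holds for every realization of $\finalbuffer$, averaging is routine.

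Combining, $\evof{f(\takens \cup T)} \leq \evof{f(\takens \cup T')} + \evof{f_{\takens}(T'')} \leq k\alpha + \frac{(1+\beta)^2}{\beta} p \, \evof{f(\finalset)} + \frac{1}{\offlineratio}\evof{f(\bestset)} \leq k\alpha + \bigl(\frac{(1+\beta)^2}{\beta} p + \frac{1}{\offlineratio}\bigr)\evof{f(\bestset)}$, using $f(\finalset) \leq f(\bestset)$ in the last step. The main obstacle, such as it is, is being careful about what is random and in what order expectations are taken: \reflemma{pointwise-streaming-greedy} is a pointwise (deterministic, conditioned on the buffer bits) statement, the offline guarantee is an expectation over a conditional distribution, and these must be layered correctly before the final $f(\finalset) \leq f(\bestset)$ substitution. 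Everything else is a direct application of submodularity and the two black-box guarantees already in hand.
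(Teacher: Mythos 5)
Your proposal is correct and follows essentially the same route as the paper's proof: the same split of $T$ into $T' = T\setminus\finalbuffer$ and $T'' = T\cap\finalbuffer$, the same pointwise application of \reflemma{pointwise-streaming-greedy} to the first part, the same use of the \refalgo{Offline} guarantee (via $T''$ being a feasible subset of $\finalbuffer$) for the second, and the same layering of the conditional expectation over the offline algorithm's randomness inside the outer expectation over the buffer's random bits. Your extra care about the order of conditioning is exactly the subtlety the paper's proof handles by fixing the random bits $r$ first.
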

\begin{proof}
  For ease of exposition, let $r$ denote the random bits that dictate
  the random selections from $B$, and let us subscript variables by
  $r$ to highlight their dependence.  Let $T'_r = T \setminus
  \finalbuffer_r$ and $T''= T \cap \finalbuffer_r$. For any fixed $r$,
  we have,
  \begin{align*}
    f(U_r \cup T) %
    &= %
    f(U_r) + f_{U_r}(T) % \\
    % &
    \leq %
    f(U_r) + f_{U_r}(T'_r) + f_{U_r}(T''_r) %
    &\text{by submodularity,}\\
    &\leq %
    k \alpha + \frac{(1+\beta)^2}{\beta} \cdot p \cdot f(\finalset_r)
    + f_{U_r}(T''_r) %
    &\text{by \reflemma{pointwise-streaming-greedy},}\\
    &\leq %
    k \alpha + \frac{(1 + \beta)^2}{\beta} \cdot p \cdot
    f(\finalset_r) + f(T''_r) %
    &\text{by submodularity,} \\
    &\leq %
    k \alpha + \frac{(1 + \beta)^2}{\beta} \cdot p \cdot
    f(\finalset_r) + \frac{1}{\offlineratio} \evof{f(S'_r)}. %
  \end{align*}
  Here, the expectation surrounding $f(S'_r)$ is generated by the
  offline algorithm $\refalgo{Offline}$, which may be randomized (see,
  for example, \reftable{results}).  Taking expectations of both sides
  over $r$, we have
  \begin{align*}
    \evof{f (\takens \cup T)} %
    &\leq %
    k \alpha + \frac{(1 + \beta)^2}{\beta} \cdot p \cdot
    \evof{f(\finalset)} + \frac{1}{\offlineratio} \evof{f(S')} \\%
    &\leq k \alpha + \parof{\frac{(1 + \beta)^2}{\beta} \cdot p +
      \frac{1}{\offlineratio}} \evof{f(\bestset)},
  \end{align*}
  as desired.
\end{proof}

\subsection{Upper bounding $f(T)$ by $f(U \cup T)$}
\labelsection{competition-vs-competition-and-takens}

The remaining challenge is to bound $\evof{f(U \cup T)}$ from below by
some fraction of $f(T)$. The following technical lemma, used similarly
by Buchbinder \etal, gives us a handle on $\evof{f(U \cup T)}$.
\begin{lemma}[\cite{bfns-smcc-14}] %
  \labellemma{bfns-ev-14}%
  Let $f: \powerset{\groundset} \to \nnreals$ be a nonnegative
  submodular function. Suppose $R$ is a random set according to a
  distribution $\mu$ on $\powerset{\groundset}$ where no element $e
  \in \groundset$ is picked with probability more than $\rho$. Then
  $\evof{f (R)} \geq (1 - \rho) f(\emptyset)$. Moreover, for any set
  $Y \subseteq \groundset$, $\evof{f(R \cup Y)} \geq (1-\rho) f(Y)$.
\end{lemma}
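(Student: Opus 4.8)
The plan is to note first that the ``moreover'' clause both contains the first assertion (it is the case $Y=\emptyset$) and reduces to it. Given $Y$, put $g(S) = f(S\cup Y)$; then $g$ is nonnegative since $f$ is, and $g$ is submodular because $(A\cup Y)\cap(B\cup Y)=(A\cap B)\cup Y$ turns the submodularity inequality for $f$ into the one for $g$. Since $g(\emptyset)=f(Y)$, the first assertion applied to $g$ and $R$ yields $\evof{f(R\cup Y)}=\evof{g(R)}\geq(1-\rho)\,g(\emptyset)=(1-\rho)\,f(Y)$. Hence it remains only to show $\evof{f(R)}\geq(1-\rho)\,f(\emptyset)$, and we may assume $\rho<1$ (otherwise this is immediate from $f\geq0$).

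For that I would use the Lov\'asz (convex) extension. Define $\hat{f}: [0,1]^{\groundset}\to\reals$ by $\hat{f}(x)=\int_0^1 f(\setof{e\in\groundset : x_e\geq t})\,dt$. This agrees with $f$ at every $\{0,1\}$-vertex of the cube, and (a standard fact) it is convex precisely because $f$ is submodular. Writing $q_e=\Pr[e\in R]$ and $q=(q_e)_e=\evof{\mathbf{1}_R}$, Jensen's inequality for the convex function $\hat{f}$ and the random vector $\mathbf{1}_R$ gives $\evof{f(R)}=\evof{\hat{f}(\mathbf{1}_R)}\geq\hat{f}(q)$. Now evaluate $\hat{f}(q)=\int_0^1 f(\setof{e : q_e\geq t})\,dt$: since $q_e\leq\rho$ for all $e$, the superlevel set $\setof{e : q_e\geq t}$ is empty for every $t>\rho$, so $\hat{f}(q)=\int_0^{\rho}f(\setof{e : q_e\geq t})\,dt+\int_{\rho}^1 f(\emptyset)\,dt\geq 0+(1-\rho)\,f(\emptyset)$, the first integral being nonnegative because $f\geq0$. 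Combining the last two displays gives the claim.

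The step most easily overlooked is the choice of tool. A direct induction that peels one element $a$ off the ground set leaves a residual term of the form $\evof{\mathbf{1}[a\in R]\,f_{R-a}(a)}$ whose sign is uncontrolled when $f$ is non-monotone, and conditioning on whether $a\in R$ can inflate the marginals of $R-a$ above $\rho$, so the hypothesis does not pass cleanly to the subproblem; routing the argument through the convex extension and Jensen avoids both difficulties at once. (This is Lemma~2.2 of Buchbinder \etal \cite{bfns-smcc-14}, where the inductive form of the argument is carried out directly.)
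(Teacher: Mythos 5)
The paper does not prove this lemma at all -- it is imported verbatim from Buchbinder et al.\ \cite{bfns-smcc-14} -- so there is no in-paper argument to compare against. Your proof is correct: the reduction of the ``moreover'' clause to the $Y=\emptyset$ case via $g(S)=f(S\cup Y)$ is sound (the identity $(A\cup Y)\cap(B\cup Y)=(A\cap B)\cup Y$ does exactly what you need), and the Lov\'asz-extension argument -- Jensen applied to the convex extension, then splitting the integral at $t=\rho$ and using nonnegativity on $[0,\rho]$ -- is the standard proof of this fact and matches the argument in the cited reference. Your closing remark about why a naive element-by-element induction is delicate for non-monotone $f$ is also apt.
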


In this case, $\takens$ is a random set, and we want to upper bound
the probability of an element $e \in \groundset$ appearing in
$\takens$. Intuitively, taking $\bufferlimit$ large limits the
probability of an element being selected from the buffer, while by
\reflemma{pointwise-streaming-greedy}, taking $\alpha$ large decreases
the number of elements in $U$.
\begin{lemma}%
  \labellemma{probability-taken}
  For any element $e \in \groundset$,
  \begin{align*}
    \probof{e \in \takens} \leq 1 - \parof{1 -
      \frac{1}{\bufferlimit}}^{\OPT/\alpha}.
  \end{align*}
\end{lemma}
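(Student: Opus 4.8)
The plan is to use the refactored view of \refalgo{Randomized-Streaming-Greedy} in \reffigure{randomized-streaming-greedy-reduced}: a buffer $\buffer$ of capacity $\bufferlimit$ feeds an instance $\instance$ of \refalgo{Streaming-Greedy}, and the only elements forwarded to $\instance$ that $\instance$ actually adds to $S_{\instance}$ are those drawn from $\buffer$ during a \emph{selection round} (the step taken when $\sizeof{\buffer}=\bufferlimit$); every other forwarded element is rejected. Hence each selection round contributes exactly one new, distinct element to $\takens$, so the total number of selection rounds performed over the stream equals $\sizeof{\takens}$. By \reflemma{pointwise-streaming-greedy}, $\sizeof{\takens}\le\OPT/\alpha$ for \emph{every} realization of the internal coin flips; since the round count is an integer, the number of selection rounds is at most $M:=\lfloor\OPT/\alpha\rfloor$ with probability one. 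This deterministic cap on the number of rounds is the one quantitative ingredient, and it is already in hand.

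Now fix $e\in\groundset$ and reveal the random draws one selection round at a time. Let $\mathcal{F}_{i}$ be the information available just before the $i$-th draw (the deterministic stream together with the outcomes of the first $i-1$ draws). Conditioned on $\mathcal{F}_{i-1}$, the contents of $\buffer$ at the start of round $i$ are completely determined, and the $i$-th draw is uniform over those $\bufferlimit$ elements; therefore $e$ is the drawn element with conditional probability $1/\bufferlimit$ if $e$ lies in $\buffer$ at that moment and $0$ otherwise --- in particular $0$ if round $i$ never occurs, if $e$ was already drawn in an earlier round, or if $e$ was discarded from $\buffer$ for no longer being good. In every case this conditional probability is at most $1/\bufferlimit$. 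Let $A_i$ be the event that $e$ is not drawn in any of the first $i$ rounds (with $A_i=A_{i-1}$ if fewer than $i$ rounds occur). The above gives $\probof{A_i\mid \mathcal{F}_{i-1}}\ge 1-1/\bufferlimit$ pointwise on $A_{i-1}$, hence $\probof{A_i\mid A_{i-1}}\ge 1-1/\bufferlimit$; and since $A_M\subseteq\cdots\subseteq A_0$ and rounds beyond the $M$-th never occur, $\{e\notin\takens\}=A_M$. Telescoping,
\begin{align*}
  \probof{e\notin\takens}=\probof{A_M}=\prod_{i=1}^{M}\probof{A_i\mid A_{i-1}}\ge\parof{1-\frac{1}{\bufferlimit}}^{M}\ge\parof{1-\frac{1}{\bufferlimit}}^{\OPT/\alpha},
\end{align*}
the last step because $0<1-1/\bufferlimit\le 1$ and $M\le\OPT/\alpha$. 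Rearranging yields the stated bound.

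I expect the main obstacle to be one of rigor rather than difficulty: making the conditioning precise --- verifying that, given the entire past, each selection round really is uniform over a set of exactly $\bufferlimit$ elements, and that the re-filtering of $\buffer$ after each round, the possibly early termination of the stream, and the fact that $e$ can be drawn at most once all only \emph{decrease} the chance that $e$ is ever selected, so the crude ``$\le 1/\bufferlimit$ per round'' estimate is legitimate. Everything genuinely quantitative is already packaged in \reflemma{pointwise-streaming-greedy}.
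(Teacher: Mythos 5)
Your proof is correct and follows the same route as the paper's: the paper's (two-sentence) argument is exactly that each selection round is uniform over $\bufferlimit$ elements and that \reflemma{pointwise-streaming-greedy} caps the number of rounds by $\OPT/\alpha$, giving $\probof{e \notin \takens} \geq (1 - 1/\bufferlimit)^{\OPT/\alpha}$. Your version merely makes the conditioning rigorous where the paper asserts independence informally.
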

\begin{proof}
  An element is added to $S$ (and therefore $\takens$) if and only if
  it is selected from $\buffer$ when $\sizeof{\buffer}$ reaches
  $\bufferlimit$. By \reflemma{pointwise-streaming-greedy}, we select
  from $\buffer$ at most $\OPT/\alpha$ times, and each selection is
  made uniformly and independently at random from $\bufferlimit$
  elements.
\end{proof}

With this, we apply \reflemma{bfns-ev-14} to give the following.
\begin{lemma}%
  \labellemma{nn-submodular-upper-bound-2} Let $T \in \independents$
  be a fixed independent set. Then
  \begin{align*}
    \evof{f(U \cup T)} \geq \parof{1 -
      \frac{1}{\bufferlimit}}^{\OPT/\alpha} f(T).
  \end{align*}
\end{lemma}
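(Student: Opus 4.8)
The plan is to obtain \reflemma{nn-submodular-upper-bound-2} by simply stitching together two facts already in hand: \reflemma{probability-taken}, which bounds the probability that a fixed element of $\groundset$ ever enters $\takens$, and the ``moreover'' clause of \reflemma{bfns-ev-14} of Buchbinder \etal, which lower bounds $\evof{f(R \cup Y)}$ whenever $R$ is a random set whose elements each appear with probability at most $\rho$. The natural choice is $\rho = 1 - \parof{1 - \frac{1}{\bufferlimit}}^{\OPT/\alpha}$, so that $1 - \rho$ is exactly the factor appearing in the statement.

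First I would note that $\takens$ is a random subset of $\groundset$: for a fixed input stream, its distribution is induced entirely by the random bits that dictate the uniform selections from the buffer $\buffer$ (it does not depend on \refalgo{Offline}'s internal randomness). By \reflemma{probability-taken}, every element $e \in \groundset$ satisfies $\probof{e \in \takens} \leq \rho$, which is precisely the hypothesis of \reflemma{bfns-ev-14} with parameter $\rho$. Then I would invoke \reflemma{bfns-ev-14} with $R = \takens$ and $Y = T$ — here $T$ is a fixed subset of $\groundset$, so it is an admissible choice of $Y$ — to conclude
\[
  \evof{f(\takens \cup T)} \;\geq\; (1 - \rho)\, f(T) \;=\; \parof{1 - \frac{1}{\bufferlimit}}^{\OPT/\alpha} f(T),
\]
which is exactly the claimed bound.

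There is no substantive obstacle here; the work was already done in proving \reflemma{probability-taken}. The only point that deserves a sentence of care is to make sure $\takens$ genuinely qualifies as the random set $R$ in \reflemma{bfns-ev-14} — that is, that the inclusion-probability bound $\rho$ is a property of the single (fixed) distribution of $\takens$ and holds simultaneously for all elements, rather than merely elementwise under varying conditionings. This is immediate, since \reflemma{probability-taken} is precisely a bound on the marginals of the distribution of $\takens$, so the hypotheses of \reflemma{bfns-ev-14} are met verbatim and the conclusion follows.
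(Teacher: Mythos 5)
Your proposal is correct and matches the paper's proof exactly: both apply \reflemma{probability-taken} to bound each marginal $\probof{e \in \takens}$ by $\rho = 1 - \parof{1 - 1/\bufferlimit}^{\OPT/\alpha}$ and then invoke \reflemma{bfns-ev-14} with $R = \takens$ and $Y = T$. Nothing further is needed.
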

\begin{proof}
  By \reflemma{probability-taken}, for all $e \in \groundset$,
  $\probof{e \in U} \leq \parof{1 - \parof{1 -
      1/\bufferlimit}^{\OPT/\alpha}} \equiv \rho$. The claim then
  follows \reflemma{bfns-ev-14}.
\end{proof}

\subsection{Overall Analysis}
\labelsection{competition-vs-final-set}

Tying together \reflemma{nn-submodular-upper-bound-1} and
\reflemma{nn-submodular-upper-bound-2}, we have the following.
\begin{lemma}%
  \labellemma{nn-submodular-overall-upper-bound} %
  For any $T \in \independents$, we have
  \begin{align*}
    \parof{1 - \frac{\OPT}{\alpha \bufferlimit}} f(T) %
    \leq %
    k \alpha + \parof{\frac{(1 + \beta)^2}{\beta} \cdot p +
      \frac{1}{\offlineratio}} \evof{f(\bestset)}
  \end{align*}
\end{lemma}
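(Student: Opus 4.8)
The plan is to chain the two one-sided estimates that have just been established. \reflemma{nn-submodular-upper-bound-2} supplies the lower bound $\evof{f(\takens \cup T)} \ge \parof{1 - 1/\bufferlimit}^{\OPT/\alpha} f(T)$, while \reflemma{nn-submodular-upper-bound-1} supplies the upper bound $\evof{f(\takens \cup T)} \le k\alpha + \parof{\frac{(1+\beta)^2}{\beta} \cdot p + \frac{1}{\offlineratio}} \evof{f(\bestset)}$. Stacking the two immediately gives
\begin{align*}
  \parof{1 - \frac{1}{\bufferlimit}}^{\OPT/\alpha} f(T) \le k\alpha + \parof{\frac{(1+\beta)^2}{\beta} \cdot p + \frac{1}{\offlineratio}} \evof{f(\bestset)},
\end{align*}
so the only thing left to do is to lower-bound the exponential factor on the left by the linear quantity $1 - \OPT/(\alpha\bufferlimit)$ claimed in the statement.

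For that I would invoke Bernoulli's inequality in its real-exponent form, $(1+y)^r \ge 1 + ry$ for $y \ge -1$ and $r \ge 1$, taking $y = -1/\bufferlimit$ (legitimate since $\bufferlimit \ge 1$) and $r = \OPT/\alpha$; because $f(T) \ge 0$ this is exactly the direction needed, and substituting back finishes the lemma. The single point requiring a sentence of care is the hypothesis $\OPT/\alpha \ge 1$. In the regime we ultimately use, $\alpha$ is chosen proportional to $\OPT/k$, so $\OPT/\alpha = \Theta(k) \ge 1$ and there is nothing to check; in the degenerate case $\OPT/\alpha < 1$, \reflemma{size-of-tokens} forces $\sizeof{\takens} = 0$, so no element is ever selected from the buffer, $\takens = \emptyset$, and $\evof{f(\takens \cup T)} = f(T) \ge \parof{1 - \OPT/(\alpha\bufferlimit)} f(T)$ holds trivially. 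Alternatively, one can sidestep this entirely by noting that the claimed inequality is vacuous whenever $1 - \OPT/(\alpha\bufferlimit) \le 0$, since its right-hand side is always nonnegative.

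I do not expect a real obstacle here: all the substance has already been absorbed into \reflemma{nn-submodular-upper-bound-1} and \reflemma{nn-submodular-upper-bound-2}, which in turn rest on the size bound $\sizeof{\takens} \le \OPT/\alpha$ from \reflemma{size-of-tokens}, the $p$-matchoid exchange argument of \reflemma{nn-submodular-streaming-exchange-lemma}, the \refalgo{Offline} guarantee, and the probabilistic lemma of Buchbinder \etal\ (\reflemma{bfns-ev-14}); the present lemma is just the bookkeeping that assembles them. The only judgement call is cosmetic — whether to keep the $(1-1/\bufferlimit)^{\OPT/\alpha}$ factor or linearize it — and linearizing is preferable because the subsequent sections will set $\beta = 1$ and then choose $\bufferlimit$ and $\alpha$ to balance the additive loss $k\alpha$ against $\OPT/(\alpha\bufferlimit)$, a balancing that is transparent in the linearized form on the way to \reftheorem{final-randomized-greedy-algorithm}.
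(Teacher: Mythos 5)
Your proof is correct and follows essentially the same route as the paper: compose \reflemma{nn-submodular-upper-bound-1} with \reflemma{nn-submodular-upper-bound-2} and then linearize the factor $\parof{1 - 1/\bufferlimit}^{\OPT/\alpha}$ via Bernoulli's inequality. Your extra sentence checking the exponent hypothesis $\OPT/\alpha \geq 1$ (and disposing of the degenerate case) is a small point of care the paper omits, but it does not change the argument.
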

\begin{proof}
  Composing \reflemma{nn-submodular-upper-bound-1} and
  \reflemma{nn-submodular-upper-bound-2}, we have,
  \begin{align*}
    \parof{1 - \frac{1}{\bufferlimit}}^{\OPT/\alpha} f(T) %
    \leq %
    k \alpha + \parof{\frac{(1 + \beta)^2}{\beta} \cdot p +
      \frac{1}{\offlineratio}} \evof{f(\bestset)}.
  \end{align*}
  By Bernoulli's inequality,
  \begin{align*}
    \parof{1 - \frac{1}{\bufferlimit}}^{\OPT/\alpha} %
    &\geq %
    1 - \frac{\OPT}{\alpha \bufferlimit},
  \end{align*}
  and the claim follows.
\end{proof}

\subsection{A bound for approximate $\alpha$}
\labelsection{relaxed-competition-vs-final-set} %
\labelsection{relaxed-competition-vs-final-set} %
We would like to fix $\alpha$ as a constant fraction of $\OPT$. For
example, taking $\alpha = \epsilon \OPT / 2 k$, where $\epsilon > 0$,
and plugging into \reflemma{nn-submodular-overall-upper-bound} gives
the cleaner bound,
\begin{align*}
  \parof{1 - \frac{2 k}{\epsilon K}} f(T) %
  \leq %
  \frac{\epsilon}{2} \OPT + \parof{\frac{(1 + \beta)^2}{\beta} \cdot p
    + \frac{1}{\offlineratio}} \evof{f(\bestset)}.
\end{align*}
However, the algorithm does not know $\OPT$, and instead we will try
to estimate $\OPT$ approximately. Let us lay out the bound when
$\alpha$ is within a factor of 2 of $\epsilon \OPT / 2k$.
\begin{lemma}
  Let $\epsilon > 0$ be a fixed parameter.  If $\epsilon \cdot \OPT /
  4 k \leq \alpha \leq \epsilon \cdot \OPT / 2 k$, then
  \begin{align*}
    \parof{1 - \frac{4 k}{\epsilon K}}\OPT %
    \leq %
    \frac{\epsilon}{2} \OPT + \parof{\frac{(1 + \beta)^2}{\beta} \cdot
      p + \frac{1}{\offlineratio}} \evof{f(\bestset)}.
  \end{align*}
  In particular, for $K = 4 k / \epsilon^2$, we have
  \begin{align*}
    (1 - \epsilon) \OPT \leq \parof{\frac{(1 + \beta)^2}{\beta} \cdot
      p + \frac{1}{\offlineratio}} \evof{f(\bestset)}.
  \end{align*}
\end{lemma}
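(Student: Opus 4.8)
The plan is to obtain this purely by algebra from \reflemma{nn-submodular-overall-upper-bound}; no new structural or probabilistic argument is needed, since all the work has already been done in the preceding subsections.

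First I would instantiate \reflemma{nn-submodular-overall-upper-bound} with $T = \optset$, an independent set achieving $f(\optset) = \OPT$, which gives
\[
  \parof{1 - \frac{\OPT}{\alpha \bufferlimit}} \OPT
  \leq
  k\alpha + \parof{\frac{(1 + \beta)^2}{\beta} \cdot p + \frac{1}{\offlineratio}} \evof{f(\bestset)}.
\]
Then I would use the two-sided hypothesis $\epsilon \OPT / 4k \leq \alpha \leq \epsilon \OPT / 2k$ to clean up the two $\alpha$-dependent quantities. The upper bound $\alpha \leq \epsilon\OPT/2k$ controls the additive error: $k\alpha \leq \tfrac{\epsilon}{2}\OPT$. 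The lower bound $\alpha \geq \epsilon\OPT/4k$ controls the leading coefficient: $\OPT/(\alpha\bufferlimit) \leq 4k/(\epsilon\bufferlimit)$, hence $1 - \OPT/(\alpha\bufferlimit) \geq 1 - 4k/(\epsilon\bufferlimit)$. Since $\OPT \geq 0$, multiplying this through preserves the inequality, and substituting both bounds into the display above yields
\[
  \parof{1 - \frac{4k}{\epsilon\bufferlimit}} \OPT
  \leq
  \frac{\epsilon}{2}\OPT + \parof{\frac{(1 + \beta)^2}{\beta} \cdot p + \frac{1}{\offlineratio}} \evof{f(\bestset)},
\]
which is the first claimed bound.

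For the ``in particular'' clause, substituting $\bufferlimit = 4k/\epsilon^2$ makes $4k/(\epsilon\bufferlimit) = \epsilon$, so the left side becomes $(1-\epsilon)\OPT$; moving $\tfrac{\epsilon}{2}\OPT$ to the left leaves $\parof{1 - \tfrac{3}{2}\epsilon}\OPT$. To match the stated form $(1-\epsilon)\OPT$ exactly I would run the whole argument with $\epsilon$ replaced by $\tfrac{2}{3}\epsilon$ at the outset --- which changes $\bufferlimit$ to $9k/\epsilon^2 = O(k/\epsilon^2)$, preserving the $O(k\log k/\epsilon^2)$ total-space bound of \reftheorem{final-randomized-greedy-algorithm} once $O(\log k)$ parallel guesses for $\alpha$ are accounted for (\refsection{estimating-threshold}).

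The only thing requiring care is keeping the inequality directions straight when plugging in the bounds on $\alpha$: the \emph{lower} bound on $\alpha$ is what is needed to lower-bound the coefficient $1 - \OPT/(\alpha\bufferlimit)$, while the \emph{upper} bound on $\alpha$ is what bounds the additive error term $k\alpha$. There is no genuine obstacle here --- it is a short calculation on an inequality that is already in hand.
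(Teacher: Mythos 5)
Your proposal is correct and matches the intended argument: the paper gives no proof for this lemma precisely because it is the direct algebraic substitution into \reflemma{nn-submodular-overall-upper-bound} that you carry out, with the lower bound on $\alpha$ controlling the coefficient and the upper bound controlling the additive term. You are also right that the ``in particular'' clause as literally stated yields $(1-\tfrac{3}{2}\epsilon)\OPT$ rather than $(1-\epsilon)\OPT$, and the standard rescaling of $\epsilon$ you propose (which keeps $K = O(k/\epsilon^2)$) is the correct fix for this minor slack in the paper's constants.
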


\subsection{Efficiently estimating $\alpha$}
\newcommand{\thresholds}{\mathcal{A}}
\labelsection{estimating-threshold}

Badanidiyuru \etal showed how to ``guess'' $\OPT$ space-efficiently
and in a single pass \cite{bmkk-sso-14}. Let $z = \argmax_{x \in
  \groundset} f(x)$. Clearly, $\OPT \geq f(z)$, and by submodularity
of $f$,
\begin{align*}
  \OPT = \max_{T \in \independents} f(T) \leq \max_{T \in
    \independents} \sum_{t \in T} f(t) \leq \max_{T \in \independents}
  \sizeof{T} \cdot f(z) \leq k \cdot f(z).
\end{align*}
Fix $\epsilon > 0$, and suppose we run a parallel copy of
\refalgo{Randomized-Streaming-Greedy} for each $\alpha$ in
\begin{align*}
  \thresholds(z) = \setof{2^i \where i \in \integers} %
  \cap %
  \bracketsof{\frac{\epsilon}{4k} \cdot f(z), \frac{\epsilon}{2}
    f(z)},
\end{align*}
and at the end of the stream return the best solution among the $\log
(2k)$ copies. For some $\alpha \in \thresholds(z)$, we have
\begin{align*}
  \frac{\epsilon}{4k} \cdot \OPT \leq \alpha \leq \frac{\epsilon}{2k}
  \cdot \OPT,
\end{align*}
where we get the approximation guarantee in
\reflemma{nn-submodular-overall-upper-bound}.

This strategy requires two passes: one to identify $z$, and the second
running $\log(2k)$ copies of
\refalgo[Randomized-Streaming-Greedy]{Randomized-{\allowbreak}Streaming-{\allowbreak}Greedy}
in parallel.  We can reduce the number of passes to 1 by updating $z$
and $\thresholds(z)$ on the fly. Enumerate the stream $e_1,e_2,\dots$,
and for for each $i$, let
\begin{align*}
  z_i = \argmax_{e_j \where j \in [i]} f(e_j)
\end{align*}
be the single element maximizing $f$ among the first $i$ elements seen
thus far. $\thresholds(z_i)$ shifts up over through the stream as
$z_i$ is updated. At each step $i$, we maintain parallel solutions for
each choice of $\alpha \in \thresholds(z_i)$, deleting instances with
$\alpha$ below $\thresholds(z_i)$ and instantiating new instances with
larger values of $\alpha$.

To ensure correctness, it suffices to show that when we instantiate an
instance of
\refalgo[Randomized-Streaming-Greedy]{Randomized-{\allowbreak}Streaming-{\allowbreak}Greedy}
for a new threshold $\alpha$, we haven't skipped over any elements
that we would want to include. Let $\alpha \in \thresholds(z_i) -
\thresholds(z_{i-1})$, i.e.\ $f(z_{i-1}) < \alpha \leq f(z_i)$. If
$f(e_j) \geq \alpha$ for some $j < i$, then
\begin{align*}
  \alpha \leq f(e_j) \leq f(z_{i-1}),
\end{align*}
a contradiction.

\FloatBarrier
\subsection{Simpler algorithm and better bound for cardinality
  constraint}
\labelsection{cardinality}

% \begin{figure}[t]
%   \begin{framed}
%     \begin{pseudocode}
%       \begin{routine}[Randomized-Streaming-Greedy-cardinality]
%         {Randomized-Streaming-Greedy}{$\alpha$,$\infty$}
%         $B \gets \emptyset$, $S \gets \emptyset$ \\
%         while (stream is not empty) \\
%         \> $e \gets$ next element in the stream \\
%         \> if $\sizeof{S} \leq k$ and $f_S(e) > \alpha$ then \\
%         \> \> $B \gets B + e$ \\
%         \> if $\sizeof{B} = K$ then \\
%         \> \> $e \gets$ uniformly random element from $B$  \\
%         \> \> $B \gets B - e$ \\
%         \> \> $S \gets S + e$ \\
%         \> \> for all $e' \in B$ such that $f_S(e) \leq \alpha$ \\
%         \> \> \> $B \gets B - e'$\\
%         \> end if \\
%         end while \\
%         $S' \gets \refalgo{Offline}{B}$\\
%         return $\argmax_{Z \in \setof{S,S'}} f(Z)$
%       \end{routine}
%     \end{pseudocode}
%   \end{framed}
%   \vspace{-2em}
%   \caption{\refalgo[Randomized-Streaming-Greedy]{Randomized-Streaming-Greedy}{$\alpha$,$\infty$}
%   streamlines \refalgo{Randomized-Streaming-Greedy} for a
%   cardinality constraint of rank $k$.}
%   \labelfigure{randomized-streaming-greedy-cardinality}
%   \vspace{-2em}
% \end{figure}
\begin{figure}
  % \begin{wrapfigure}{r}{.475\textwidth}
  \centering
  \begin{minipage}{7.15cm}
    \begin{framed}
      \begin{pseudocode}
        \begin{routine}[Randomized-Streaming-Greedy-cardinality]
          {Randomized-Streaming-Greedy}{$\alpha$,$\infty$}
          $B \gets \emptyset$, $S \gets \emptyset$ \\
          while (stream is not empty) \\
          \> $e \gets$ next element in the stream \\
          \> if $\sizeof{S} \leq k$ and $f_S(e) > \alpha$ then \\
          \> \> $B \gets B + e$ \\
          \> if $\sizeof{B} = K$ then \\
          \> \> $e \gets$ uniformly random from $B$  \\
          \> \> $B \gets B - e$, $S \gets S + e$ \\
          \> \> for all $e' \in B$ s.t.\ $f_S(e') \leq \alpha$ \\
          \> \> \> $B \gets B - e'$\\
          \> end if \\
          end while \\
          $S' \gets \refalgo{Offline}{B}$\\
          return $\argmax_{Z \in \setof{S,S'}} f(Z)$
        \end{routine}
      \end{pseudocode}
    \end{framed}
    \vspace{.5em}
    % \caption{\refalgo[Randomized-Streaming-Greedy]{Randomized-Streaming-Greedy}{$\alpha$,$\infty$}
    % streamlines \refalgo{Randomized-Streaming-Greedy} for a
    % cardinality constraint of rank $k$.}
    \caption{}
    \labelfigure{randomized-streaming-greedy-cardinality}
    % \end{wrapfigure}

  \end{minipage}
\end{figure}

When the $p$-matchoid is simply a cardinality constraint with rank
$k$, we can do better. If we set $\beta = \infty$ in
\refalgo{Randomized-Streaming-Greedy}{$\alpha$,$\beta$}, then the
algorithm will only try to add to $S$ without exchanging while
$\sizeof{S} < k$, effectively halting once we meet the cardinality
constraint $\sizeof{S} = k$. In
\reffigure{randomized-streaming-greedy-cardinality}, we rewrite
\refalgo{Randomized-Streaming-Greedy}{$\alpha$,$\infty$} with the
unnecessary logic removed.

\newcommand{\sizeoffinalset}{|\finalset|}
\begin{lemma} %
  \labellemma{full-set-cardinality-bound} %
  If $\sizeoffinalset = k$, then $f(\finalset) \geq k \alpha$.
\end{lemma}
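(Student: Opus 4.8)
The plan is to exploit that the choice $\beta = \infty$ makes the subroutine \refalgo{Exchange-Candidates} irrelevant: no element is ever removed from $S$, so $S$ only grows by single insertions and $\finalset = \takens$. I would enumerate $\finalset = \setof{s_1,\dots,s_k}$ in the order in which the elements are inserted into $S$, and write $S_{i-1} = \setof{s_1,\dots,s_{i-1}}$ for the value of $S$ immediately before $s_i$ is inserted (so $S_0 = \emptyset$ and $S_k = \finalset$). The whole lemma then reduces to the single claim that each element passes the marginal-value test \emph{at the moment it is inserted}, i.e.\ $f_{S_{i-1}}(s_i) > \alpha$ for every $i \in \setof{1,\dots,k}$. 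Granting this, telescoping together with $f(\emptyset) \geq 0$ gives
\begin{align*}
  f(\finalset) \;\geq\; f(\finalset) - f(\emptyset) \;=\; \sum_{i=1}^{k} \bigl(f(S_i) - f(S_{i-1})\bigr) \;=\; \sum_{i=1}^{k} f_{S_{i-1}}(s_i) \;>\; k\alpha,
\end{align*}
which is in fact slightly stronger than the stated bound.

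To establish $f_{S_{i-1}}(s_i) > \alpha$ I would track the invariant maintained by the algorithm of \reffigure{randomized-streaming-greedy-cardinality}: at every moment when the algorithm is about to draw an element from the buffer $\buffer$, every $b \in \buffer$ satisfies $f_S(b) > \alpha$ with respect to the \emph{current} running set $S$. This holds because (i) an element is admitted into $\buffer$ only after passing the test $f_S(\cdot) > \alpha$ against the current $S$; (ii) the only event that modifies $S$ is the insertion of an element drawn from $\buffer$, and that insertion is immediately followed by the cleanup loop that evicts every remaining $b \in \buffer$ with $f_S(b) \leq \alpha$; and (iii) $S$ does not change between one such cleanup and the next draw. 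Since no element is ever removed from $S$, the value of $S$ at the instant $s_i$ is drawn is exactly $S_{i-1}$, so the invariant yields $f_{S_{i-1}}(s_i) > \alpha$.

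The only delicate point — and the main (if minor) obstacle — is making precise that the set against which $s_i$ passed its marginal-value test is exactly $S_{i-1}$, and not some strictly smaller prefix of $\finalset$. The key observation is that $S$ remains equal to $S_{i-1}$ from the moment $s_{i-1}$ is inserted until the moment $s_i$ is drawn (no intervening draw succeeds, by definition of $s_i$ as the next inserted element). Hence, whether $s_i$ enters $\buffer$ during that window (in which case its admission test is taken against $S_{i-1}$) or was already in $\buffer$ beforehand (in which case it survives the cleanup immediately following $s_{i-1}$'s insertion, again against $S_{i-1}$), the relevant inequality is $f_{S_{i-1}}(s_i) > \alpha$. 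With that pinned down, the remainder is the one-line telescoping computation displayed above.
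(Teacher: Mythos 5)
Your proof is correct, and it supplies exactly the argument the paper leaves implicit (the lemma is stated without proof): with $\beta=\infty$ no element is ever evicted from $S$, each inserted element has marginal gain exceeding $\alpha$ against the current $S$ thanks to the admission test and the post-insertion cleanup of the buffer, and telescoping with $f(\emptyset)\geq 0$ gives $f(\finalset) \geq k\alpha$. Your care in pinning down that the relevant test is taken against $S_{i-1}$ (via the buffer invariant) is exactly the right detail to check.
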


\begin{lemma}
  \labellemma{unfull-set-cardinality-bound} %
  If $\sizeoffinalset < k$, then for any set $T \subseteq \groundset$,
  \begin{align*}
    f(\finalset \cup T) %
    \leq %
    f(\finalset) + f(T \cap B) + \alpha \sizeof{T}.
  \end{align*}
\end{lemma}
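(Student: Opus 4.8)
The plan is to classify the elements of $T$ by what the algorithm does with each of them. Writing $f(\finalset \cup T) = f(\finalset) + f_{\finalset}(T \setminus \finalset)$, I would partition $T \setminus \finalset$ into $A = (T \cap \finalbuffer)\setminus\finalset$, the part of $T$ still sitting in the buffer when the stream ends (here $\finalbuffer$ is the set written $B$ in the statement), and $C = (T \setminus \finalbuffer)\setminus\finalset$, the part of $T$ that was read but is neither in the final solution nor in the final buffer. Since $A$ and $C$ are disjoint with union $T \setminus \finalset$, subadditivity of marginals (a consequence of submodularity) gives $f_{\finalset}(T \setminus \finalset) \le f_{\finalset}(A) + f_{\finalset}(C)$, so it suffices to prove $f_{\finalset}(A) \le f(T \cap \finalbuffer)$ and $f_{\finalset}(C) \le \alpha\sizeof{T}$; adding these to $f(\finalset)$ yields the claim.

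The bound on $f_{\finalset}(A)$ is routine: since $\finalset \cup A = \finalset \cup (T \cap \finalbuffer)$, submodularity in the form $f(X \cup Y) + f(X \cap Y) \le f(X) + f(Y)$ with $X = \finalset$ and $Y = T \cap \finalbuffer$, together with $f \ge 0$, gives $f_{\finalset}(A) = f(\finalset \cup (T \cap \finalbuffer)) - f(\finalset) \le f(T \cap \finalbuffer)$. It is worth flagging that this uses nonnegativity, not monotonicity, since $f$ is allowed to be non-monotone here.

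The heart of the argument, and the step I expect to be the main obstacle, is showing $f_{\finalset}(t) \le \alpha$ for every $t \in C$, which rests on a careful reading of the algorithm in \reffigure{randomized-streaming-greedy-cardinality}. The key structural observation is that in the cardinality version the running set $S$ is never shrunk, so $\sizeoffinalset < k$ forces $\sizeof{S} \le \sizeoffinalset < k$ throughout the stream; hence the guard ``$\sizeof{S} \le k$'' is always satisfied and never blocks an element. Consequently, for $t \in C$ there are only two live cases: either $t$ was never placed in the buffer, in which case it failed the marginal test when it was read, i.e.\ $f_{\setbefore{t}}(t) \le \alpha$; or $t$ was placed in the buffer and later discarded in a clean-up step, in which case $f_{S}(t) \le \alpha$ for the running set $S$ at the time of that discard. (The remaining possibilities are excluded: if $t$ were selected from the buffer then $t \in \finalset$, and if $t$ were never removed from the buffer then $t \in \finalbuffer$, either way contradicting $t \in C$.) In both live cases the relevant running set is contained in $\finalset$, again because $S$ only grows, so decreasing marginal returns give $f_{\finalset}(t) \le \alpha$. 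Summing over $t \in C$ yields $f_{\finalset}(C) \le \sum_{t \in C} f_{\finalset}(t) \le \alpha\sizeof{C} \le \alpha\sizeof{T}$, and combining the three estimates completes the proof. The only delicate point beyond the case split itself is verifying that ``discarded'' and ``never buffered'' each leave behind a witness running set that is a subset of $\finalset$ — which is precisely what the monotonicity of $S$ in the cardinality algorithm and the vacuity of the cardinality guard (from $\sizeoffinalset < k$) provide.
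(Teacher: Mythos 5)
Your proof is correct and follows essentially the same route as the paper's: split $f_{\finalset}(T)$ into the part caught by the buffer (bounded by $f(T \cap B)$ via submodularity and nonnegativity) and the rejected part (bounded by $\alpha$ per element via decreasing marginals and the fact that $S$ only grows). Your case analysis of why each rejected $t$ satisfies $f_{\finalset}(t) \leq \alpha$ --- never buffered versus discarded in a clean-up step, plus the observation that $\sizeoffinalset < k$ makes the cardinality guard vacuous --- is in fact more explicit than the paper's one-line justification.
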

\begin{proof}
  Fix $t \in T \setminus (\finalset \cup B)$, and let $\setbefore{t}$
  be the set held by $\finalset$ when $t$ is processed. Since $t$ is
  rejected, and $S_t^- \subseteq \finalset$, we have
  \begin{align*}
    f_{\finalset}(t) \leq f_{\setbefore{t}}(t) \leq \alpha.
  \end{align*}
  Summed over all $t \in T \setminus (\finalset \cup B)$, we have
  \begin{align*}
    f_{\finalset}(T \setminus B) %
    \leq %
    \sum_{t \in T \setminus B } f_{\finalset}(t) %
    \leq %
    \alpha \sizeof{T}.
  \end{align*}
  Finally, we write
  \begin{align*}
    f(\finalset \cup T) %
    &= f_{\finalset}(T) + f(\finalset) %
    \leq %
    f_{\finalset}(T \setminus B) + f_{\finalset}(T \cap B) +
    f(\finalset) %
    \\
    &\leq %
    f(\finalset) + f(T \cap B) + \alpha \sizeof{T}
  \end{align*} to attain the desired bound.
\end{proof}

\begin{lemma}
  \labellemma{cardinality-constraint-bound}
  For $K = k/\epsilon$, and $\alpha$ such that $(1-\epsilon) \OPT \leq
  (2+e) k \alpha \leq (1+\epsilon) \OPT$, we have
  \begin{align*}
    \evof{f(\bestset)} \geq \frac{1-2\epsilon}{2 + e} \cdot \OPT.
  \end{align*}
\end{lemma}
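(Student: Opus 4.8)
The plan is to fix an optimal solution $\optset \in \independents$ with $f(\optset) = \OPT$ and $\sizeof{\optset} \le k$, instantiate the subroutine \refalgo{Offline} as the randomized greedy of \cite{bfns-smcc-14} so that $\offlineratio = 1/e$, and split the analysis on the event $\mathcal{E} = \setof{\sizeof{\finalset} < k}$ that the running set never fills up. On $\bar{\mathcal{E}}$ the bound is immediate: \reflemma{full-set-cardinality-bound} gives $f(\bestset) \ge f(\finalset) \ge k\alpha$, and since $(2+e)k\alpha \ge (1-\epsilon)\OPT$ this already yields $f(\bestset) \ge \tfrac{1-\epsilon}{2+e}\OPT \ge \tfrac{1-2\epsilon}{2+e}\OPT$ pointwise on $\bar{\mathcal{E}}$, so all the work is on $\mathcal{E}$.

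On $\mathcal{E}$ I would run the argument of \refsection{competition-vs-competition-and-takens}. A cardinality constraint never deletes, so $\finalset = \takens$, and on $\mathcal{E}$ strictly fewer than $k$ elements are ever drawn from the buffer; since a fixed element survives each draw with probability at least $1 - 1/\bufferlimit$, the product of survival probabilities over $<k$ draws is at least $(1-1/\bufferlimit)^{k} \ge 1 - k/\bufferlimit = 1-\epsilon$, and this estimate survives conditioning on $\mathcal{E}$ (one has $\probof{e \notin \takens,\,\mathcal{E}} \ge (1-1/\bufferlimit)^{k}\probof{\mathcal{E}}$), so $\probof{e \in \takens \mid \mathcal{E}} \le \epsilon$ for every $e$. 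Applying \reflemma{bfns-ev-14} to the conditional law of $\takens$ with $Y = \optset$ gives $\evof{f(\takens \cup \optset) \mid \mathcal{E}} \ge (1-\epsilon)\OPT$. On $\mathcal{E}$, \reflemma{unfull-set-cardinality-bound} applies with $T = \optset$, giving $f(\takens \cup \optset) = f(\finalset \cup \optset) \le f(\finalset) + f(\optset \cap \finalbuffer) + \alpha k$; and since $\optset \cap \finalbuffer$ is feasible for the offline instance on $\finalbuffer$, the offline guarantee gives $f(\optset \cap \finalbuffer) \le (1/\offlineratio)\,\evof{f(S') \mid \finalbuffer} = e\,\evof{f(S') \mid \finalbuffer}$. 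Chaining these and using $f(\bestset) \ge \max\setof{f(\finalset), f(S')}$,
\begin{align*}
  (1-\epsilon)\OPT
  &\le \evof{f(\finalset) \mid \mathcal{E}} + e\,\evof{f(S') \mid \mathcal{E}} + \alpha k \\
  &\le (1+e)\,\evof{f(\bestset) \mid \mathcal{E}} + \alpha k .
\end{align*}
Feeding in $\alpha k \le \tfrac{1+\epsilon}{2+e}\OPT$ and rearranging gives $\evof{f(\bestset) \mid \mathcal{E}} \ge \tfrac{1}{2+e}\bigl(1 - \tfrac{3+e}{1+e}\epsilon\bigr)\OPT \ge \tfrac{1-2\epsilon}{2+e}\OPT$, the last step using $e > 1$. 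Averaging the two cases over the law of $\mathcal{E}$ finishes the proof.

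The step I expect to be most delicate is the conditional probability bound for the buffer: \reflemma{probability-taken} only supplies $\probof{e \in \takens} \le \epsilon$ unconditionally, and one must verify — via the observation that each survival factor is at least $1 - 1/\bufferlimit$ and there are fewer than $k$ of them on $\mathcal{E}$ — that conditioning on $\mathcal{E}$ cannot inflate this probability, which is precisely what licenses applying \reflemma{bfns-ev-14} to the conditional distribution. A secondary bit of care is ordering the expectations so that the offline algorithm's internal randomness is averaged out after conditioning on $\finalbuffer$, so that the single chain above is legitimate.
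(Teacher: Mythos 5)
Your proof is correct and follows essentially the same route as the paper's: case split on whether $\finalset$ fills up, \reflemma{full-set-cardinality-bound} in the full case, and in the unfull case the chain \reflemma{unfull-set-cardinality-bound} $+$ \reflemma{bfns-ev-14} $+$ the \refalgo{Offline} guarantee, followed by the same arithmetic. The one place you go beyond the paper is in explicitly conditioning on the event $\sizeof{\finalset}<k$ and arguing that $\probof{e\in\takens\mid\mathcal{E}}\le\epsilon$ still holds; the paper silently performs the same case split on a random event and then invokes the unconditional bounds, so your extra care addresses a point the published proof elides rather than introducing a different method.
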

\begin{proof}
  Let $T \subseteq \groundset$ be an optimal set with $\sizeof{T} = k$
  and $\OPT = f(T)$.

  If $\sizeoffinalset = k$, then the claim follows
  \reflemma{full-set-cardinality-bound}.  Otherwise, $\sizeoffinalset
  < k$ and by \reflemma{unfull-set-cardinality-bound}, we have
  \begin{align*}
    f(\finalset \cup T)\leq f(\finalset) + f(T \cap B) + k \alpha.
  \end{align*}
  By \reflemma{bfns-ev-14}, we also have
  \begin{align*}
    \evof{\finalset \cup T} %
    \geq %
    \parof{1-\frac{1}{K}}^k f(T) %
    \geq \parof{1 - \frac{k}{K}} f(T) %
    \geq %
    (1-\epsilon) f(T).
  \end{align*}
  Finally, by the bound for \refalgo{Offline}, we have,
  \begin{align*}
    f(T \cap B) \leq e \evof{f(S')}
  \end{align*}
  Together, we have
  \begin{align*}
    (1-\epsilon) f(T) %
    \leq %
    \evof{f(\finalset) + e f(S')} + k \alpha %
    \leq %
    \parof{1 + e} \evof{f(\bestset)} + k \alpha.
  \end{align*}
  Solving for $\evof{f(\bestset)}$ and plugging in $f(T) = \OPT$ and
  $k \alpha \leq \frac{1+\epsilon}{2+e} \OPT$, we have
  \begin{align*}
    \evof{f(\bestset)} %
    \geq %
    \frac{1}{1+e} \parof{(1-\epsilon) \OPT - k \alpha} %
    \geq %
    % \frac{1}{1+e} \parof{(1-\epsilon) \OPT - \frac{1 +
    % \epsilon}{2+e}
    % \OPT} %
    % \geq %
    \frac{(1-2\epsilon)}{2+e} \OPT,
  \end{align*}
  as desired.
\end{proof}

The preceding analysis reveals that the appropriate choice for
$\alpha$ is $\OPT / (2+e)k$, where $\OPT = \maxof{f(T) \where
  \sizeof{T} \leq k}$ is the maximum value attainable by a set of $k$
elements, and that a sufficiently large choice for $K$ is
$k/\epsilon$. As in \refsection{estimating-threshold}, we can
efficiently approximate $\alpha$ by guessing $\alpha$ in increasing
powers of $(1+\epsilon)$, maintaining at most $O(\log_{1+\epsilon} k)
= O(\epsilon^{-1} \log k)$ instances of
\refalgo[Randomized-Streaming-Greedy-cardinality]{Randomized-Streaming-Greedy}{$\alpha$,$\infty$}
at any instant. The resulting bound is stronger than previously
derived for a 1-matchoid.

\begin{theorem}
  Let $f: 2^{\groundset} \to \nnreals$ be a nonnegative submodular
  function over a ground set $\groundset$, and let $\eps > 0$ be
  fixed. Then there exists a streaming algorithm using total space
  \begin{math}
    O\parof{\frac{k \log k}{\epsilon^2}}
  \end{math} %
  that, given a stream over $\groundset$, returns a set $\bestset$
  such that $|\bestset| \leq k$ and
  \begin{math}
    f(\bestset) %
    \geq %
    \frac{1-\epsilon}{2 + e} \cdot \OPT %
  \end{math}, %
  where $\OPT = \maxof{f(T) \where \sizeof{T} \leq k}$ is %
  the maximum value attainable by a set of $k$ elements.
\end{theorem}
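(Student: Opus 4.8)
The theorem follows by running the algorithm of \reffigure{randomized-streaming-greedy-cardinality} for many guessed thresholds $\alpha$ and returning the best output, exactly in the style of \refsection{estimating-threshold}, so that \reflemma{cardinality-constraint-bound} applies to one of the guesses. Throughout fix $\eps' = \eps/2$ and use buffer limit $\bufferlimit = k/\eps'$.

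\emph{Choosing the guesses.} Let $\bar\alpha = \OPT/\bigl((2+e)k\bigr)$ be the ideal threshold identified by \reflemma{cardinality-constraint-bound}, and let $z = \argmax_{e \in \groundset} f(e)$. Submodularity together with $\sizeof{T} \le k$ gives $f(z) \le \OPT \le k f(z)$, so $\bar\alpha$ lies in the interval $\bracketsof{f(z)/((2+e)k),\ f(z)}$. For every $\alpha \in \setof{(1+\eps')^{j} \where j \in \integers}$ that lies in this interval we run one copy of the algorithm of \reffigure{randomized-streaming-greedy-cardinality}; there are $O\bigl(\log_{1+\eps'}k\bigr) = O(\eps'^{-1}\log k)$ of them, and some such $\alpha$ satisfies $\bar\alpha \le \alpha \le (1+\eps')\bar\alpha$, equivalently $(1-\eps')\OPT \le (2+e)k\alpha \le (1+\eps')\OPT$. \reflemma{cardinality-constraint-bound} applied to that copy gives $\evof{f(\bestset)} \ge \frac{1-2\eps'}{2+e}\OPT = \frac{1-\eps}{2+e}\OPT$, and since we return the best output over all copies, the final output is at least this good.

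\emph{Space and a single pass.} Each copy stores a solution $S$ with $\sizeof{S}\le k$ and a buffer with at most $\bufferlimit = O(k/\eps)$ elements, and its terminal call to \refalgo{Offline} (a $1/e$-approximation for cardinality-constrained nonnegative submodular maximization \cite{bfns-smcc-14}, which is what \reflemma{cardinality-constraint-bound} assumes) runs on at most $\bufferlimit$ elements; thus each copy uses $O(k/\eps)$ space and the $O(\eps^{-1}\log k)$ copies use $O(\eps^{-2}k\log k)$ in total. The element $z$, and hence the grid of guesses, is unknown until the stream ends, so we maintain it online: at step $i$ let $z_i = \argmax_{e_j \where j \le i} f(e_j)$, keep a copy for each $\alpha \in \setof{(1+\eps')^{j} \where j \in \integers} \cap \bracketsof{f(z_i)/((2+e)k),\ f(z_i)}$, discard copies whose $\alpha$ has fallen below the current window, and spawn a fresh copy for each newly admitted $\alpha$. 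The copy with $\alpha$ close to $\bar\alpha$ is never discarded, because $f(z_i) \le \OPT$ forces $\bar\alpha \ge f(z_i)/((2+e)k)$ at every step. Moreover, spawning a copy late is harmless: if a copy with threshold $\alpha$ is admitted at step $i$, then $\alpha > f(z_{i-1})$ (it was above the window before), so every earlier element $e_j$ with $j < i$ has $f(e_j) \le f(z_{i-1}) < \alpha$, and hence $f_S(e_j) \le f(e_j) < \alpha$ for any set $S$; such elements would never have been buffered or added, so the copy's run is identical to one started at the beginning of the stream, and in particular the step ``$t$ is rejected, so $f_{\setbefore{t}}(t) \le \alpha$'' in the proof of \reflemma{unfull-set-cardinality-bound} still holds for any $t \notin \finalset \cup \buffer$ never seen by the copy, since there $f_{\finalset}(t) \le f(t) < \alpha$ directly.

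\emph{Main obstacle.} The one step needing genuine care is this single-pass argument: checking that a copy spawned partway through the stream still satisfies \reflemma{unfull-set-cardinality-bound}. It works precisely because the top endpoint of the guessing window is $f(z_i)$, which forces every skipped element below the threshold; everything else is bookkeeping around \reflemma{cardinality-constraint-bound} and the estimation scheme of \refsection{estimating-threshold}.
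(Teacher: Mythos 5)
Your proposal is correct and follows essentially the same route as the paper: the paper also proves the theorem by invoking \reflemma{cardinality-constraint-bound} for one of $O(\epsilon^{-1}\log k)$ parallel copies of \refalgo[Randomized-Streaming-Greedy-cardinality]{Randomized-Streaming-Greedy}{$\alpha$,$\infty$} with $\alpha$ guessed in geometric powers, using the on-the-fly window-maintenance scheme of \refsection{estimating-threshold}, and the space accounting is identical. Your explicit verification that a late-spawned copy behaves as if started from the beginning (because skipped elements satisfy $f_S(e_j)\leq f(e_j)<\alpha$) is exactly the argument the paper gives at the end of \refsection{estimating-threshold}, just carried out in more detail.
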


% \section{Analysis of
% \refalgo{Randomized-Streaming-Greedy}{$\alpha$,$\infty$} for
% cardinality constraints}
% \labelsection{cardinality}

\FloatBarrier
\section{A Deterministic Algorithm via Iterated Greedy}
\labelsection{iterated-streaming-greedy}
\labelsection{iterated-streaming-greedy}
\begin{figure}[t]
  \centering
  \begin{minipage}{7.5cm}
    % \begin{wrapfigure}{r}{.535\textwidth}
    %   \vspace{-3em}
    %   \vspace{-3em}
    \begin{framed}
      \begin{pseudocode}
        \begin{routine}{Iterated-Streaming-Greedy}{$\alpha$,$\beta$,$\groundset$}
          \commentcode{run \refalgo{Streaming-Greedy} over $\groundset$}\\
          $(S_1,U_1) \gets
          \refalgo{Streaming-Greedy}{$\alpha$,$\beta$,$\groundset$}$
          \\
          \commentcode{$U_1$ denotes the set $U$ in
            \refsection{streaming-greedy}.}
          \\
          $S_2 \gets
          \refalgo{Streaming-Greedy}{$0$,$\beta$,$\groundset
            \setminus U_1$}$\\
          $S_3 \gets \refalgo{Offline}{$U_1$}$\\
          return $\argmax_{\bestset \in \setof{S_1,S_2,S_3}}
          f(\bestset)$
        \end{routine}
      \end{pseudocode}
    \end{framed}
    % \caption{\refalgo{Iterated-Streaming-Greedy}{$\alpha$,$\beta$}
    %   % described as an algorithm making two passes over
    %   % $\groundset$. \refalgo{Iterated-Streaming-Greedy} can be
    %   % streamlined to a single pass by passing elements of $\groundset
    %   % \setminus U_1$ directly into the second instance of
    %   % \refalgo{Streaming-Greedy}.
    % }
    %   \labelfigure{iterated-streaming-greedy-2-passes}
    %   \vspace{-3em}
    % \end{wrapfigure}
  \end{minipage}
\end{figure}

Gupta \etal gave a framework that takes an offline algorithm for
maximizing a monotone submodular functions and, by running the
algorithm as a black box multiple times over different groundsets,
produces an algorithm for the nonnegative case \cite{grst-10}. Here we
adapt the framework to the streaming setting, employing
\refalgo{Streaming-Greedy} as the blackbox for the monotone case.

% \begin{figure}[t]
%   \begin{framed}
%     \begin{pseudocode}
%       \begin{routine}{Iterated-Streaming-Greedy}{$\alpha$,$\beta$,$\groundset$}
%         \commentcode{run \refalgo{Streaming-Greedy} over $\groundset$}\\
%         $(S_1,U_1) \gets
%         \refalgo{Streaming-Greedy}{$\alpha$,$\beta$,$\groundset$}$
%         \\
%         \commentcode{let $U_1$ denote the set $U$ in
%         \refsection{streaming-greedy-notation}.}
%         \\
%         $S_2 \gets
%         \refalgo{Streaming-Greedy}{$0$,$\beta$,$\groundset
%         \setminus U_1$}$\\
%         $S_3 \gets \refalgo{Offline}{$U_1$}$\\
%         return $\argmax_{\bestset \in \setof{S_1,S_2,S_3}}
%         f(\bestset)$
%       \end{routine}
%     \end{pseudocode}
%   \end{framed}
%   \vspace{-2em}
%   \caption{\refalgo{Iterated-Streaming-Greedy}{$\alpha$,$\beta$}
%   described as an algorithm making two passes over
%   $\groundset$. \refalgo{Iterated-Streaming-Greedy} is streamlined
%   to a single pass in \reffigure{iterated-streaming-greedy-chart}.}
%   \labelfigure{iterated-streaming-greedy-2-passes}
%   \vspace{-2em}
% \end{figure}

% In \reffigure{iterated-streaming-greedy-2-passes},
We first present \refalgo{Iterated-Streaming-Greedy} as an algorithm
making two passes over $\groundset$. In the first pass we run
\refalgo{Streaming-Greedy}{$\alpha$,$\beta$} over $\groundset$ as
usual.  Let $S_1$ denote the set output, and $U_1$ the set of all
elements added to $S_1$ at any intermediate point of the algorithm, as
per \refsection{streaming-greedy}. In the second pass, we run
\refalgo{Streaming-Greedy}{$0$,$\beta$} over the set $\groundset
\setminus U_1$ of elements that were immediately rejected in the first
pass to produce another independent set $S_2$. Lastly, we run our
choice of offline algorithm over $U_1$ to produce a third independent
set $S_3 \in \independents$. At the end, return the best set among
$S_1$, $S_2$, and $S_3$.

If we pipeline the two instances of \refalgo{Streaming-Greedy}, then
\refalgo{Iterated-Streaming-Greedy} becomes a true streaming algorithm
with only one pass over $\groundset$. When the first instance rejects
an element $e$ outright, $e$ is sent downstream to the second instance
of \refalgo{Streaming-Greedy}. Note that the running time of
\refalgo{Offline} depends on the size of its input $U_1$, which by
\reflemma{size-of-tokens} is at most $\OPT/\alpha$.

\begin{lemma}
  Let $T \in \independents$ be any independent set. Then
  \begin{align*}
    \parof{2 \cdot \frac{(1+\beta)^2}{\beta} \cdot p +
      \frac{1}{\gamma_p}}
    \evof{f(\bestset)} \geq f(T) - k \alpha
  \end{align*}
  Furthermore, if \refalgo{Offline} is a deterministic algorithm, then
  \refalgo{Iterated-Streaming-Greedy} is deterministic and the above
  holds without taking expectations.
\end{lemma}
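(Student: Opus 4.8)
The plan is to adapt the iterated-greedy analysis of Gupta \etal \cite{grst-10}, treating \refalgo{Streaming-Greedy} as the monotone black box whose guarantees were already proved in \refsection{streaming-greedy}. Fix $T \in \independents$, and recall the three feasible sets built by \refalgo{Iterated-Streaming-Greedy}: $S_1$ together with its set $U_1$ of ever-taken elements from the first run of \refalgo{Streaming-Greedy} (parameters $\alpha,\beta$) on $\groundset$; the set $S_2$, with taken set $U_2$, from the second run (parameters $0,\beta$) on $\groundset \setminus U_1$; and $S_3$, the output of \refalgo{Offline} run on $U_1$.

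First I would record three inequalities. (i) Since $T \in \independents$, the bound $f(T \cup \takens) \le k\alpha + \frac{(1+\beta)^2}{\beta}\, p\, f(\finalset)$ from \refsection{takens-and-competition-vs-takens}, applied to the first run, gives $f(T \cup U_1) \le k\alpha + \frac{(1+\beta)^2}{\beta}\, p\, f(S_1)$. (ii) The second run is \refalgo{Streaming-Greedy} on the $p$-matchoid restricted to $\groundset \setminus U_1$ (which is itself a $p$-matchoid), for which $T \setminus U_1 \subseteq T$ is feasible by downward-closedness; the same bound with additive threshold $0$ yields $f((T \setminus U_1) \cup U_2) \le \frac{(1+\beta)^2}{\beta}\, p\, f(S_2)$. (iii) The set $T \cap U_1$ is a feasible subset of $U_1$, so the approximation guarantee of \refalgo{Offline} gives $\evof{f(S_3)} \ge \gamma_p\, f(T \cap U_1)$, i.e. $f(T \cap U_1) \le \frac{1}{\gamma_p}\evof{f(S_3)}$.

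The heart of the argument — and the step I expect to need the most care, precisely because $f$ is non-monotone — is to reassemble $f(T)$ from these three pieces; note we cannot simply write $f(T) \le f(T \cup U_1)$ or $f(T \setminus U_1) \le f((T\setminus U_1)\cup U_2)$. Instead I would invoke two consequences of submodularity together with $f \ge 0$. Splitting $T$ into the disjoint parts $T \cap U_1$ and $T \setminus U_1$ gives $f(T) \le f(T \cap U_1) + f(T \setminus U_1)$. Next, because $U_1 \cap U_2 = \emptyset$ (the second run only ever sees $\groundset \setminus U_1$), the sets $(T\setminus U_1) \cup U_1 = T \cup U_1$ and $(T\setminus U_1)\cup U_2$ have intersection exactly $T \setminus U_1$, so submodularity plus nonnegativity of $f$ on their union give $f(T \setminus U_1) \le f(T \cup U_1) + f((T\setminus U_1)\cup U_2)$. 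Combining, $f(T) \le f(T\cap U_1) + f(T\cup U_1) + f((T\setminus U_1)\cup U_2)$; this is exactly the inequality that lets the extra mass in $U_1$ inflating $f(T \cup U_1)$ be paid for by the offline run on $U_1$.

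Finally, substituting (i), (ii), (iii) into this chain gives $f(T) \le k\alpha + \frac{(1+\beta)^2}{\beta}\,p\,f(S_1) + \frac{(1+\beta)^2}{\beta}\,p\,f(S_2) + \frac{1}{\gamma_p}\evof{f(S_3)}$. Since $f(S_1), f(S_2), f(S_3) \le f(\bestset)$ pointwise and \refalgo{Iterated-Streaming-Greedy} returns $\bestset$, moving $k\alpha$ to the left side yields $\parof{2\cdot\frac{(1+\beta)^2}{\beta}\, p + \frac{1}{\gamma_p}}\evof{f(\bestset)} \ge f(T) - k\alpha$. The only randomness enters through a possibly-randomized \refalgo{Offline}; if \refalgo{Offline} is deterministic, then $S_1, U_1, S_2, U_2$, and hence $S_3$, are deterministic functions of the stream, inequality (iii) holds without expectation, and the whole bound holds verbatim without expectations.
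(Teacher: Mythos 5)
Your proof is correct and follows essentially the same route as the paper: the two submodularity inequalities you use (splitting $f(T)$ into $f(T\cap U_1)+f(T\setminus U_1)$, and bounding $f(T\setminus U_1)$ by $f(T\cup U_1)+f((T\setminus U_1)\cup U_2)$ after checking the intersection is $T\setminus U_1$) are exactly the paper's equations (ig1)--(ig2), and the three black-box bounds and the final assembly are identical. Your write-up is in fact a bit more careful than the paper's about why the second run's guarantee applies to the restricted ground set and about the nonnegativity terms being discarded.
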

\begin{proof}
  By submodularity, we have
  \begin{align}
    f(U_1 \cup T) + f(U_2 \cup (T \setminus U_1)) %
    \geq %
    f(U_1 \cup S_U) + f(T \setminus U_1) %
    &\labelequation{ig1}
  \end{align}
  and
  \begin{align}
    f(T \setminus U_1) + f(U_1 \cap T) \geq f(T) + f(\emptyset).
    \labelequation{ig2}
  \end{align}
  By nonnegative of $f$ and \refequation[equations]{ig1}
  and\refequation[]{ig2}, we have,
  \begin{align*}
    f(T) %
    &\leq %
    f(T) + f(\emptyset) + f(U_1 \cup S_2) %
    \leq %
    f(U_1 \cup T) + f(U_2 \cup (T \setminus U_1)) + f(U_1 \cap T). %
  \end{align*}
  By \refcorollary{monotone-bounds}, we have $f(U_1 \cup T) \leq
  \frac{(\beta + 1)^2}{\beta} \cdot p \cdot f(S_1) + k \alpha$ and
  $f(U_2 \cup (T \setminus U_1)) \leq \frac{(\beta + 1)^2}{\beta}
  \cdot p \cdot f(S_2)$, and $f(U_1 \cap T) \leq \frac{1}{\gamma_p}
  \cdot f(S_3)$ by assumption. Plugging into the above, and noting
  that $f(S_1),f(S_2),f(S_3) \leq f(\bestset)$ gives the bounds we
  seek.
\end{proof}
\begin{corollary}
  Let $\epsilon > 0$ be given. If $\alpha \leq \epsilon \OPT / k$,
  then
  \begin{align*}
    \parof{2\frac{(1+\beta)^2}{\beta} \cdot p + \frac{1}{\gamma_p}}
    \evof{f(\bestset)} %
    \geq%
    (1-\epsilon) \OPT,
  \end{align*}
  and the inequality holds without taking expectations if
  \refalgo{Offline} is a deterministic algorithm.
\end{corollary}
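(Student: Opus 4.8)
The plan is to obtain the corollary as an immediate specialization of the preceding lemma, taking the independent set $T$ there to be an optimal solution. First I would fix a feasible set $\optset \in \independents$ with $f(\optset) = \OPT = \maxof{f(S) \where S \in \independents}$; such a set exists because $\independents$ is a nonempty finite family, and since $\optset$ is feasible it is an admissible choice for the set $T$ in the preceding lemma.

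Instantiating that lemma with $T = \optset$ gives
\[
  \parof{2 \cdot \frac{(1+\beta)^2}{\beta} \cdot p + \frac{1}{\gamma_p}} \evof{f(\bestset)} \;\geq\; f(\optset) - k\alpha \;=\; \OPT - k\alpha .
\]
The hypothesis $\alpha \leq \epsilon \OPT / k$ is equivalent to $k\alpha \leq \epsilon \OPT$, hence $\OPT - k\alpha \geq (1-\epsilon)\OPT$; substituting this lower bound on the right-hand side yields the claimed inequality for $\evof{f(\bestset)}$.

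For the deterministic statement I would note that when \refalgo{Offline} is deterministic the set $S_3$ it outputs is a deterministic function of $U_1$, so $\bestset$ is determined by the (adversarially fixed) stream and the preceding lemma holds verbatim with no expectation operator; the same two substitutions then give $\parof{2 \cdot \frac{(1+\beta)^2}{\beta} \cdot p + \frac{1}{\gamma_p}} f(\bestset) \geq (1-\epsilon)\OPT$. I do not expect any real obstacle: all of the combinatorial and submodularity work has already been done in the preceding lemma (and ultimately in \refcorollary{monotone-bounds} and the exchange lemma), so what remains is only the choice $T = \optset$ together with the bookkeeping $k\alpha \leq \epsilon\OPT$. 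The single point worth a moment's care is confirming that $\OPT$ in this section denotes $\max_{S \in \independents} f(S)$ as set up in \refsection{preliminaries}, so that $\optset$ is genuinely feasible for the $p$-matchoid and hence a legal argument to the lemma.
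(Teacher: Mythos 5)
Your proposal is correct and is exactly the argument the paper intends (the corollary is stated without proof precisely because it is the lemma instantiated at $T = \optset$ combined with $k\alpha \leq \epsilon\OPT$, and the deterministic case carries over verbatim as you note). Nothing further is needed.
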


The appropriate value of $\alpha$ is guessed efficiently exactly as
described in \refsection{estimating-threshold}. Here, if
$\sizeof{U_1}$ grows too large in an instance of
\refalgo{Iterated-Streaming-Greedy}{$\alpha$,$\beta$} for some fixed
$\alpha$, then $\alpha$ must be too small and we can terminate the
instance immediately.

\begin{theorem}
  Let $\defpmatchoid$ be a $p$-matchoid of rank $k$, let $f:
  \powerset{\groundset} \to \nnreals$ be a nonnegative submodular
  function over $\groundset$, let $\epsilon > 0$ be fixed. Suppose
  there exists an offline algorithm for finding the largest value
  independent set in $p$-matchoid with approximation ratio $\gamma_p$.
  Then there exists a streaming algorithm using total space
  \begin{math}
    O\prac{k \log k}{\epsilon}
  \end{math} %
  that, given a stream of $\groundset$, returns a set $\bestset \in
  \independents$ such that
  \begin{align*}
    \parof{8p + \frac{1}{\gamma_p}} \evof{f(\bestset)} \geq
    (1-\epsilon) \OPT.
  \end{align*}
  If the offline algorithm is deterministic, then the claimed
  algorithm is deterministic and the above bound holds without
  expectation.
\end{theorem}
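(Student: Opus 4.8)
The coefficient $2(1+\beta)^2 p/\beta$ appearing in the corollary that precedes the theorem is minimized at $\beta = 1$, where it equals $8p$; so I would instantiate \refalgo{Iterated-Streaming-Greedy} with $\beta = 1$. With that choice the corollary already yields the guarantee as soon as $\alpha$ is set to at most $\epsilon\OPT/k$: the best of $S_1,S_2,S_3$ then satisfies $(8p + 1/\gamma_p)\evof{f(\bestset)} \ge (1-\epsilon)\OPT$, and without expectation when \refalgo{Offline} is deterministic. The returned set is independent because $S_1,S_2 \in \independents$ (\refalgo{Streaming-Greedy} keeps $S$ independent at all times) and $S_3 \in \independents$ by the guarantee on \refalgo{Offline}. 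Hence the two things left to do are: (i) choose $\alpha$ of the right magnitude in a single pass even though $\OPT$ is unknown, and (ii) bound the space.

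For (i) I would run in parallel one copy of \refalgo{Iterated-Streaming-Greedy}{$\alpha$,$1$} for every $\alpha$ that is a power of two in the window $[\epsilon f(z_i)/(2k),\; f(z_i)]$, where $z_i$ is the maximum-value singleton among the first $i$ stream elements, spawning new copies and discarding stale ones as $f(z_i)$ increases, exactly as in \refsection{estimating-threshold}. Since $f(z) \le \OPT \le k\,f(z)$ for the final $z$, the final window contains a power of two $\alpha^*$ with $\epsilon\OPT/(2k) \le \alpha^* \le \epsilon\OPT/k$. Moreover $f(z_i) \le \OPT \le 2k\alpha^*/\epsilon$ for all $i$, so once the $\alpha^*$ copy is spawned it is never discarded, and because the upper endpoint of the window is $f(z_i)$, the argument of \refsection{estimating-threshold} --- an element $e_j$ with $j < i$ and $f(\{e_j\}) \ge \alpha$ would force $\alpha \le f(z_{i-1})$, a contradiction --- shows that the first copy of \refalgo{Streaming-Greedy} inside the $\alpha^*$ copy behaves as it would on the whole stream. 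Returning the overall best $\bestset$ and invoking the corollary for the $\alpha^*$ copy gives $(8p + 1/\gamma_p)\evof{f(\bestset)} \ge (1-\epsilon)\OPT$.

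For (ii) I would bound the space of each copy separately. A copy pipelines two instances of \refalgo{Streaming-Greedy}, each storing a current independent set of size at most $k$, and additionally buffers $U_1$ so that \refalgo{Offline} can be run over it when the stream ends. By \reflemma{size-of-tokens}, $|U_1| \le \OPT/\alpha$, which for $\alpha \ge \epsilon\OPT/(2k)$ is $O(k/\epsilon)$; so I would cap $|U_1|$ at $\Theta(k/\epsilon)$ in every copy and abort a copy the moment its $U_1$ would exceed the cap, which is harmless because the $\alpha^*$ copy never does. Each surviving copy then uses $O(k/\epsilon)$ space, and with $O(\log k)$ copies the total is $O(k\log k/\epsilon)$. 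When \refalgo{Offline} is deterministic the whole construction is deterministic and the bound holds pointwise.

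The step I expect to be the main obstacle is verifying that this single-pass emulation is truly faithful. The ``no skipped element'' argument is clean only for the \emph{first} copy of \refalgo{Streaming-Greedy}, whose threshold $\alpha$ dominates $f(\{e_j\})$ for every $e_j$ that appears before $\alpha$ is created; but the \emph{second} copy in the pipeline runs with threshold $0$ on $\groundset \setminus U_1$, and one must further argue that the early, small-singleton-value elements are immaterial to the guarantee of that second copy (or re-derive the analysis of \refsection{iterated-streaming-greedy} with $T \setminus U_1$ replaced by its portion actually seen by the second copy). Granting this, the remaining pieces --- the bounds of \refsection{iterated-streaming-greedy}, \reflemma{size-of-tokens}, and the estimation device of \refsection{estimating-threshold} --- combine mechanically, and the aborting of under-provisioned copies is justified simply by the fact that the certifying copy is never aborted.
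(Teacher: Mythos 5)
Your proposal matches the paper's (largely implicit) proof: setting $\beta=1$ in the corollary gives the $8p$ coefficient, the threshold $\alpha$ is guessed by the lazy doubling scheme of \refsection{estimating-threshold}, and copies whose set $U_1$ exceeds the $O(k/\epsilon)$ cap implied by \reflemma{size-of-tokens} are aborted, yielding $O(k \log k/\epsilon)$ space over $O(\log k)$ parallel instances. The obstacle you flag for the threshold-$0$ second instance is real and the paper is silent on it, but it is harmless: each element of $T$ that arrives before the certifying copy is spawned has singleton value below $\alpha^*$, hence by submodularity contributes at most $\alpha^* \le \epsilon\,\OPT/k$ to $f_{U_2}(T \setminus U_1)$, so the at most $k$ such elements cost an extra $k\alpha^* \le \epsilon\,\OPT$, which is absorbed by a constant rescaling of $\epsilon$.
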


% The proof, which combines techniques from Gupta \etal \cite{grst-10}
% with \refcorollary{monotone-bounds}, is given in
% \refsection{iterated-streaming-greedy}.

\bibliographystyle{alpha}%
\bibliography{submodular_maximization_in_streams}

\appendix

\section{Exchange lemmas for matroids}

\begin{lemma}
  \labellemma{transitivity-span}
  Let $\defmatroid$ be a matroid, let $S,T \subseteq \groundset$ be
  two subsets, and let $x,y\in \groundset$ be two elements.  If $S$
  spans $x$, and $T + x$ spans $y$, then $S \cup T$ spans $y$.
\end{lemma}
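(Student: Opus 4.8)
The plan is to reduce everything to the basic closure/rank axioms of a matroid, for which \reflemma{transitivity-span} is essentially the transitivity (idempotence) statement. Recall that ``$A$ spans $x$'' means $r(A+x) = r(A)$, where $r$ is the rank function, equivalently that $x$ lies in the closure $\mathrm{cl}(A) = \setof{e \suchthat r(A+e)=r(A)}$. The whole argument rests on one auxiliary fact: if $A$ spans $x$ and $A \subseteq B$, then $B$ spans $x$.

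First I would prove that auxiliary monotonicity fact. If $x \in B$ it is immediate, so assume $x \notin B$. Apply submodularity of the rank function to the pair $A+x$ and $B$: since $(A+x)\cup B = B+x$ and $(A+x)\cap B = A$, we get $r(B+x) + r(A) \le r(A+x) + r(B)$, and using $r(A+x)=r(A)$ this gives $r(B+x) \le r(B)$; monotonicity of $r$ supplies the reverse inequality, so $r(B+x)=r(B)$, i.e.\ $B$ spans $x$.

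Given this, the lemma follows in three short steps. (i) Since $S$ spans $x$ and $S \subseteq S\cup T$, the set $S\cup T$ spans $x$: $r(S\cup T + x) = r(S\cup T)$. (ii) Since $T+x$ spans $y$ and $T+x \subseteq S\cup T + x$, the set $S\cup T + x$ spans $y$: $r(S\cup T + x + y) = r(S\cup T + x)$. Combining (i) and (ii) gives $r(S\cup T + x + y) = r(S\cup T)$. (iii) Since $S\cup T \subseteq S\cup T + y \subseteq S\cup T + x + y$, monotonicity of $r$ forces $r(S\cup T + y) = r(S\cup T)$, which is exactly the assertion that $S\cup T$ spans $y$. (A slicker variant, if one prefers the closure language: $x \in \mathrm{cl}(S) \subseteq \mathrm{cl}(S\cup T)$ and $T \subseteq \mathrm{cl}(S\cup T)$, so $T+x \subseteq \mathrm{cl}(S\cup T)$, hence $y \in \mathrm{cl}(T+x) \subseteq \mathrm{cl}(\mathrm{cl}(S\cup T)) = \mathrm{cl}(S\cup T)$.)

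I do not expect any deep obstacle here; the only thing to be careful about is bookkeeping in the degenerate cases ($x \in S$, $x \in T$, or $y \in S\cup T$), where some hypotheses or conclusions are vacuous, and in applying submodularity to the correct pair of sets so that the intersection collapses to $A$. Everything else is routine manipulation of rank inequalities.
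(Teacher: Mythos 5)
Your proof is correct, but it takes a different route from the paper's. You work entirely with the rank function and the closure operator: you first derive the monotonicity of spanning ($A$ spans $x$ and $A \subseteq B$ implies $B$ spans $x$) from submodularity of the rank function, and then chain rank equalities $r(S \cup T + x + y) = r(S \cup T + x) = r(S \cup T)$ and squeeze to conclude $r(S \cup T + y) = r(S \cup T)$. The paper instead argues via bases: it reduces to $S, T$ independent, extends $S$ to a base $B$ of $S \cup T$, observes that $B$ spans $x$ and hence remains a base of $(S \cup T) + x$, and that this latter set spans $y$ (since it contains $T + x$), so $B$ --- and therefore $S \cup T$ --- spans $y$. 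Both arguments silently or explicitly rest on the same monotonicity fact; your version makes it explicit and proves it from the rank axioms, which makes the proof more self-contained and avoids invoking base extension, while the paper's version is shorter at the cost of leaving several standard steps implicit. Your closure-language variant ($y \in \mathrm{cl}(T+x) \subseteq \mathrm{cl}(\mathrm{cl}(S \cup T)) = \mathrm{cl}(S \cup T)$) is the cleanest formulation of all, being exactly idempotence plus monotonicity of the closure operator. Your handling of the degenerate cases ($x \in B$ in the auxiliary fact, etc.) is also fine.
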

\begin{proof}
  It suffices to assume that $S$ and $T$ are independent sets.

  Extend $S$ to a base $\baseA$ in $S \cup T$. Since $\baseA$ extends
  $S$, $\baseA$ spans $x$, and $\baseA$ is a base in $(S \cup T) +
  x$. Since $(S \cup T) + x$ spans $y$, $\baseA$ spans $y$.
\end{proof}

Let $\graph$ be a directed graph. For $v \in \vertices(\graph)$, let
$\outneighborhoodof{v} = \setof{w \suchthat \arc{v}{w} \in
  \edges\parof{\graph}}$ denote the set of outgoing neighbors of $v$,
and $\inneighborhoodof{v} = \setof{u \suchthat \arc{u}{v} \in
  \edges(\graph)}$ the set of incoming neighbors of $v$.

The following lemma is implicit in Badanidiyuru \cite{abv-11}.
\begin{lemma}%
  \labellemma{graphical-span-matching}
  Let $\defmatroid$ be a matroid, and $\graph$ a directed acyclic
  graph over $\groundset$ such that for every non-sink vertex $e \in
  \groundset$, the outgoing neighbors $\outneighborhoodof{e}$ of $e$
  {span} $e$. Let $\defindset$ be an independent set such that no path
  in $\graph$ goes from one element in $\indset$ to another. Then
  there exists an injection from $\indset$ to sink vertices in
  $\graph$ such that each $e \in \indset$ maps into an element
  reachable from $e$.
\end{lemma}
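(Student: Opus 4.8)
The plan is to derive the injection from Hall's marriage theorem, using the matroid to verify the marriage condition. Form the bipartite graph $H$ whose left vertices are the elements of $\indset$, whose right vertices are the sinks of $\graph$, and which has an edge from $e \in \indset$ to a sink $s$ exactly when $s$ is reachable from $e$ in $\graph$ (each vertex being reachable from itself). A system of distinct representatives in $H$ is precisely an injection of the kind we want, so it suffices to check Hall's condition: for every $J \subseteq \indset$, the neighborhood $N_H(J)$ --- i.e.\ the set $\Sigma(J)$ of sinks reachable from some element of $J$ --- satisfies $\sizeof{N_H(J)} \geq \sizeof{J}$.

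The heart of the argument is the following purely matroidal claim, which does not mention $\indset$ at all: \emph{for any set of vertices $W$, the set $\Sigma(W)$ of sinks reachable from $W$ spans $W$ in $\matroid$}. I would prove this by induction on the vertices of $\graph$ taken in reverse topological order (sinks first). For a sink $v$ the claim is trivial since $v \in \Sigma(\setof{v})$. For a non-sink $v$, the hypothesis says $\outneighborhoodof{v}$ spans $v$; since $\graph$ is acyclic, each $w \in \outneighborhoodof{v}$ precedes $v$ in the induction and every sink reachable from $w$ is reachable from $v$, so by the inductive hypothesis $\Sigma(\setof{v})$ spans each $w \in \outneighborhoodof{v}$, hence spans the whole set $\outneighborhoodof{v}$, and then, by transitivity of spanning (\reflemma{transitivity-span}, applied one out-neighbor at a time), $\Sigma(\setof{v})$ spans $v$. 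The case of general $W$ follows since $\Sigma(W) = \bigcup_{v \in W} \Sigma(\setof{v})$. Note that finiteness and acyclicity are exactly what make $\Sigma(\setof{v})$ nonempty for every $v$ --- a maximal directed walk out of $v$ must end at a sink --- so the neighborhoods in $H$ are nonempty.

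Granting the claim, Hall's condition is immediate: for $J \subseteq \indset$ the set $N_H(J) = \Sigma(J)$ spans $J$, so the rank of $N_H(J)$ is at least the rank of $J$, which equals $\sizeof{J}$ because $J$, as a subset of the independent set $\indset$, is itself independent; hence $\sizeof{N_H(J)} \geq \sizeof{J}$. Applying Hall's theorem yields the desired injection $\varphi$, with $\varphi(e)$ a sink reachable from $e$, and injectivity is built into the notion of a system of distinct representatives. The step I expect to be the main obstacle is the key claim: one has to run the induction in the right (reverse topological) order, pass carefully from ``$\Sigma(\setof{v})$ spans each out-neighbor of $v$'' to ``$\Sigma(\setof{v})$ spans $v$'' using only the single-element transitivity lemma available, and confirm that every vertex reaches a sink so that Hall's condition is even meaningful; once that is in hand, the remainder is routine matroid-rank bookkeeping plus a direct appeal to Hall's theorem.
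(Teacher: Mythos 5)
Your proof is correct, but it takes a genuinely different route from the paper's. The paper inducts on the number of \emph{internal} (non-source, non-sink) vertices of $\graph$: it removes one internal vertex at a time, rewiring the paths through it into direct arcs and invoking \reflemma{transitivity-span} to check that out-neighborhoods still span their vertices, until the graph is bipartite, at which point Hall's theorem is applied to the out-neighborhoods themselves. You instead prove the global claim that the set of sinks reachable from any vertex spans that vertex (by induction in reverse topological order, again powered by repeated application of \reflemma{transitivity-span}), and then apply Hall's theorem once to the reachability bipartite graph between $\indset$ and the sinks. Both arguments rest on the same two ingredients --- transitivity of spanning, and the rank chain $\sizeof{N(J)} \geq r(N(J)) \geq r(J) = \sizeof{J}$ feeding Hall's condition --- but your version avoids the graph-surgery bookkeeping and is, if anything, cleaner. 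It is also worth noting that your argument never uses the hypothesis that no path of $\graph$ joins two elements of $\indset$: your spanning claim holds for arbitrary vertex sets, and the independence of $J$ alone yields Hall's condition, so you have in effect proved a slightly stronger statement. (The paper uses that hypothesis only to normalize the elements of $\indset$ to sources before running its induction.)
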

\begin{proof}
  Restricting our attention to elements reachable from $\indset$ in
  $\graph$, let us assume that the elements of $\indset$ are sources
  (i.e., have no incoming neighbors) in $\graph$. Let us call an
  element $e \in \groundset$ an ``internal'' element if its is neither
  a sink nor a source in $\graph$.

  We prove by induction on the number of internal vertices reachable
  from $\indset$. In the base case, the outgoing neighbors of each $i
  \in \indset$ are all sinks, and $\graph$ is bipartite. For any
  subset $\indsetB \subseteq \indsetA$, $\outneighborhoodof{\indsetB}
  = \bigcup_{i \in \indsetB} \outneighborhoodof{i}$ spans $\indsetB$.
  $\indsetB$ is independent, so we have $\sizeof{\indsetB} \leq
  \sizeof{\outneighborhoodof{\indsetB}}$. Thus, by Hall's matching
  theorem, there exists an injection $\indset \into
  \outneighborhoodof{\indset}$ such that each $i \in \indset$ maps
  into $\outneighborhoodof{i}$.

  In the general case, let $e \in \groundset \setminus \indset$ be an
  internal vertex in $\graph$. Consider the graph $\graphB$ removing
  $e$ and preserving all paths through $e$, defined by,
  \begin{align*}
    \vertices(\graphB) &= \vertices(\graph) - e = \groundset - e,\\
    \edges(\graphB) &= %
    \parof{%
      \edges (\graph) %
      \setminus %
      \setof{\arc{a}{b} \where a = e \text{ or } e = a}%
    } %
    \cup %
    \setof{\arc{a}{b} \where \arc{a}{e}, \arc{e}{b} \in
      \edges(\graph)}
  \end{align*}
  $\graphB$ has one less internal vertex than $\graph$, the same sink
  vertices as $\graphA$, and a vertex $a \in \vertices(\graphB)$ is
  reachable from $i \in \indset$ in $\graphB$ iff it is reachable from
  $i$ in $\graphA$. For any vertex $a \in
  \inneighborhoodof[\graphA]{e}$ that had an outgoing arc into $e$, we
  have
  \begin{align*}
    \outneighborhoodof[\graphB]{a} = (\outneighborhoodof[\graphA]{a} -
    e) \cup \outneighborhoodof[\graphA]{e},
  \end{align*}
  which spans $a$ by \reflemma{transitivity-span}. Since any other
  vertices has the same outgoing arcs, we conclude that
  $\outneighborhoodof[\graphB]{a}$ spans $a$ for any non-sink vertex
  of $\graphB$.

  By induction, there exists an injection from $\indset$ into the
  sinks of $\graphB$ such that every $i \in \indset$ is mapped to a
  sink vertex reachable from $i$ in $\graphB$. By construction, these
  vertices are also reachable sinks in $\graphA$, as claimed.
\end{proof}

\end{document}